\definecolor{blu3}{rgb}{.1,.0,.4}
\newtheorem{theorem}{Theorem}
\newtheorem{corollary}[theorem]{Corollary}
\newtheorem{lemma}[theorem]{Lemma}
\newtheorem{claim}{Claim}[theorem]
\newcommand{\RR}{\ensuremath{\mathbb R}}  
\newcommand{\VV}{\ensuremath{\mathbb V}}  
\newcommand{\UU}{\ensuremath{\mathbb U}}  
\newcommand{\II}{\ensuremath{\mathbb I}}  
\DeclareMathOperator{\cell}{cell}
\DeclareMathOperator{\res}{res}
\DeclareMathOperator{\parent}{pa}
\def\DEF#1{\textbf{\emph{#1}}}
\def\eps{\varepsilon}
\def\TREEcopy{\textsf{Copy}\relax}
\def\TREEreport{\textsf{Report}\relax}
\def\TREEinsert{\textsf{Insert}\relax}
\def\TREEdelete{\textsf{Delete}\relax}
\def\TREEsucc{\textsf{Succ}\relax}
\def\TREEpred{\textsf{Pred}\relax}
\def\TREEsplit{\textsf{Split}\relax}
\def\TREEjoin{\textsf{Join}\relax}
\def\TREEshift{\textsf{Shift}\relax}
\def\TREEdiff{\textsf{diff-val}\relax}
\def\INTlength{\textsf{diff-len}\relax}
\def\INTcopy{\textsf{IntCopy}\relax}
\def\INTreport{\textsf{IntReport}\relax}
\def\INTinsert{\textsf{IntInsert}\relax}
\def\INTdelete{\textsf{IntDelete}\relax}
\def\INThitby{\textsf{IntHitBy}\relax}
\def\INTcontaining{\textsf{IntContaining}\relax}
\def\INTclip{\textsf{IntClip}\relax}
\def\INTjoin{\textsf{IntJoin}\relax}
\def\INTshift{\textsf{IntShift}\relax}
\def\INTextend{\textsf{IntExtend}\relax}
\begin{document}

\title{The Inverse Voronoi Problem in Graphs}

\author{{\'E}douard Bonnet\thanks{Univ Lyon, CNRS, ENS de Lyon, Universit\'e Claude Bernard Lyon 1, LIP UMR5668, France. Email address: \texttt{edouard.bonnet@ens-lyon.fr}. Supported by the LABEX MILYON (ANR-10- LABX-0070) of Universit\'e de Lyon, within the program "Investissements d'Avenir" (ANR-11-IDEX-0007) operated by the French National Research Agency (ANR).}
\and
	Sergio Cabello\thanks{Faculty of Mathematics and Physics, University of Ljubljana, 
			and IMFM, Slovenia.
            Supported by the Slovenian Research Agency, program P1-0297 and projects J1-8130, J1-8155, 	J1-9109, J1-1693, J1-2452.
			Email address: \texttt{sergio.cabello@fmf.uni-lj.si}.}
\and
	Bojan Mohar\thanks{Department of Mathematics, Simon Fraser University, Burnaby, BC, Canada. Email address: \texttt{mohar@sfu.ca}. On leave from IMFM \& FMF, Department of Mathematics, University of Ljubljana. Supported in part by the NSERC Discovery Grant R611450 (Canada), by the Canada Research Chairs program, and by the Research Projects J1-8130 and J1-2452 of ARRS (Slovenia).}
\and
	Hebert P\'erez-Ros\'es\thanks{Departament d'Enginyeria Inform\`atica i Matem\`atiques, Universitat Rovira i Virgili, Tarragona, Spain.
	Partially supported by Grant MTM2017-86767-R from the Spanish Ministry of Economy, Industry and Competitiveness.}
}

\maketitle

\begin{abstract}
	We introduce the inverse Voronoi diagram problem in graphs: given 
	a graph $G$ with positive edge-lengths and a collection $\mathbb{U}$ 
	of subsets of vertices of $V(G)$, decide whether $\mathbb{U}$ is a
	Voronoi diagram in $G$ with respect to the shortest-path metric.
	We show that the problem is NP-hard, even for planar graphs where
	all the edges have unit length.
	We also study the parameterized complexity of the problem and show
	that the problem is W[1]-hard when parameterized by the number of Voronoi
	cells or by the pathwidth of the graph.
	For trees we show that the problem can be solved in $O(N+n \log^2 n)$ time,
	where $n$ is the number of vertices in the tree 
	and $N=n+\sum_{U\in \mathbb{U}}|U|$ is the size of the description 
	of the input.
	We also provide a lower bound of $\Omega(n \log n)$ time for trees with $n$ vertices.		
	
    \medskip
    \textbf{Keywords:} distances in graphs, Voronoi diagram, inverse Voronoi problem,
		NP-complete, parameterized complexity, trees, 
		applications of binary search trees, dynamic programming in trees, lower bounds.
\end{abstract}


\section{Introduction}

Let $(X,d)$ be a metric space, where $d\colon X\times X \rightarrow \RR_{\ge 0}$.
Let $\Sigma$ be a subset of $X$. We refer to each element of $\Sigma$ as a \DEF{site},
to distinguish it from an arbitrary point of $X$.
The \DEF{Voronoi cell} of each site $s\in \Sigma$ is then defined by
\[
	\cell_{(X,d)}(s,\Sigma) ~=~ \{ x\in X\mid \forall s'\in \Sigma: d(s,x)\le d(s',x) \}.
\]
The \DEF{Voronoi diagram} of $\Sigma$ in $(X,d)$ is
\[
	\VV_{(X,d)}(\Sigma) ~=~ \{ \cell_{(X,d)}(s,\Sigma) \mid s\in \Sigma \}.
\]
It is easy to see that, for each set $\Sigma$ of sites,
each element of $X$ belongs to some Voronoi cell $\cell_{(X,d)}(s,\Sigma)$.
Therefore, the sets in $\VV_{(X,d)}(\Sigma)$ cover $X$. On the other hand, the
Voronoi cells do not need to be pairwise disjoint. In particular,
when some point $x\in X$ is closest to two sites, then it
is in both Voronoi cells.

In the \DEF{inverse Voronoi problem},
we are given a metric space $(X,d)$ and a sequence $X_1,\dots, X_k$
of subsets of $X$ that cover $X$.
The task it to decide whether $\{ X_1,\dots, X_k\}$
is a Voronoi diagram in $(X,d)$.
This means that we have to decide whether there exist some sites $s_1,\dots ,s_k$ such that, for each index $i$,
we have $X_i=\cell_{(X,d)}(s_i,\{s_1,\dots ,s_k\})$.

The inverse Voronoi problem is closely related to problems in classification and clustering.
In pattern recognition, a classic paradigm to classify is to use the nearest neighbor rule:
given a learning set $\mathcal{L}$ of objects that are already classified, 
each new object is classified into the same class as its closest object from $\mathcal{L}$.
To reduce the size of the learning set, Hart~\cite{Hart68} introduced 
the concept of \emph{consistent subsets}. 
A subset $\mathcal{L}'$ of the learning set $\mathcal{L}$
is a consistent subset if, for each object $\ell$ from $\mathcal{L}$, 
the object $\ell$ and its closest neighbor in $\mathcal{L}'$ are in the same class.
An equivalent, alternative perspective of this is given by Voronoi diagrams: 
in the Voronoi diagram of a consistent subset $\mathcal{L}'$, each
object $\ell$ of $\mathcal{L}$ belongs to a Voronoi cell 
defined by a site $s\in \mathcal{L}'$ if and only if $\ell$ and $s$ belong to the same class.
Ritter et al.~\cite{RitterWLI75} introduced the problem of finding consistent
subsets of minimum size. Surveying the research in this applied area is beyond the scope of our research.
We refer to Biniaz et al.~\cite{abs-1810-09232} and Gottlieb et al.~\cite{GottliebKN18} 
for some of the latest algorithmic results on this topic.
Considering each class as a Voronoi cell,
the inverse Voronoi problem is asking precisely whether there exists
a consistent subset with one element per class. Such consistent subset has of 
course to be of optimal size.

\paragraph{Graphic version.}
Let $G$ be an undirected graph with $n$ vertices and
abstract, positive edge-lengths $\lambda\colon E(G)\rightarrow \RR_{>0}$.
The length of a path in $G$ is the sum of the edge-lengths along the path.
We define the (shortest-path) \DEF{distance} between two vertices $x$ and $y$ of $G$, denoted
by $d_G(x,y)$, as the minimum length over all paths in $G$ from $x$ to $y$.

Since $(V(G),d_G)$ is a metric space, we can consider the concepts of Voronoi cells and
Voronoi diagrams for this space. We denote them by $\cell_G(s,\Sigma)$ and $\VV_G(\Sigma)$ respectively.
Moreover, when the graph is clear from the context, we remove the subscript and thus just talk about $\cell(s,\Sigma)$ and $\VV(\Sigma)$.

In this paper we consider computational aspects
of the inverse Voronoi problem when the metric space
is the shortest-path metric in a graph.
This means that we are given a collection of candidate Voronoi cells in a graph, 
and we would like to decide whether they form indeed a Voronoi diagram. 
Let us describe the problem more formally.

\begin{quote}
	\textsc{Graphic Inverse Voronoi}\\
	Input: $(G,\UU)$, where $G$ is a graph with positive edge-lengths 
        and $\UU=( U_1,\dots, U_k)$ is a sequence
		of subsets of vertices of $G$ that cover $V(G)$.\\
	Question: Are there sites $s_1,\dots,s_k\in V(G)$ such that $\cell_G(s_i,\{s_1,\dots ,s_k\})=U_i$ for each $i\in \{1,\dots, k\}$?
	When the answer is positive, provide a solution: 
	sites $s_1,\dots,s_k\in V(G)$ that certify the positive answer.
\end{quote}

See Figure~\ref{fig:disjoint} for an example.
As far as the existence of polynomial-time algorithms is concerned,
it is equivalent to consider a graph or a finite metric space.
Indeed, for each finite metric space we can build a graph that
encodes those distances by using a complete graph with edge-lengths, and,
inversely, given a graph, we can compute the matrix of distances between all pairs
of vertices in polynomial time.
However, considering special classes of graphs may be useful to get more efficient algorithms.

\begin{figure}
\centering
	\includegraphics[page=1]{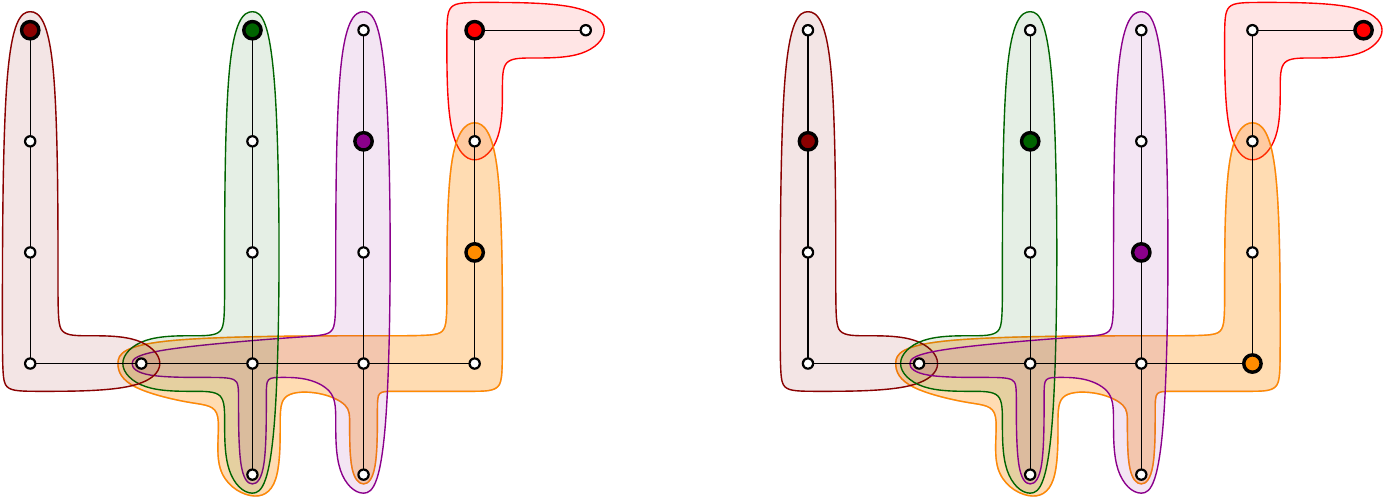}
	\caption{An instance with two solutions.
		The edges have unit length and the larger, filled dots represent the sites.}
	\label{fig:disjoint}
\end{figure}

\paragraph{Our results.}
First we show that the problem \textsc{Graphic Inverse Voronoi} is NP-hard even for planar
graphs where the candidate Voronoi cells are pairwise disjoint and each has at most $3$ vertices. 
The reduction is from a variant of \textsc{Planar $3$-SAT}. 
The bound on the number of vertices per cells is tight: 
when each candidate Voronoi cell has $2$ vertices, the problem can be solved using $2$-SAT.

Many graph decision and optimization problems admit fixed-parameter tractable (FPT) 
algorithms with respect to additional parameters that quantify how complex is the input; 
see for instance \cite{CyganFKLMPPS15}. 
Using the framework of parameterized complexity, we provide stronger lower bounds
when parameterized by the number $k$ of sites and the pathwidth $p(G)$ of $G$. 
More precisely, assuming the Exponential Time Hypothesis (ETH), 
we show that the problem cannot be solved in time $f(k)|V(G)|^{o(k/ \log k)}$
nor in time $f(p(G))|V(G)|^{o(p(G))}$, 
for any computable function $f$.
These hardness results hold for graphs where all the edges have unit length.

Then we consider efficient algorithms for the problem \textsc{Graphic Inverse Voronoi}
when the underlying graph is a \emph{tree}. We refer to this variant of the problem as
\textsc{Graphic Inverse Voronoi in Trees}.

One has to be careful with the size of the description of the input because the size
of the Voronoi diagram may be quadratic in the size of the tree. For example,
in a star with $2n$ leaves and sites in $n$ of the leaves, each Voronoi cell
has size $\Theta(n)$, and thus an explicit description of the Voronoi diagram has size $\Theta(n^2)$.
Motivated by this, we define the \DEF{description size} of
an instance $I=(T,(U_1,\dots,U_k ))$ for the \textsc{Graphic Inverse Voronoi in Trees}
to be $N=N(I)=|V(T)|+\sum_{i}|U_i|$. 
We use $n$ for the number of vertices in the tree $T$, which is potentially smaller
than $N$.

We show that the problem \textsc{Graphic Inverse Voronoi in Trees} 
can be solved in $O(N + n \log^2 n)$ time for arbitrary trees. 
We also show a lower bound of $\Omega(n\log n)$
in the algebraic computation tree model.

One may be tempted to think that the problem is easy for trees.
Our near-linear time algorithm for arbitrary trees is far from trivial. Of course we cannot exclude
the existence of a simpler algorithm running in near-linear time, but we do think that the
problem is more complex than it may seem at first glance. Figure~\ref{fig:disjoint}
may help understanding that the interaction between different Voronoi cells may
be more complex than it seems, even for trees.

To obtain our algorithm for trees, we consider
the following more general problem, where the input also
specifies, for each Voronoi cell, a subset of vertices where the site has to be placed.
\begin{quote}
	\textsc{Generalized Graphic Inverse Voronoi in Trees}\\
	Input: $(T,\UU)$, where $T$ is a tree with positive edge-lengths
		and $\UU=\bigl( (U_1,S_1),\dots, (U_k,S_k)\bigr)$
		is a sequence of pairs of subsets of vertices of $G$ such that
		$U_1,\dots,U_k$ cover $V(G)$. \\
	Question: are there sites $s_1,\dots, s_k\in V(T)$ such that $s_i\in S_i$ and
		$U_i=\cell_T(s_i,\{ s_1,\dots,s_k\})$ for each $i\in \{1,\dots, k\}$?
		When the answer is positive, provide a solution: 
		sites $s_1,\dots,s_k\in V(T)$ that certify the positive answer.
\end{quote}
Following the analogy with \textsc{Graphic Inverse Voronoi in Trees},
we define the \DEF{description size} of an instance $I=(T,( (U_1,S_1),\dots, (U_k,S_k)))$
to be $N(I)=|V(T)|+\sum_i |U_i|+\sum_i |S_i|$.

Clearly, the problem \textsc{Graphic Inverse Voronoi in Trees} can be reduced to the problem
\textsc{Generalized Graphic Inverse Voronoi in Trees} by taking $S_i=U_i$ for all $i\in \{1,\dots, k\}$.
This transformation can be done in linear time (in the size of the instance).
Thus, for the algorithm it suffices to provide an algorithm for
the problem \textsc{Generalized Graphic Inverse Voronoi in Trees}. 
(The lower bound holds for the original problem.)

In our solution we first make a reduction to the same problem in which Voronoi cells are disjoint,
and then we make another transformation to an instance having maximum degree $3$.
Finally, we employ a bottom-up dynamic programming procedure that,
to achieve near-linear time, merges the information from the subproblems 
in time almost proportional to the smallest of the subproblems. For this, we
employ dynamic binary search trees to manipulate sets of intervals.

\paragraph{Related work.}
Voronoi diagrams on graphs were first investigated by Erwig~\cite{Erwig00}, who showed that they can be efficiently computed. Subsequently, graph Voronoi diagrams have been used in a variety of applications. For instance, Okabe and Sugihara~\cite{OkabeS12} describe several applications of graph Voronoi diagrams. More recent applications, many of them for planar graphs, can be found in~\cite{Cabello19,CdV10,GawrychowskiKMS18,GawrychowskiMWW18,KannanMZ18,MarxP15}.
Voronoi diagrams in graphs have also been considered in the context of the so-called Voronoi game~\cite{BandyapadhyayBDS15,GerbnerMPPR14} and in the context of topological data analysis~\cite{DeyFW15}.

On the other hand, the inverse Voronoi problem in the traditional, Euclidean setting has been studied since the mid 1980s, starting with the seminal paper by Ash and Bolker~\cite{Ash85}. We are not aware of any previous work considering the graphic inverse Voronoi problem.

\paragraph{Roadmap.}
In Section~\ref{sec:basic} we provide notation and some basic tools.
In Section~\ref{sec:SAT} we show NP-hardness of the problem 
\textsc{Graphic Inverse Voronoi}, while in 
Sections~\ref{sec:numberofcells} and~\ref{sec:treewidth} we consider the parameterized versions of the problem.
Then we move to the particular case of trees.
More precisely, in Section~\ref{sec:transform} we show how to reduce the 
problem \textsc{Generalized Graphic Inverse Voronoi in Trees} to a special
instance where the candidate Voronoi cells are disjoint and the tree has maximum
degree $3$.
In Section~\ref{sec:algorithmDP} we describe how to solve the problem 
\textsc{Generalized Graphic Inverse Voronoi in Trees}, 
after the transformation, using dynamic programming.
In Section~\ref{sec:lowerbound} we provide a lower bound for the problem 
\textsc{Graphic Inverse Voronoi in Trees}.
We conclude stating a few open problems.

\section{Basics}
\label{sec:basic}

\paragraph{Sets and Graphs.}
For a positive integer $k$ we use the notation $[k]=\{ 1,\dots, k\}$.

We use the standard graph-theoretic definitions and notations that can be found in Diestel's book \cite{Diestel12}.
In particular, we denote by $V(G)$, respectively by $E(G)$, the vertex-set, respectively the edge-set, of a graph $G$.
If $G$ is a graph and $S \subseteq V(G)$, we denote by $G[S]$ the subgraph induced by $S$, and $G - S$ is a short-hand for $G[V(G) \setminus S]$. 
A graph is said \emph{planar} if its vertices can be drawn as distinct points of the real plane, 
its edges can be drawn as simple curves (in the plane) connecting the points that represent its vertices,
and the interior of the curves representing the edges are pairwise non-crossing.

A \emph{vertex-separator} in a graph $G$ is a subset of its vertices $S \subseteq V(G)$ such that $G - S$ is a disconnected non-empty graph.
A vertex-separator $S$ is said \emph{balanced} if all the connected components of $G-S$ have size at most $2|V(G)|/3$.
Up to constant factors, we could define the notion of treewidth by means of repeated balanced vertex-separators.
We choose not to do so, in order to follow the usual definition and to also introduce pathwidth.
We will need the notion of balanced vertex-separators in Theorem~\ref{thm:PGIV-XPtw} and treewidth/pathwidth in Section~\ref{sec:treewidth}.

\paragraph{Treewidth and pathwidth.}
A \emph{tree-decomposition} of a graph $G$ is a tree $T$ whose nodes are labeled by subsets of $V(G)$, called \emph{bags}, such that for each vertex $v \in V(G)$, the bags containing $v$ induce a subtree of $T$, and for each edge $e \in E(G)$, there is at least one bag containing both endpoints of $e$.
The width of a tree-decomposition is the size of its largest bag minus one.
The treewidth of a graph $G$ is the minimum width of a tree-decomposition of $G$.
The point of the ``minus one'' in the definition is that the tree-width of trees is 1, and not 2.
The pathwidth is the same as treewidth except the tree $T$ is now required to be a path, and hence is called a path-decomposition.
In particular pathwidth is always at least as large as treewidth.
We observe that if $G$ has treewidth $w$, then it admits in particular a balanced vertex-separator of size at most $w+1$.
Indeed, any non-leaf bag is a vertex-separator.
One can find one that is also balanced by starting from the root (any node) of the tree-decomposition labeled by say, $S$, and iteratively moving to a component of $G-S$ which is larger than $2/3$ of the whole graph.
When this process is no longer possible, we have our balanced vertex-separator.

In Section~\ref{sec:treewidth}, we will need to bound the pathwidth (and therefore the treewidth) of rather complicated graphs.
Writing down the full description of a tree-decomposition or of a path-decomposition may be a bit tedious.
Kirousis and Papadimitriou \cite{KirousisP85} showed the equality between the interval thickness number, which is equal to pathwidth plus one, and the \emph{node searching number}.
To give an upper bound to the pathwidth, we only need to prove that the number of cleaners required to win the following one-player game is bounded by a suitable function.
We imagine the edges of a graph to be contaminated by a gas.
We shall move around a team of cleaners, placed at the vertices, in order to clean all the edges.
A move consists of removing a cleaner from the graph, adding a cleaner at an unoccupied vertex, or displacing a cleaner from a vertex to any other vertex (not necessarily adjacent).
An edge is cleaned when both its endpoints are occupied by a cleaner.
After each move, all the cleaned edges admitting a free-of-cleaners path from one of its endpoints to the endpoint of a contaminated edge are however recontaminated.
The node searching number is the minimum number of cleaners required to win the game.

\paragraph{ETH and Sparsification Lemma.}
The Exponential-Time Hypothesis (ETH, for short) of Impagliazzo and Paturi~\cite{ImpagliazzoETH} asserts that there is no subexponential-time algorithm solving \textsc{3-SAT}.
More precisely, there is a positive real number $\delta > 0$ such that \textsc{3-SAT} cannot be solved in time $2^{\delta n}$ on $n$-variable instances.
Impagliazzo et al.~\cite{Impagliazzo01} present a subexponential-time Turing-reduction parameterized by a positive real number $\varepsilon > 0$ which, given a \textsc{3-SAT}-instance $\phi$ with $n$ variables and $m$ clauses, produces at most $2^{\varepsilon n}$ \textsc{3-SAT}-instances $\phi_1, \ldots, \phi_t$ such that $\phi \Leftrightarrow \bigvee_{i \in [t]} \phi_i$, each $\phi_i$ having no more than $n$ variables and $C_\varepsilon n$ clauses for some constant $C_\varepsilon$ (depending solely on $\varepsilon$, and \emph{not} on $n$ and $m$).
This important reduction is known as the Sparsification Lemma.
Due to the Sparsification Lemma, there exists a positive real number $\delta' > 0$ such that no algorithm can solve \textsc{3-SAT} in time $2^{\delta'(n+m)}$ on $n$-variable $m$-clause instances, assuming that the ETH holds.

\paragraph{Inverse Voronoi and compatibility.}
Consider an instance $(G,(U_1,\dots, U_k))$ to the \textsc{Graphic Inverse Voronoi}
and a candidate solution $s_1,\dots,s_k\in V(G)$.
We say that $s_i$ and $s_j$ ($i\neq j$) are \DEF{compatible} if we have 
$d(s_i,u)= d(s_j,u)$ for each $u\in U_i\cap U_j$,
$d(s_i,u)< d(s_j,u)$ for each $u\in U_i\setminus U_j$, and
$d(s_j,u)< d(s_i,u)$ for each $u\in U_j\setminus U_i$.
Consider a fixed index $i\in [k]$.
It is straightforward from the definition 
that $\cell_G(s_i,\{s_1,\dots ,s_k\})=U_i$ if and only if  
$s_i$ and $s_j$ are compatible for all $j\neq i$. (Here the relevant assumption is that $U_1\cup \dots\cup U_k$ is $V(G)$.)

In all cases we use $G$ as the ground graph that defines the metric.
Note that in the following claims it is important that $G$ has positive edge-lengths.

We have remarked before that Voronoi cells need not be disjoint.
A vertex belongs to various Voronoi cells if
it is equidistant to different sites.
An alternative is to define cells using strict inequalities. More precisely,
for a set $\Sigma$ of sites, the \DEF{open Voronoi cell} of each site $s\in \Sigma$ is then defined by
\[
	\cell^<(s,\Sigma) ~=~ \{ x\in X\mid \forall s'\in \Sigma\setminus\{s\}: d(s,x)< d(s',x) \}.
\]
In this case, the cells are disjoint but they do not necessarily form a partition of $X$.
The following lemma is straightforward and we omit its proof.

\begin{lemma}
\label{le:strict}
	For each set $\Sigma$ of sites and each site $s\in \Sigma$ we have $s\in \cell^<(s,\Sigma)$ and
$$
   \cell^<(s,\Sigma) = \cell(s,\Sigma) \setminus \left(\bigcup_{s'\neq s} \cell(s',\Sigma)\right) .
$$
\end{lemma}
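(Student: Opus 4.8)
The plan is to verify both assertions by directly unwinding the definitions; the only point requiring a little care is that $d(s,s')>0$ whenever $s\neq s'$, which holds because $d$ is a genuine metric (and, in the graphic setting, because all edge-lengths are positive). For the first assertion I would simply note that $d(s,s)=0$ while $d(s',s)>0$ for every $s'\in\Sigma\setminus\{s\}$, so $d(s,s)<d(s',s)$ for all such $s'$, and hence $s\in\cell^<(s,\Sigma)$ by definition of the open Voronoi cell.

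For the inclusion $\cell^<(s,\Sigma)\subseteq \cell(s,\Sigma)\setminus\bigcup_{s'\neq s}\cell(s',\Sigma)$, I would take $x\in\cell^<(s,\Sigma)$, so that $d(s,x)<d(s',x)$ for every $s'\neq s$. In particular $d(s,x)\le d(s',x)$ for every $s'\in\Sigma$, giving $x\in\cell(s,\Sigma)$. Moreover, for any fixed $s'\neq s$ the site $s\in\Sigma$ witnesses a strict inequality $d(s,x)<d(s',x)$, so $x$ fails the defining condition of $\cell(s',\Sigma)$; hence $x\notin\cell(s',\Sigma)$ for every $s'\neq s$, which is exactly what the right-hand side demands.

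For the reverse inclusion, I would take $x\in\cell(s,\Sigma)$ with $x\notin\cell(s',\Sigma)$ for every $s'\neq s$, and suppose toward a contradiction that $x\notin\cell^<(s,\Sigma)$. Then there is some $s'\neq s$ with $d(s',x)\le d(s,x)$, and combining this with the inequality $d(s,x)\le d(s',x)$ coming from $x\in\cell(s,\Sigma)$ yields $d(s,x)=d(s',x)$. But then for every $s''\in\Sigma$ we have $d(s',x)=d(s,x)\le d(s'',x)$, so $x\in\cell(s',\Sigma)$ with $s'\neq s$, contradicting the choice of $x$. Hence $x\in\cell^<(s,\Sigma)$, completing the argument. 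I expect no substantive obstacle here; the only things to keep in mind are the use of a strict metric (equivalently, positive edge-lengths) to exclude $d(s,s')=0$ for distinct sites, and the careful treatment of the equidistant case that drives the second inclusion.
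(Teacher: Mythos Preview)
Your proof is correct and is exactly the straightforward definitional unwinding one would expect; the paper itself omits the proof, calling the lemma ``straightforward,'' so there is no alternative argument to compare against.
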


\begin{lemma}
\label{le:starshaped}
	For each set $\Sigma$ of sites, each site $s\in \Sigma$, and each vertex $v\in \cell(s,\Sigma)$,
	the path in $T$ from $s$ to $v$ is contained in $T[\cell(s,\Sigma)]$, the subgraph of $T$ induced
	by $\cell(s,\Sigma)$. The same statement is true for $\cell^<(s,\Sigma)$.
\end{lemma}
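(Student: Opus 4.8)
The plan is to prove the statement by contradiction, exploiting that the tree $T$ has a unique path between any two vertices together with the positivity of the edge-lengths. Fix a set $\Sigma$ of sites, a site $s\in \Sigma$, and a vertex $v\in \cell(s,\Sigma)$. Let $P$ be the unique $s$--$v$ path in $T$, and suppose for contradiction that some internal vertex $w$ of $P$ does \emph{not} lie in $\cell(s,\Sigma)$. Then there is another site $s'\in \Sigma$ with $d_T(s',w) < d_T(s,w)$.

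First I would observe that, since $T$ is a tree and $w$ lies on the $s$--$v$ path, removing $w$ separates $s$ from $v$; hence every $s'$--$v$ path passes through $w$, and in particular $d_T(s',v) = d_T(s',w) + d_T(w,v)$. Likewise $d_T(s,v) = d_T(s,w) + d_T(w,v)$, because $w$ is on the (unique) $s$--$v$ path. Subtracting these two identities and using $d_T(s',w) < d_T(s,w)$ gives $d_T(s',v) < d_T(s,v)$, which contradicts $v\in \cell(s,\Sigma)$. This shows every vertex of $P$ lies in $\cell(s,\Sigma)$; since the edges of $P$ join consecutive vertices of $\cell(s,\Sigma)$, the whole path $P$ is contained in $T[\cell(s,\Sigma)]$.

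For the analogous claim with $\cell^<(s,\Sigma)$, I would run the same argument: if an internal vertex $w$ of $P$ is not in $\cell^<(s,\Sigma)$, then by definition there is a site $s'\neq s$ with $d_T(s',w)\le d_T(s,w)$, and the same decomposition of distances through the cut vertex $w$ yields $d_T(s',v)\le d_T(s,v)$ with $s'\neq s$, contradicting $v\in \cell^<(s,\Sigma)$. Note that here we only get a non-strict inequality on $w$, but that is still enough to contradict the strict inequality defining membership of $v$ in the open cell, so the same conclusion follows.

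The only genuine content is the distance decomposition $d_T(s',v) = d_T(s',w) + d_T(w,v)$, i.e.\ that a shortest $s'$--$v$ path must route through the cut vertex $w$; in a tree this is immediate from uniqueness of paths, and positivity of the edge-lengths is what guarantees the tree-path is the shortest path, so there is no real obstacle. (If one wanted to state the lemma for general graphs rather than trees, this step would be the crux and would genuinely require $w$ to be a cut vertex separating $s$ from $v$, which is exactly what the tree hypothesis provides here.) One should also remark that the endpoints $s$ and $v$ of $P$ trivially lie in the respective cells — $v$ by hypothesis and $s$ by Lemma~\ref{le:strict} in the open case — so only the internal vertices need the argument above.
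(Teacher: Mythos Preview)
The paper does not give a proof of this lemma; it is stated and then used without justification, so there is nothing to compare against directly. Your approach is the natural one and the overall structure is fine, but there is a genuine slip in one step.

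You write: ``since $T$ is a tree and $w$ lies on the $s$--$v$ path, removing $w$ separates $s$ from $v$; hence every $s'$--$v$ path passes through $w$, and in particular $d_T(s',v) = d_T(s',w) + d_T(w,v)$.'' The inference is incorrect: the fact that $w$ separates $s$ from $v$ tells you nothing about where $s'$ sits relative to $w$. The site $s'$ may perfectly well lie in the component of $T-w$ containing $v$, in which case the $s'$--$v$ path avoids $w$ and your claimed equality fails. (Concretely: take $T$ the path $s$--$w$--$v$ with an extra leaf $s'$ pendant at $v$.)

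The fix is that you do not need the equality at all; the triangle inequality
\[
d_T(s',v) \;\le\; d_T(s',w) + d_T(w,v) \;<\; d_T(s,w) + d_T(w,v) \;=\; d_T(s,v)
\]
already gives the contradiction. The same remark applies verbatim to the open-cell case, with $\le$ in place of $<$ at the first step. Once you make this one-line replacement the argument is complete.
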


A consequence of this Lemma is that the shortest path from $s$ to $v\in \cell(s,\Sigma)\setminus \cell^<(s,\Sigma)$ has
a part with vertices inside $\cell^<(s,\Sigma)$ followed by a part with vertices of $\cell(s,\Sigma)\setminus \cell^<(s,\Sigma)$.

\begin{lemma}
\label{le:check}
	Given an instance for the problem \textsc{Graphic Inverse Voronoi in Trees} or
	the \textsc{Generalized Graphic Inverse Voronoi in Trees},
	and a candidate solution $s_1,\dots,s_k$, we can check in $O(N)$ time
	whether $s_1,\dots,s_k$ is indeed a solution.
\end{lemma}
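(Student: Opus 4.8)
The plan is to reduce the check to verifying, for every vertex of $T$, an equality between two small pieces of information. First I would compute the nearest-site distance $\delta(v):=\min_{j\in[k]}d_T(s_j,v)$ for all $v\in V(T)$ by a multi-source shortest-path computation in the tree; since $T$ is a tree this takes $O(n)$ time by rooting $T$ and doing one post-order pass (distance to the closest site inside the subtree) followed by one pre-order pass (combining with the best value coming from the parent). With $\delta$ available, recall that $v\in\cell_T(s_j,\{s_1,\dots,s_k\})$ if and only if $d_T(s_j,v)=\delta(v)$; hence, writing $W(v):=\{j\in[k]\mid d_T(s_j,v)=\delta(v)\}$ and $L(v):=\{j\in[k]\mid v\in U_j\}$, the candidate $s_1,\dots,s_k$ is a solution if and only if $W(v)=L(v)$ for every $v\in V(T)$ (in the generalized problem one additionally tests $s_j\in S_j$ for all $j$, which costs $O(\sum_j|S_j|)$ and is independent of the rest). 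As preprocessing I would root $T$, store parent pointers and root-to-vertex distances, and set up a Boolean array so that, once a cell $U_j$ has been made ``current'' in $O(|U_j|)$ time, the test ``$v\in U_j$?'' runs in $O(1)$.

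Next I would check the inclusion $L(v)\subseteq W(v)$ for all $v$, that is, $d_T(s_j,v)=\delta(v)$ whenever $v\in U_j$, processing the cells one at a time. For cell $j$, reject immediately if $s_j\notin U_j$. Otherwise build, in $O(|U_j|)$ time from the parent pointers, the edge set of the induced subgraph $T[U_j]$ (each $v\in U_j$ whose parent also lies in $U_j$ contributes one edge), traverse $T[U_j]$ starting from $s_j$, and reject if this traversal fails to reach all of $U_j$; these rejections are correct because, by Lemma~\ref{le:starshaped}, a genuine Voronoi cell contains $s_j$ and induces a connected subtree. If the traversal succeeds it yields $d_T(s_j,v)$ for every $v\in U_j$ (the path inside the subtree $T[U_j]$ between two of its vertices is their $T$-path), and I compare each of these values with $\delta(v)$, rejecting on any mismatch. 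The total cost of this step is $O(\sum_j|U_j|)=O(N)$.

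The delicate point is the reverse inclusion $W(v)\subseteq L(v)$: one cannot afford to test, for each $j$ and each $v\notin U_j$, whether $d_T(s_j,v)>\delta(v)$, since there are $\Theta(kn)$ such pairs in the worst case. Instead I would compute the single number $|W(v)|$ for every $v$, and then, since $L(v)\subseteq W(v)$ has already been certified, it suffices to verify $|L(v)|=|W(v)|$ for all $v$, which forces $W(v)=L(v)$. To obtain the numbers $|W(v)|$, let $A(v):=\{w\mid vw\in E(T),\ \delta(w)+\lambda(vw)=\delta(v)\}$ be the neighbours of $v$ lying one edge nearer to a closest site. Using that $T$ is a tree one checks that for $w\in A(v)$ one has $W(w)\subseteq W(v)$, that the sets $W(w)$ over distinct $w\in A(v)$ are pairwise disjoint (their elements are sites sitting in distinct components of $T-v$), and that $W(v)$ is the disjoint union of $\{j\mid s_j=v\}$ with these $W(w)$; hence $|W(v)|=|\{j\mid s_j=v\}|+\sum_{w\in A(v)}|W(w)|$. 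Since $w\in A(v)$ implies $\delta(w)<\delta(v)$, the dependencies form a DAG, so all values $|W(v)|$ are computed by a memoized depth-first search in $O(n+\sum_v|A(v)|)=O(n)$ time; the quantities $|\{j\mid s_j=v\}|$ and $|L(v)|=\sum_j[v\in U_j]$ are tabulated in $O(N)$ time.

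Putting the pieces together, the candidate is declared a solution precisely when none of the above tests reject (and, in the generalized version, $s_j\in S_j$ for all $j$), and this is correct by the equivalence $W(v)=L(v)$ for all $v$ iff $\cell_T(s_j,\{s_1,\dots,s_k\})=U_j$ for all $j$ noted above; the total running time is $O(N)$. The only real obstacle is the reverse inclusion, and it is overcome by replacing the set equality $W(v)=L(v)$ by the scalar equality $|W(v)|=|L(v)|$ once $L(v)\subseteq W(v)$ is known, together with the fact that $|W(v)|$ obeys a tree recursion evaluable in linear time.
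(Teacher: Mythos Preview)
Your proof is correct and takes a genuinely different route from the paper's.

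The paper attaches an apex vertex to all the sites, invokes a linear-time shortest-path algorithm for treewidth-$2$ graphs to obtain the nearest-site distances, and then \emph{explicitly} propagates the full label sets $W(v)=\{j\mid d_T(s_j,v)=\delta(v)\}$ along the shortest-path DAG; to stay within $O(N)$ time it keeps a running counter $\sum_v |W(v)|$ and aborts (correctly declaring ``no'') as soon as this counter exceeds $N$. It then materializes each $V_i=\cell_T(s_i,\Sigma)$ and tests $V_i=U_i$ set-wise. Your argument instead (i) computes $\delta(\cdot)$ by two elementary tree passes, avoiding the treewidth machinery; (ii) verifies the inclusion $L(v)\subseteq W(v)$ cell by cell inside each $T[U_j]$; and (iii) replaces the potentially expensive reverse inclusion by the scalar test $|L(v)|=|W(v)|$, after observing the clean recursion $|W(v)|=|\{j:s_j=v\}|+\sum_{w\in A(v)}|W(w)|$ with pairwise disjoint summands. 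This sidesteps the paper's early-termination trick entirely, since you never build the sets $W(v)$. The paper's version is perhaps more direct conceptually (``compute the diagram, compare''), whereas yours is more self-contained and arguably more elegant in how it handles the quadratic blow-up of $\sum_v|W(v)|$.
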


\begin{proof}
	Let $T$ be the underlying tree defining the instance.
	We add a new vertex $a$ (called the \emph{apex}) to $T$ 
	and connect it to each candidate site $s_1,\dots,s_k$ with
	edges of the same positive length. See the left drawing in Figure~\ref{fig:check}.
	The resulting graph, denoted by $T_a$, has treewidth $2$, and thus we can
	compute shortest paths from $a$ to all vertices in linear time~\cite{ChaudhuriZ00}.
	Let $d_a[v]$ be the distance in $T_a$ from $a$ to $v$.

	\begin{figure}
	\centering
		\includegraphics[page=7]{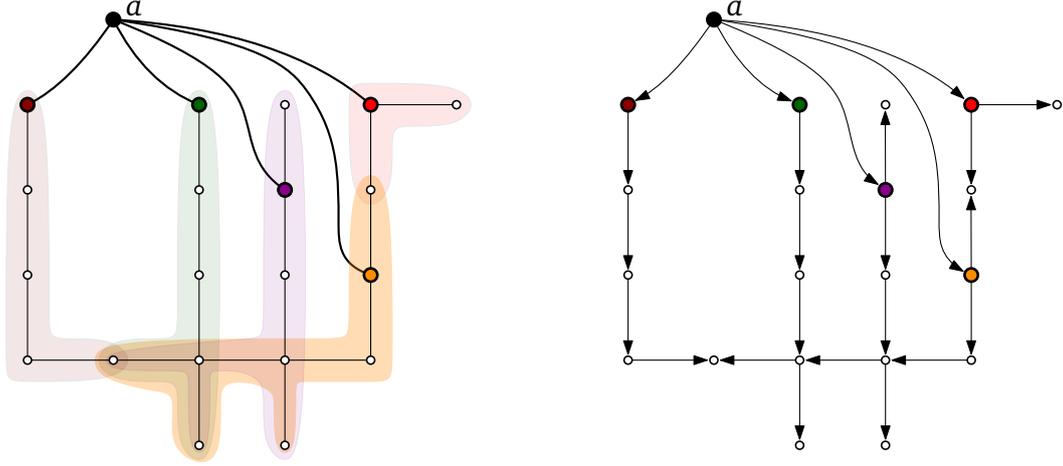}
		\caption{Construction of $T_a$ (left) and the directed acyclic graph $D_a$ (right).}
		\label{fig:check}
	\end{figure}

	Next we build a digraph $D_a$ describing the shortest paths from $a$ to all other vertices.
	The vertex set of $D_a$ is $V(T)\cup \{a\}=V(T_a)$.
	For each arc $u\rightarrow v$, where $uv\in E(T_a)$,
	we add $u\rightarrow v$ to $D_a$ if and only if $d_a[v]=d_a[u]+\lambda(uv)$.
	With this we obtain a directed acyclic graph $D_a$ that contains \emph{all} 
	shortest paths from $a$ to every $v\in V(T)$ and,
	moreover, each directed path in $D_a$ is indeed a shortest path in $T_a$.
	See Figure~\ref{fig:check} right.

	Now we label each vertex $v$ with the indices $i$ of those sites $s_i$, whose Voronoi cells contain $v$, as follows.
	We start setting $L(s_i)=\{i\}$ for each site $s_i$.
	Then we consider the vertices $v\in V(T)$ in topological order with respect to $D_a$.
	For each vertex $v$, we set $L(v)$ to be the union of $L(u)$, where $u$ iterates
	over the vertices of $V(T)$ with arcs in $D$ pointing to $v$.
	It is easy to see by induction that $L(v)=\{ i\in [k]\mid v\in \cell_T(s_i,\{ s_1,\dots,s_k\}) \}$.
	During the process we keep a counter for $\sum_v |L(v)|$, and if at some moment we detect that
	the counter exceeds $N$, we stop and report that $s_1,\dots,s_k$ is not a solution.
	Otherwise, we finish the process when we computed the sets $L(v)$.

	Now we compute the sets $V_i= \{ v\in V(T)\mid s_i\in L(v)\}$ for $i=1,\dots k$.
	This is done iterating over the vertices $v\in V(T)$ and adding $v$ to each site of $L(v)$.
	This takes $O(N+\sum_v |L(v)|)=O(N)$ time.
	Note that $V_i=\cell_T(s_i,\{ s_1,\dots,s_k \})$.
	It remains to check that $U_i=V_i$ for all $i\in [k]$.
	For this we add flags to $V(T)$ that are initially set to false.
	Then, for each $i\in [k]$, we do the following:
	check that $|U_i|=|V_i|$,
	iterate over the vertices of $U_i$ setting the flags to true,
	iterate over the vertices of $V_i$ checking that the flags are true,
	iterate over the vertices of $U_i$ setting the flags back to false.
	The procedure takes $O(N+\sum_v |L(v)|)=O(N)$ time and, if all the checks
	were correct, we have $U_i=V_i=\cell_T(s_i,\{ s_1,\dots,s_k \})$ for all $i\in [k]$.
\end{proof}

\section{Hardness of the Graphic Inverse Voronoi}
\label{sec:SAT}

In this Section we show that the problem \textsc{Graphic Inverse Voronoi} is NP-hard, even for
planar graphs. Stronger lower bounds are derived assuming the Exponential Time Hypothesis (ETH).
We will make a reduction from a variant of the satisfiability (SAT) where each clause has $3$ literals,
all the literals are positive, and we want that each clause is satisfied at exactly one literal.
The problem can be stated combinatorially as follows.

\begin{quote}
	\textsc{Positive 1-in-3-SAT}\\
	Input: $(\mathcal V, \mathcal C)$, where $\mathcal V$ is a ground set
		and $\mathcal C$ is a family of subsets of $\mathcal V$ of size 3.\\
	Question: Is there a subset $T \subseteq \mathcal V$ such that $|C \cap T|=1$ for each $C\in \mathcal C$?
\end{quote}
In this combinatorial setting, $\mathcal V$ represents the variables, $\mathcal C$ represents
the clauses with $3$ positive literals each, and $T$ represents the variables that are set to true.

The incidence graph $I(\mathcal V,\mathcal C)$ of an instance $(\mathcal V, \mathcal C)$
has vertex set $\mathcal V \cup \mathcal C$ and an edge between $v\in \mathcal V$
and $C\in \mathcal C$ precisely when $v\in C$. The graph is bipartite.

As shown by Mulzer and Rote~\cite{Mulzer08},
the problem \textsc{Positive 1-in-3-SAT} is NP-complete 
even when the incidence graph is planar. 

\begin{theorem}
  \label{thm:PGIVhardness}
	The \textsc{Graphic Inverse Voronoi} problem is NP-hard on planar graphs with unit edge-lengths, 
	even when the candidate Voronoi cells are disjoint sets of size at most 3. 
\end{theorem}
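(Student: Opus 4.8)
The plan is to reduce from \textsc{Positive 1-in-3-SAT} with a planar incidence graph, which is NP-complete by Mulzer and Rote~\cite{Mulzer08}. Given an instance $(\mathcal V,\mathcal C)$, I would build a planar graph $G$ with unit edge-lengths together with a partition $\UU$ of $V(G)$ into candidate Voronoi cells of size at most $3$, so that $(G,\UU)$ is a positive instance of \textsc{Graphic Inverse Voronoi} if and only if $(\mathcal V,\mathcal C)$ is satisfiable. The graph is assembled from three kinds of gadgets laid out along a fixed planar embedding of the incidence graph $I(\mathcal V,\mathcal C)$: a variable gadget at each $v\in\mathcal V$, a clause gadget at each $C\in\mathcal C$, and a \emph{wire} running along each incidence-graph edge, connecting a variable gadget to a clause gadget that contains it. The driving observation is the compatibility condition from Section~\ref{sec:basic}: since the $U_i$ are disjoint and cover $V(G)$, a tuple $s_1,\dots,s_k$ is a solution if and only if every pair is compatible, which here means that the site of a cell $U_i$ must be a vertex of $U_i$ and that for every $u\in U_i$ and every $j\neq i$ one has $d_G(s_i,u)<d_G(s_j,u)$ strictly. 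In particular no vertex may be equidistant from two sites, and it is this ``no ties'' rule, combined with unit lengths, that makes the gadgets behave as logic components.

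A \emph{wire} is a path with unit edges whose vertex set is cut into consecutive $2$-element cells. The no-ties rule forces a valid site choice along such a path into one of exactly two ``states'', which we read off as a bit; the state flips its orientation from one cell to the next, which is accommodated by choosing the number of cells of the right parity (and then padding every wire to a common large length $L$). A \emph{variable gadget} for $v$ is a single $2$-cell $\{a_v,b_v\}$ whose chosen site, $a_v$ or $b_v$, encodes the truth value of $v$; the wires leading to the clauses containing $v$ all emanate from one vertex of this cell (equivalently, through a small binary tree of $2$-cell splitter gadgets hung off it), and the no-ties rule forces all of these wires into the single bit determined by the site of $\{a_v,b_v\}$, so the variable gadget doubles as a splitter. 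A \emph{clause gadget} for $C=\{x,y,z\}$ is a triangle $\{q_x,q_y,q_z\}$ (three vertices, three unit edges) forming one $3$-element cell, with the wire from $x$ attached at $q_x$, and likewise for $y$ and $z$. The key computation is the behaviour of this triangle: because its three vertices are at pairwise distance $1$, taking the site at $q_x$ is compatible with the incident wires exactly when the wire at $q_x$ is in the ``true'' state (otherwise the site sitting next to $q_x$ on that wire ties with $q_x$ at the opposite triangle vertices) and it forces the wires at $q_y$ and $q_z$ into the ``false'' state; symmetrically for $q_y$ and $q_z$. Hence the clause gadget admits a compatible site choice if and only if exactly one of its three incident wires carries ``true''. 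This is precisely the ingredient that cannot be built from $2$-vertex cells: a $2$-cell can only encode implications between bits, whereas the triangle imposes the genuinely non-binary ``exactly one of three'' constraint.

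For correctness, in one direction a satisfying assignment $T$ determines the site of each variable cell, and the forcing just described then determines every remaining site; it only remains to check that this tuple of sites is pairwise compatible, i.e. that nowhere does a vertex have a foreign site at distance at most that of its own site. The only dangerous pairs are cells that are close in $G$; we arrange that adjacent cells are compatible by design, and padding every wire to a common sufficiently large length $L$ makes any two cells that do not lie in the same gadget or in consecutive positions of a single wire so far apart in $G$ that no violation or tie can occur. In the other direction, given a solution we read off, for each $v$, the site of $\{a_v,b_v\}$ as the truth value of $v$; the forcing statements for wires, for the splitting performed at variable gadgets, and for clause triangles then show that for every clause the three wires entering its triangle are in states with exactly one ``true'', i.e. $|C\cap T|=1$. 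Finally, $G$ is planar (each gadget is planar and the wires are routed along the edges of a planar embedding of $I(\mathcal V,\mathcal C)$), has unit edge-lengths, and $\UU$ partitions $V(G)$ into sets of size at most $3$, with size $3$ occurring only at clause triangles; the construction has size linear in $|\mathcal V|+|\mathcal C|$, so together with the Sparsification Lemma it also rules out a $2^{o(|\mathcal V|+|\mathcal C|)}$-time algorithm under ETH.

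I expect the main obstacle to be the clause gadget together with the global bookkeeping that makes all the gadgets compose. One must pin down the exact distances inside and between gadgets, and the parities of the wire lengths, so that the triangle forces precisely the $1$-in-$3$ behaviour, so that the variable/splitter gadgets faithfully copy one bit to every incident wire, and so that no ties and no dominating foreign sites appear anywhere in $G$. In particular, producing a working triangle cell is exactly the point where the jump from the polynomial-time ($2$-SAT) case with $2$-vertex cells to NP-hardness with $3$-vertex cells takes place, so getting that gadget right — and checking it interacts cleanly with the wires — is the heart of the argument.
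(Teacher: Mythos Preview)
Your plan is sound and would work, but it is considerably more elaborate than the paper's proof. Both reductions start from \textsc{Positive 1-in-3-SAT} with planar incidence graph and use a $2$-cell $\{v(x_i),\overline{v}(x_i)\}$ for each variable and a triangle ($3$-cell) for each clause. The decisive difference is that the paper uses \emph{no wires, no splitters, and no padding}: for every incidence $x_a\in C_j$ it simply adds a single unit edge from the variable vertex $v(x_a)$ to the corresponding triangle vertex $v(j,a)$. The no-ties rule then does all the work directly. If the variable site is $v(x_a)$ (true) but the triangle site is some $v(j,b)$ with $b\neq a$, then $v(j,a)$ is at distance $1$ from both $v(x_a)$ and $v(j,b)$, a forbidden tie; hence the triangle site must be $v(j,a)$. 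Conversely, if the variable site is $\overline{v}(x_a)$ (false) and the triangle site is $v(j,a)$, then $v(x_a)$ is at distance $1$ from both, again a tie; hence the triangle site must lie elsewhere. Since the triangle carries exactly one site, this already forces $|C_j\cap T|=1$. The remaining compatibility checks are a handful of local distance comparisons.

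What your approach buys is modularity: wire, splitter and clause gadgets are reusable building blocks and the correctness argument factors into per-gadget lemmas. What the paper's approach buys is brevity and a reduction that is transparently linear in $|\mathcal V|+|\mathcal C|$ with no parameters to tune; the ``global bookkeeping'' and padding you identify as the main obstacle simply never arise. Two further remarks on your sketch: padding wires to a common length $L$ does not remove the interactions at junctions (two wires meeting at a variable or clause gadget still have cells at mutual distance~$2$ regardless of $L$), so you would still owe the same local case analysis there; and since every cell has diameter at most~$2$, any foreign site at distance at least~$3$ is automatically harmless, so once the local checks are done the padding is superfluous.
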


\begin{proof}
  We make a reduction from \textsc{Positive 1-in-3-SAT} with planar incidence graphs.
  Let $(\mathcal V=\{x_1,\ldots,x_n\},\mathcal C=\{C_1,\ldots,C_m\})$ 
  be an instance of \textsc{Positive 1-in-3-SAT} with planar incidence graph.
  We produce an equivalent instance $(G,\UU)$ of \textsc{Graphic Inverse Voronoi} as follows.
  See Figure~\ref{fig:reductionSAT1}.

  \begin{figure}
	  \centering
	  \includegraphics[page=1,width=\textwidth]{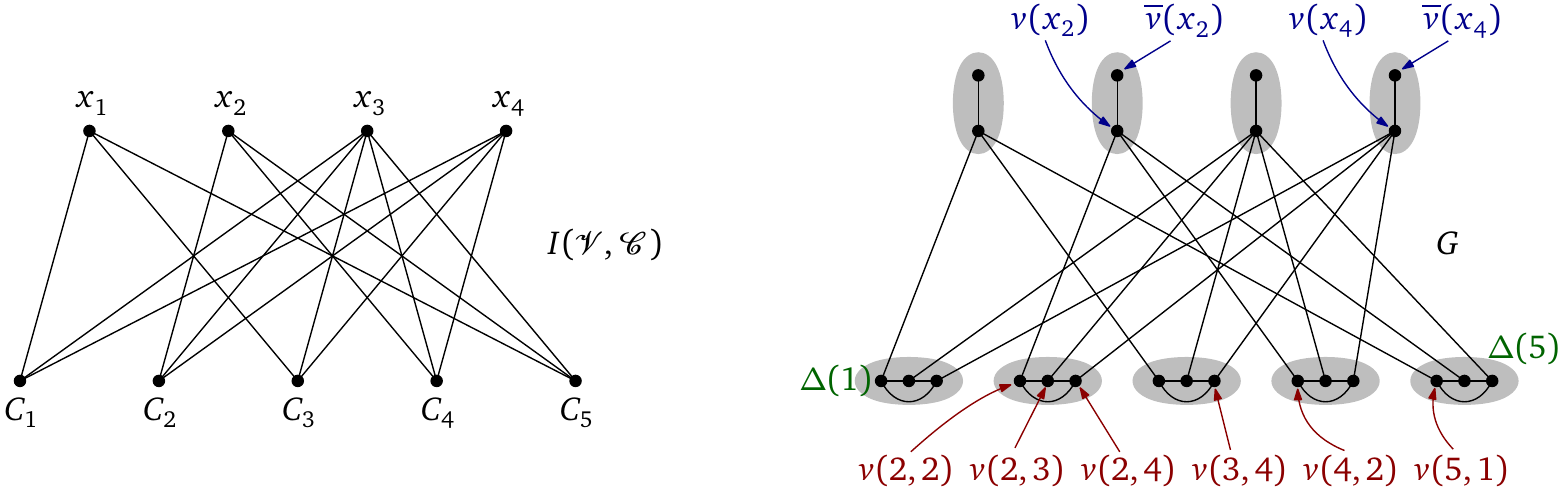}
	  \caption{Left: incidence graph for the \textsc{Positive 1-in-3-SAT} instance with
		$\mathcal V=\{ x_1,x_2,\dots,x_5\}$ and $\mathcal C=\{ C_1=\{x_1,x_3, x_4\}, 
			C_2= \{ x_2,x_3,x_4 \}, \dots, C_5=\{ x_1,x_2,x_3\} \}$. 
			Right: resulting instance for \textsc{Graphic Inverse Voronoi}.
			Each connected shaded region corresponds to one set of $\UU$.}
	  \label{fig:reductionSAT1}
  \end{figure}

  \begin{itemize}
  \item For each element $x_i \in \mathcal V$, 
	we add two vertices $v(x_i)$ and $\overline{v}(x_i)$ to the vertex set of $G$, 
	and we connect them by an edge. 
	We add the set $\{ v(x_i),\overline{v}(x_i) \}$ to the candidate Voronoi cells $\UU$.

  \item For each subset $C_j=\{x_a,x_b,x_c\}$, we add three vertices $v(j,a)$, $v(j,b)$, 
	and $v(j,c)$ to $V(G)$, and we connect the three pairs by edges, forming a triangle.
	We add the set $\Delta(j)=\{v(j,a), v(j,b), v(j,c)\}$ to $\UU$.

  \item Finally, for each $x_a\in \mathcal V$ and each $C_j\in \mathcal C$ with $x_a \in C_j$, 
	we link $v(j,a)$ to $v(x_a)$ by an edge.
  \end{itemize}
  This finishes the construction of $(G,\UU)$.
  We observe that the sets of $\UU$ are indeed pairwise disjoint and of size $2$ or $3$.
  The graph $G$ is planar since it is obtained from the planar incidence graph $I(\mathcal V,\mathcal C)$ 
  by adding pendant vertices and splitting each vertex representing a subset (with three neighbors) 
  into a triangle in which each vertex is linked to one distinct neighbor.  
  
  If there is a solution $T$ to the instance $(\mathcal V,\mathcal C)$, we position the sites in the following way.
  For each $x_i \in \mathcal V$, we place the site of $\{v(x_i),v(\overline{x_i})\}$ in $v(x_i)$ if $x_i \in T$, 
  and in $\overline{v}(x_i)$, otherwise.
  For each $C_j=\{x_a,x_b,x_c\} \in \mathcal C$, 
  we place the site of $\Delta(j)$ in $v(j,z)$, where $x_z$ is the unique element of $C_j \cap T$.
  We denote by $\Sigma$ the obtained set of sites.
  We check that this placement defines the same Voronoi cells as specified by $\UU$. 
  \begin{itemize}
  \item For each $\overline{v}(x_i)\in \Sigma$, 
		we have $\cell_G(\overline{v}(x_i),\Sigma) \supseteq \{v(x_i),\overline{v}(x_i)\}$, 
		since by construction there is no site in $v(x_i)$.
		The only neighbors of $\{v(x_i),\overline{v}(x_i)\}$ 
		are vertices $v(j,i)$ for some values of $j \in [m]$. 
		However, those neighbors do not contain a site of $\Sigma$ by construction.
		On the other, there is always a site of $\Sigma$ at distance at most $1$ of $v(j,i)$, 
		whereas $\overline{v}(x_i)$ is at distance $2$ of $v(j,i)$.
		Hence, $\cell_G(\overline{v}(x_i),\Sigma) = \{v(x_i),\overline{v}(x_i)\}$.
  \item Similarly, for each $v(x_i) \in \Sigma$, we have 
		$\cell_G(v(x_i),\Sigma) \supseteq \{v(x_i),\overline{v}(x_i)\}$, since by construction there 
		is no site in $\overline{v}(x_i)$.
		The only neighbors of $\{v(x_i),\overline{v}(x_i)\}$ are vertices $v(j,i)$ 
		for some values of $j \in [m]$, but since $v(x_i) \in \Sigma$, 
		by construction, $v(j,i)$ also belongs to $\Sigma$.
		Therefore, $\cell_G(v(x_i),\Sigma) = \{v(x_i),\overline{v}(x_i)\}$.
  \item Finally, consider some $v(j,z) \in \Sigma$, where $C_j=\{x_a,x_b,x_c\}$. 
		We have $\cell_G(v(j,z),\Sigma) \supseteq \Delta(j)$ 
		because $v(j,z)$ is the only site in $\Delta(j)$.
		The only other neighbor of $v(j,z)$ is $v(x_z)$, which is in $\Sigma$.
		The only neighbor of $v(j,z')$ with $z' \in \{a,b,c\} \setminus \{z\}$ is $v(x_{z'})$ 
		which is at distance $2$ of $v(j,z)$ and at distance $1$ of the site 
		$\overline{v}(x_{z'}) \in \Sigma$.
		Thus, $\cell_G(v(j,z),\Sigma) = \Delta(j)$.
  \end{itemize}

  If there is no solution to the instance $(\mathcal V,\mathcal C)$, 
  we show that there is no solution to the \textsc{Graphic Inverse Voronoi} instance $(G,\UU)$.
  Fix a position of the sites.
  The set of sites $\Sigma$ has to intersect each $\{v(x_i),\overline{v}(x_i)\}$ exactly once.
  Define the set 
  \[
	T ~=~ \{x_i \in \mathcal V \mid \text{the site chosen for $\{ v(x_i),\overline{v}(x_i)\}$ is $v(x_i)$}\}.
  \]
  As $T$ is not a solution for the \textsc{Positive 1-in-3-SAT} instance, 
  there is a set $C_j=\{x_a,x_b,x_c\} \in \mathcal C$ such that $|C_j \cap T| \neq 1$.
  We now turn our attention to the site chosen for $\Delta(j)$.
  We distinguish two cases: $|C_j \cap T| = 0$ and $|C_j \cap T| \geq 2$.
  If $|C_j \cap T| = 0$, then, for every position of the site, say in $v(j,z)$ (with $z \in \{a,b,c\}$), $\cell_G(v(j,z),\Sigma)$ contains $v(x_z)$, and therefore cannot be equal to $\Delta(j)$.
  Now if $|C_j \cap T| \geq 2$, let $v(x_z)$ and $v(x_{z'})$ be two sites of $\Sigma$ with $z \neq z' \in \{a,b,c\}$.
  Since $\Delta(j)$ contains precisely one site, we have $v(j,z) \notin \Sigma$ or $v(j,z') \notin \Sigma$. 
  If $v(j,z) \notin \Sigma$, then $\cell_G(v(x_z),\Sigma)$ contains $v(j,z)$, and therefore cannot be equal to 
  $\{v(x_z), \overline{v}(x_z)\}$.
  Similarly, if $v(j,z') \notin \Sigma$, then $\cell_G(v(x_{z'}),\Sigma)\neq \{v(x_z), \overline{v}(x_z)\}$.
  In both cases, we reach the conclusion that there cannot be a solution for the instance $(G,\UU)$.
  
  Note that in the argument we did not use that $I(\mathcal V,\mathcal C)$ or $G$ are planar.
\end{proof}

Using additional properties of the reduction from \textsc{(Planar) 3-SAT} 
to \textsc{(Planar) Positive 1-in-3-SAT} given by Mulzer and Rote \cite{Mulzer08}
and the Sparsification Lemma, we derive the following conditional lower bound.

\begin{corollary}
  \label{cor:PGIVhardness}
  Unless the Exponential Time Hypothesis fails, the problem \textsc{Graphic Inverse Voronoi} 
  cannot be solved in time $2^{o(n)}$ in general graphs 
  and in time $2^{o(\sqrt n)}$ in planar graphs, where $n$ is the number of vertices, 
  even when the potential Voronoi cells are disjoint and of size at most 3.
\end{corollary}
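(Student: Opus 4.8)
The plan is to turn the polynomial reduction of Theorem~\ref{thm:PGIVhardness} into a size-preserving one and then chain it with the known ETH lower bounds for \textsc{3-SAT} (general and planar).

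First I would record that the reduction of Theorem~\ref{thm:PGIVhardness} is \emph{linear}: from a \textsc{Positive 1-in-3-SAT} instance $(\mathcal V,\mathcal C)$ with $n=|\mathcal V|$ and $m=|\mathcal C|$ it produces a graph $G$ on exactly $2n+3m$ vertices and $O(n+m)$ edges, the collection $\UU$ has total size $O(n+m)$, and $G$ is planar whenever the incidence graph $I(\mathcal V,\mathcal C)$ is planar. Consequently, a $2^{o(|V(G)|)}$-time algorithm for \textsc{Graphic Inverse Voronoi} would yield a $2^{o(n+m)}$-time algorithm for \textsc{Positive 1-in-3-SAT}, and, on planar inputs, a $2^{o(\sqrt{|V(G)|})}$-time algorithm would yield a $2^{o(\sqrt{n+m})}$-time algorithm for \textsc{Positive 1-in-3-SAT} restricted to planar incidence graphs.

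Next I would go back through the Mulzer--Rote chain~\cite{Mulzer08}. For the general case it suffices to check that their reduction from \textsc{3-SAT} to \textsc{Positive 1-in-3-SAT} increases the number of variables and clauses only by a constant factor; combined with the ETH and the Sparsification Lemma (which, as recalled in Section~\ref{sec:basic}, rule out a $2^{o(n+m)}$-time algorithm for \textsc{3-SAT}), this gives the first claim. For the planar case I would additionally use the standard consequence of Lichtenstein's crossing-gadget reduction together with the Sparsification Lemma --- see e.g.~\cite{CyganFKLMPPS15} --- that, under the ETH, \textsc{Planar 3-SAT} has no $2^{o(\sqrt{n+m})}$-time algorithm, and then check that the planar variant of the Mulzer--Rote reduction is likewise linear in $n+m$ and keeps the incidence graph planar. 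Composing these transfers the square-root lower bound all the way to planar \textsc{Graphic Inverse Voronoi} through the linear reduction above.

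The only real work is bookkeeping, and that is also where the main obstacle lies: one must verify that \emph{every} reduction in the chain blows up the instance by at most a constant factor (so that $n+m$, and hence $\sqrt{n+m}$, is preserved up to constants) and that planarity is maintained throughout; a superlinear blow-up anywhere would destroy the $2^{o(\sqrt{n})}$ bound for planar graphs. No combinatorial idea beyond Theorem~\ref{thm:PGIVhardness} is required.
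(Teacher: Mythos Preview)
Your proposal is correct and follows essentially the same route as the paper. The only cosmetic difference is that for the planar case you invoke the packaged lower bound ``\textsc{Planar 3-SAT} has no $2^{o(\sqrt{n+m})}$ algorithm under ETH'' and then chain linear reductions, whereas the paper explicitly traces Lichtenstein's quadratic blow-up from \textsc{3-SAT} to \textsc{Planar 3-SAT} to obtain a planar \textsc{Graphic Inverse Voronoi} instance on $O((n+m)^2)$ vertices; these two presentations are equivalent, since the packaged lower bound you cite is itself derived via that same quadratic blow-up.
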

\begin{proof}
  Applying the reduction of Mulzer and Rote~\cite{Mulzer08} to a \textsc{3-SAT} instance with $n$
  variables and $m$ clauses gives an instance to \textsc{Positive 1-in-3-SAT} with $O(n+m)$
  variables and $O(m)$ clauses. This is so because in their reduction
  each clause is replaced locally using $O(1)$ new variables and clauses.
  The reduction and the proof used in Theorem~\ref{thm:PGIVhardness} then give an instance 
  with $O(n+m)$ vertices. (The reduction also works for non-planar instances, as mentioned at the end
  of the proof.) Therefore, if we could solve \textsc{Graphic Inverse Voronoi}
  in time $2^{o(|V(G)|)}$, we could solve any \textsc{3-SAT} instance with $n$ variables and $m$ clauses
  in time $2^{o(|V(G)|)}= 2^{o(n+m)}$. 
  However, the Sparsification Lemma~\cite{Impagliazzo01} rules out, under the Exponential Time Hypothesis, 
  a running time $2^{o(n+m)}$ for \textsc{3-SAT}.
  
  The reduction from \textsc{3-SAT} to \textsc{Planar 3-SAT} given by Lichtenstein~\cite{Lichtenstein82} 
  increases quadratically the number of variables and clauses. 
  Together with the reduction of Mulzer and Rote from \textsc{(Planar) 3-SAT} to 
  \textsc{(Planar) Positive 1-in-3-SAT} and our reduction in the proof of Theorem~\ref{thm:PGIVhardness},
  we conclude that each instance of \textsc{3-SAT} with $n$ variables and $m$ clauses becomes
  an instance of \textsc{Graphic Inverse Voronoi} where the graph $G$ is \emph{planar} and has $O((n+m)^2)$ vertices.
  Again, solving the problem in planar graphs in time $2^{o(\sqrt{|V(G)|})}$ time for planar graphs
  would contradict the Sparsification Lemma.
\end{proof}

This upper bound of $3$ for the size of the potential Voronoi cells is sharp.

We show that the problem can be solved in polynomial time when each potential Voronoi cell has at most two points not contained in other potential cells.
For this, one uses a reduction to \textsc{$2$-SAT}.
Inspired by Lemma~\ref{le:strict}, 
we say that each $U\in \UU$ defines the \emph{potential open Voronoi cell}
\[
	U\setminus \left( \bigcup_{U'\in \UU\setminus \{U \}} U'\right) .
\]

\begin{theorem}
  \label{thm:GIVsize2}
  The \textsc{Graphic Inverse Voronoi} problem can be solved in polynomial time 
  when all the potential open Voronoi cells are of size at most $2$. 
\end{theorem}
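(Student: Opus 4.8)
The plan is to reduce the problem to \textsc{2-SAT}, which is solvable in linear time. After a trivial preprocessing that merges coincident cells (if $U_i=U_j$ with $i\ne j$, then any solution has $s_i=s_j$, since $s_i\in\cell^<(s_i,\Sigma)\subseteq U_i=U_j=\cell(s_j,\Sigma)$ forces $d(s_j,s_i)\le d(s_i,s_i)=0$, so one of the two indices can be removed and the solution lifted back afterwards), we may assume $U_1,\dots,U_k$ are pairwise distinct. For each $i$ write $U_i^<:=U_i\setminus\bigcup_{j\ne i}U_j$ for the potential open Voronoi cell of $U_i$; by hypothesis $|U_i^<|\le 2$ for every $i$.

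The key observation is that in any solution $s_1,\dots,s_k$ the site $s_i$ is forced to lie in $U_i^<$. Indeed, Lemma~\ref{le:strict} gives $s_i\in\cell^<(s_i,\Sigma)$, and since $\Sigma=\{s_1,\dots,s_k\}$, the cells are now pairwise distinct, and $\cell(s_j,\Sigma)=U_j$ for all $j$, the same lemma identifies $\cell^<(s_i,\Sigma)$ with $U_i\setminus\bigcup_{j\ne i}U_j=U_i^<$. Hence each site has at most two admissible locations; if some $U_i^<$ is empty I would answer ``no'' at once, and if $|U_i^<|=1$ the location of $s_i$ is forced. So the first step of the algorithm is to compute the sets $U_i^<$ and to set up a Boolean variable $y_i$ for each $i$ that records which vertex of $U_i^<$ is chosen as $s_i$.

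For the clauses I would use the compatibility characterization from Section~\ref{sec:basic}: since $U_1\cup\dots\cup U_k=V(G)$, a tuple $(s_1,\dots,s_k)$ is a solution if and only if $s_i$ and $s_j$ are compatible for every pair $i\ne j$, and whether $s_i$ and $s_j$ are compatible depends only on their two chosen locations (together with $U_i$, $U_j$ and the fixed metric $d_G$). Thus the solution condition is a conjunction, over the $\binom{k}{2}$ pairs of indices, of constraints each involving at most two choices for $s_i$ and at most two for $s_j$. For each pair $\{i,j\}$ I would enumerate the at most four combinations of choices, test compatibility of the corresponding pair of vertices, and, for every combination that fails, add the unique two-literal clause over $\{y_i,\overline{y_i}\}$ and $\{y_j,\overline{y_j}\}$ forbidding exactly that combination (an arbitrary two-variable Boolean constraint is a conjunction of at most four such clauses). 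Testing compatibility reduces to comparing $d_G(s_i,u)$ with $d_G(s_j,u)$ for the vertices $u\in U_i\cup U_j$, so after precomputing shortest-path distances from the at most $2k$ candidate sites to all vertices (a polynomial number of shortest-path computations) each test takes $O(|U_i\cup U_j|)$ time. Running the linear-time \textsc{2-SAT} algorithm on the resulting formula then answers the question: it is satisfiable exactly when some choice of sites inside the sets $U_i^<$ is pairwise compatible, i.e.\ exactly when $(G,\UU)$ is a yes-instance, and a satisfying assignment yields the required sites (which can also be re-verified directly).

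The whole procedure clearly runs in polynomial time, so the only real content is the first observation --- that the site of each cell is confined to the size-$\le 2$ set $U_i^<$ --- which is where Lemma~\ref{le:strict} (together with the merging of coincident cells) is used. The remaining and, in my view, mildly delicate point is to notice that ``being a solution'' decomposes into the \emph{pairwise}, locally testable compatibility relation, so that each pair of cells contributes a two-variable constraint expressible in $2$-CNF; once that is granted, the translation to \textsc{2-SAT} and its resolution are routine.
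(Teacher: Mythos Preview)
Your proposal is correct and follows essentially the same route as the paper's proof: restrict each site to its potential open cell via Lemma~\ref{le:strict}, encode the (at most two) choices per cell as a Boolean variable, and translate pairwise compatibility into $2$-clauses. Your preprocessing step that merges coincident cells $U_i=U_j$ is a nice extra bit of care the paper omits; without it the identification $\cell^<(s_i,\Sigma)=U_i^<$ could fail when two indices share the same cell.
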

\begin{proof}
  We present a polynomial reduction to \textsc{2-SAT}. 
  See Figure~\ref{fig:reductionSAT2} for an example.
  Let $(G,\UU=\{U_1,\ldots,U_k\})$ be the \textsc{Graphic Inverse Voronoi} instance.
  We denote by $U'_i$ the open potential Voronoi cell of the potential Voronoi cell $U_i$.
  By assumption, $|U'_i| \leq 2$.
  Because of Lemma~\ref{le:strict}, if the instance has a solution, then $s_i\in U'_i$.
  For each open cell $U'_i$, we introduce a variable $x_i$.
  We interpret putting the site on one fixed but arbitrary vertex of $U'_i$ to setting $x_i$ to true, 
  and putting the site on the other vertex (if it exists) to setting $x_i$ to false.
  Now, $\VV_G(\Sigma) = \UU$ if and only if for each pair of sites $s_i, s_j \in \Sigma$ 
  with $s_i \in U'_i$ and $s_j \in U'_j$:
  \begin{itemize}
  \item
    every vertex of $U_i \setminus U_j$ is strictly closer to $s_i$ than to $s_j$, and
  \item
    every vertex of $U_j \setminus U_i$ is strictly closer to $s_j$ than to $s_i$, and
  \item
    every vertex of $U_i \cap U_j$ is equidistant to $s_i$ and $s_j$.    
  \end{itemize}
  Therefore, one just needs to check that each pair of sites of $\Sigma$ is \emph{compatible}, 
  that is, satisfies those three conditions.

  \begin{figure}
	  \centering
	  \includegraphics[page=2]{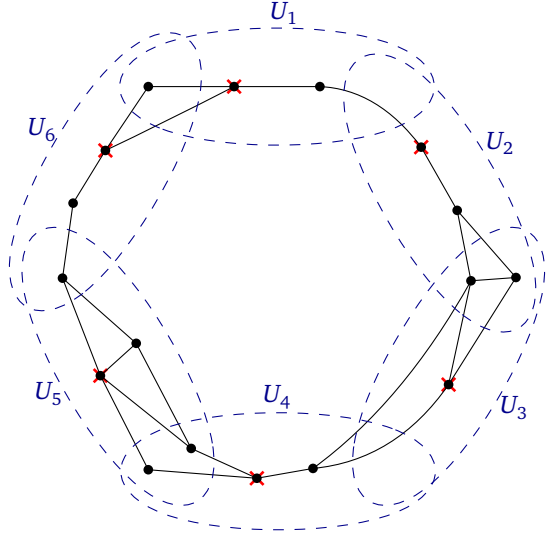}
	  \caption{An instance satisfying the assumption of Theorem~\ref{thm:GIVsize2}.
	    The vertices of each set $U\in \UU$ are enclosed by dashed curve.
		The crosses indicate the position when the variables are true.
	        Some of the compatibility clauses are $x_5\vee \overline{x_4}$, $\overline{x_5}\vee x_4$, $x_5 \vee x_4$, $\overline{x_2} \vee \overline{x_3}$, etc., as well as the 1-clause $x_3$.}
	  \label{fig:reductionSAT2}
  \end{figure}
  
  We define the following set of \textsc{2-SAT} constraints.
  For each open cell $U'_i$ of size $1$, we add the clause $x_i$, which forces to set $x_i$ to true.
  For each pair $s_i \in U'_i, s_{j} \in U'_j$ which is \emph{not} compatible 
  we add the clause $\ell_i \lor \ell_j$ where $\ell_i$ (resp. $\ell_j$) is the opposite literal 
  to the one chosen by placing a site in $s_i$ (resp. $s_j$).

  It is easy to check that the produced \textsc{2-SAT} formula is satisfiable if and only if there is a pairwise compatible set of sites.
  This is in turn equivalent to the existence of a solution for the \textsc{Graphic Inverse Voronoi} instance.
\end{proof}

\section{Hardness parameterized by the number of Voronoi cells}
\label{sec:numberofcells}
In the previous section we showed that the problem \textsc{Graphic Inverse Voronoi} is NP-hard.
Stronger lower bounds are derived under the assumption of the Exponential Time Hypothesis (ETH).
We will prove the following result.

\begin{theorem}
\label{thm:W1hardness}
	The \textsc{Graphic Inverse Voronoi} problem is \textup{W[1]}-hard parameterized by the
	number of candidate Voronoi cells.
	Furthermore, for $n$-vertex graphs and $k$ subsets to be candidate Voronoi cells, 
	for any computable function $f$,
	there is no algorithm to solve the \textsc{Graphic Inverse Voronoi} problem in
	$f(k)n^{o(k/\log k)}$ time, unless the Exponential Time Hypothesis fails.
	The claim holds even for graphs with unit edge-lengths.
\end{theorem}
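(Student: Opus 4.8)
The plan is to reduce from \textsc{Partitioned Subgraph Isomorphism}: one is given a pattern graph $H$, a host graph $G$, and a colouring $\psi\colon V(G)\to V(H)$, and must decide whether there is a map $\phi\colon V(H)\to V(G)$ with $\psi(\phi(u))=u$ for all $u\in V(H)$ and $\phi(u)\phi(v)\in E(G)$ for all $uv\in E(H)$ (such a $\phi$ is automatically injective). A theorem of Marx states that, assuming the Exponential Time Hypothesis, \textsc{Partitioned Subgraph Isomorphism} has no $f(k)\,|V(G)|^{o(k/\log k)}$-time algorithm, where $k=|E(H)|$, and that the problem is W[1]-hard parameterized by $k$. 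From such an instance (we may assume $H$ has no isolated vertices, so $|V(H)|\le 2k$) with $n=|V(G)|$ I would build, in polynomial time, an equivalent \textsc{Graphic Inverse Voronoi} instance with unit edge-lengths, whose graph has $n^{O(1)}$ vertices and which has only $k'=O(k)$ candidate Voronoi cells. Since $k'=\Theta(k)$ and the new graph has polynomial size, a hypothetical $f(k')\,|V(G')|^{o(k'/\log k')}$ algorithm would solve \textsc{Partitioned Subgraph Isomorphism} in time $f'(k)\,n^{o(k/\log k)}$, and a hypothetical fpt algorithm, the reduction being an fpt-reduction, would refute its W[1]-hardness; this would establish both parts of the theorem.

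The construction uses two families of gadgets, attached at shared vertices into one connected graph. For each pattern vertex $i\in V(H)$ there is a \emph{selector gadget}, a constant number of candidate cells built from unit-length paths, designed so that — invoking the notion of potential open cell and Lemma~\ref{le:strict} — the site of the principal cell $U_i$ is forced into one of exactly $|\psi^{-1}(i)|$ positions, and the chosen position encodes, via its distance to a designated \emph{port} vertex, a vertex of $\psi^{-1}(i)\subseteq V(G)$. For each pattern edge $ij\in E(H)$ there is an \emph{edge gadget}, again with $O(1)$ candidate cells, whose legal site positions are in bijection with the edges of $G$ between $\psi^{-1}(i)$ and $\psi^{-1}(j)$; this cell may be polynomially large, which is harmless since only the \emph{number} of cells is the parameter. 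The edge gadget is linked to the ports of the selectors of $i$ and $j$ by two \emph{equality testers}, each a tiny construction with an ``arbiter'' vertex that is stolen from the cell it should belong to whenever its two competing sites sit at unequal distances from it; using the pairwise-compatibility characterisation from Section~\ref{sec:basic}, compatibility of all involved sites holds exactly when the edge selected in the $ij$-gadget has its $i$-endpoint equal to the vertex selected by the $i$-selector and its $j$-endpoint equal to the vertex selected by the $j$-selector. A few global cells keep the gadgets mutually insulated. Hence a valid placement corresponds to a choice $\phi(i)\in\psi^{-1}(i)$ per pattern vertex together with certificates $\phi(i)\phi(j)\in E(G)$ for all $ij\in E(H)$, i.e.\ to a solution of \textsc{Partitioned Subgraph Isomorphism}; conversely each such solution yields, by placing sites according to $\phi$ and its image edges (with auxiliary sites placed canonically), a placement whose induced Voronoi diagram equals $\UU$, verified gadget by gadget via compatibility. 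Unit edge-lengths suffice because every comparison we rely on is between lengths of explicit paths, which we pad to the needed values.

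I expect the main obstacle to be the edge gadget: it must verify an essentially arbitrary adjacency relation of $G$ while spending only a constant number of candidate cells per pattern edge — we cannot afford one cell per edge of $G$, as that would make the parameter polynomial in $n$. The resolution is to let a single cell range over all relevant host edges and to recover each of its two endpoints as a separate distance value, reducing adjacency to two numeric equalities checked independently by small arbiter constructions; making these equalities hold simultaneously under unit edge-lengths, and arranging the insulation so that no unintended compatibility failure can occur between distant gadgets, is where the care lies. A secondary but essential point is the bookkeeping: the number of candidate cells must stay $\Theta(|E(H)|)$ and the graph polynomial in size, so that both the W[1]-hardness and the $f(k)\,n^{o(k/\log k)}$ lower bound transfer through the reduction without loss.
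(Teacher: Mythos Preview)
Your plan is reasonable in outline and would likely lead to a correct reduction, but as written it is a blueprint rather than a proof: the selector, edge, and equality-tester gadgets are described only by their intended behaviour, not constructed, and the crucial point you flag yourself---how a single site position in the edge gadget can encode two independently testable endpoint values under unit lengths---is left open.

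More importantly, the paper's construction is far simpler and sidesteps almost all of your machinery. It uses exactly one candidate cell $U_{ij}$ per pattern edge $ij$, and these cells \emph{overlap}: each partite class $V_i$ is contained in $U_{ij}$ for every $j$ adjacent to $i$ in the pattern. The host graph is just $H$ with every edge subdivided once, plus cliques on each $V_i$ and on each set $W_{ij}$ of subdivision vertices. By Lemma~\ref{le:strict}, the site for $U_{ij}$ is forced into the open cell $W_{ij}$, hence is a subdivision vertex $w(v_av_b)$ for some host edge $v_av_b$ with $v_a\in V_i$, $v_b\in V_j$---so choosing a site \emph{is} choosing a host edge, with no separate edge gadget. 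Consistency of the $V_i$-endpoint across all pattern edges incident to $i$ is enforced directly by the cell constraint: if two sites $s_{ij}$ and $s_{ij'}$ picked different neighbours in $V_i$, one of them would be at distance~$1$ from its neighbour and distance~$2$ from the other site's neighbour, violating $V_i\subseteq U_{ij'}$. There are no vertex selectors, no equality testers, no arbiter vertices, and no insulation cells.

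What the paper's approach buys is brevity and a clean correctness proof (Lemmas~\ref{le:forward} and~\ref{le:backward} are each a page). What your approach would buy, if carried out, is candidate cells that are (mostly) disjoint---the paper's cells are not---and a more modular gadget toolkit; but neither is needed for the theorem as stated.
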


Note that it is trivial to solve the problem in $n^{O(k)}$ time: just try
each $n^k$ tuples of $k$ vertices as candidate sites and check each of them.
The remaining of this section is devoted to prove Theorem~\ref{thm:W1hardness}.
We will make a reduction from the following problem.

\begin{quote}
	\textsc{Multicolored Subgraph Isomorphism}\\
	Input: $(H,P)$, where $H$ is a graph whose vertex set $V(H)$ 
		is partitioned into $\ell$ pairwise disjoint sets $V_1 \uplus \cdots \uplus V_\ell$,
		and a pattern graph $P$ with vertex set $V(P)=[\ell]$.\\
	Question: Can we select vertices $v_i\in V_i$ for every $i\in [\ell]$ such that
		we have $v_iv_j\in E(H)$ for each $ij\in E(P)$?
\end{quote}

When the answer is positive, we say that $P$ is isomorphic to a multicolored subgraph of $H$.
It follows from the work of Marx~\cite{Marx10} that,
assuming the Exponential Time Hypothesis, 
the \textsc{Multicolored Subgraph Isomorphism} 
cannot be solved in time $f(\ell)n^{o(\ell/\log \ell)}$ for any computable function $f$, 
even when the pattern $P$ has $\Theta(\ell)$ edges.
This lower bound is made explicit for example in~\cite[Corollary 5.5]{MarxP15},
where $P$ is assumed to be $3$-regular. 

Consider an instance $(H,P)$ to the \textsc{Multicolored Subgraph Isomorphism} problem,
where $V_1,\dots,V_k$ are the partite classes of $V(H)$ and $P$ has $\Theta(\ell)$ edges.
We assume for simplicity that each vertex of $P$ has degree at least $2$.
For each $i,j\in [\ell]$, let $E_H(V_i,V_j)$ denote the edges of $H$ with one
endpoint in $V_i$ and the other endpoint in $V_j$.
We shall assume that $E_H(V_i,V_j)$ is empty whenever $ij\notin E(P)$ because
those edges can be removed without affecting the instance.
We build an instance $(G,\UU)=(G(H,P),\UU(H,P))$ for the
\textsc{Graphic Inverse Voronoi} problem as follows.

Figures~\ref{fig:reduction1} and~\ref{fig:reduction2}
may be helpful to follow the construction.

\begin{figure}
	\centering
	\includegraphics[page=1]{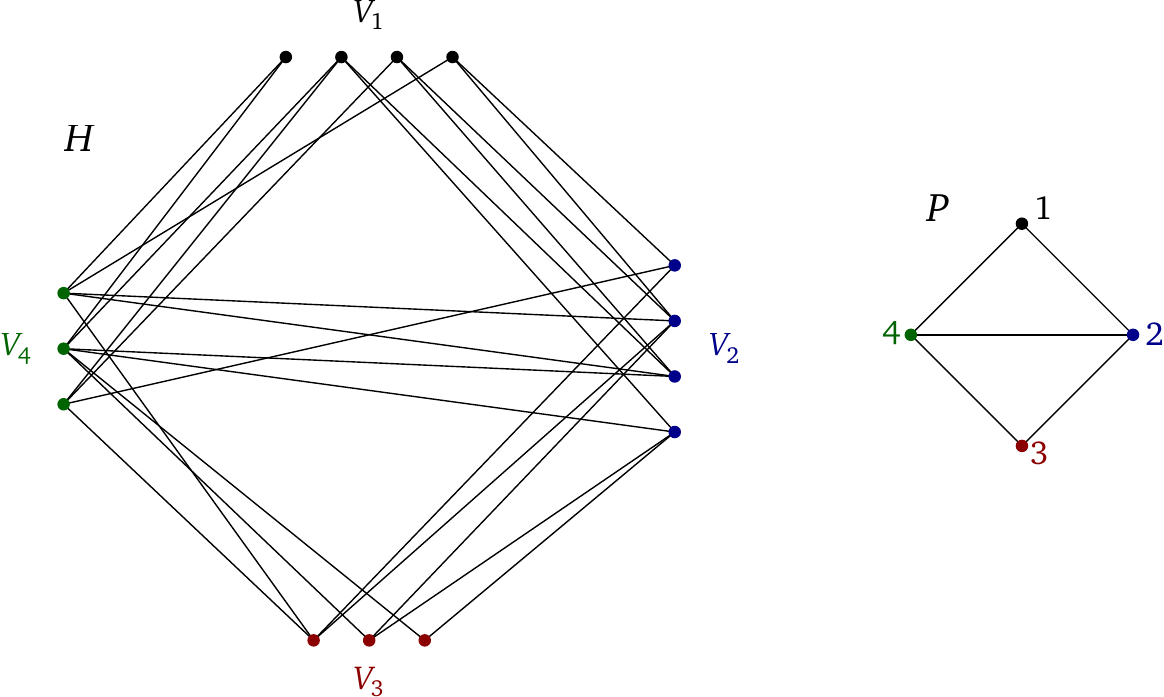}
	\caption{A graph $H$ (left) and a pattern $P$ (right) for the \textsc{Multicolored Subgraph Isomorphism}.}
	\label{fig:reduction1}
\end{figure}

\begin{itemize}
	\item We start with $V(G)=V(H)$ and $E(G)=E(H)$. 
	\item We subdivide each edge $e$ of $G$ with a new vertex, which we call $w(e)$.
        \item For each $i\in [\ell]$, we add all edges between all the vertices in $V_i$. 
	\item For each $ij\in E(P)$, let $W_{ij}$ be the vertices $w(e)$ used to subdivide $E_H(V_i,V_j)$.
		We add all edges between all the vertices in $W_{i,j}$.
	\item For each $ij\in E(P)$, we add $U_{ij} = W_{ij}\cup V_i \cup V_j$ to $\UU$.
\end{itemize}
All the edges have unit length.
This completes the construction of $G=G(H,P)$ and $\UU=\UU(H,P)$.
Note that $\UU$ has $|E(P)|=\Theta(\ell)$ candidate Voronoi regions,
while $G$ has $|V(H)|+ |E(H)|=\Theta(|V(H)|^2)$ vertices and
\[
	\sum_{i\in [\ell]} \binom {|V_i|}{2}+ \sum_{ij\in E(P)} \binom{|E_H(V_i,V_j)|}{2} + 2|E(H)|
\]
edges. 

The next two lemmas show that the pair $(G,\UU)$ is a correct reduction
from \textsc{Multicolored Subgraph Isomorphism} to \textsc{Graphic Inverse Voronoi}.
The intuition of the reduction is that selecting the site of
each Voronoi cell corresponds to selecting an edge of $E_H(V_i,V_j)$ for each $ij\in E(P)$.
Moreover, the selection of the edges we make needs to have compatible endpoints in 
each partite set $V_i$, as otherwise we do not get the correct Voronoi cells.

\begin{figure}
	\centering
	\includegraphics[page=2,width=\textwidth]{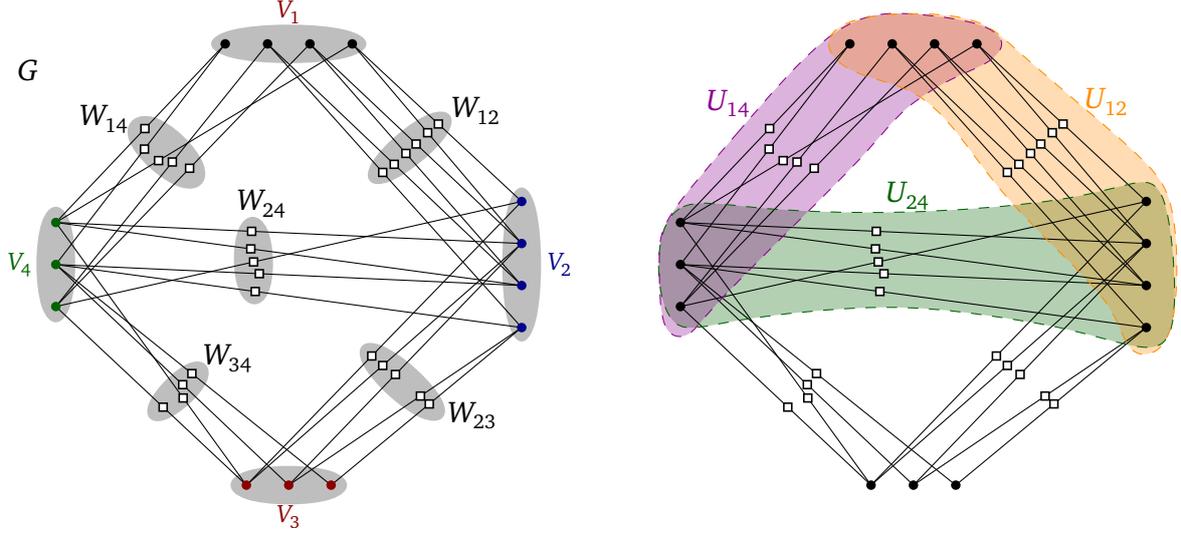}
	\caption{Example showing the construction for $(H,P)$ of the Figure~\ref{fig:reduction1}.
		Left: the graph $G$. The vertices inside a connected shaded region form a clique whose edges
		are not shown in the drawing.
		Right: three (of the five) candidate Voronoi cells of $\UU$ are indicated by shaded regions of different colors.
		The cliques induced by $V_i$ and $W_{ij}$ are not shown in this figure.}
	\label{fig:reduction2}
\end{figure}

\begin{lemma}
\label{le:forward}
	If $P$ is isomorphic to a multicolored subgraph of $H$, then
	$G$ has a set $\Sigma$ of sites such that $\VV(\Sigma)=\UU$.
\end{lemma}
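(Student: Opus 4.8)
The plan is to read the sites directly off the multicolored subgraph. Let $v_i\in V_i$ for $i\in[\ell]$ be the vertices witnessing that $P$ is isomorphic to a multicolored subgraph of $H$, so that $v_iv_j\in E_H(V_i,V_j)$ for every $ij\in E(P)$. Then $w(v_iv_j)$ is a well-defined vertex of $W_{ij}$, and I set $s_{ij}:=w(v_iv_j)$ and $\Sigma:=\{\,s_{ij}\mid ij\in E(P)\,\}$. The whole task is to show $\cell(s_{ij},\Sigma)=U_{ij}$ for every $ij\in E(P)$, from which $\VV(\Sigma)=\UU$ is immediate.

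The argument is driven by a coarse ``type'' function $\tau\colon V(G)\to\{0,1,2\}$: put $\tau(u)=0$ if $u\in\Sigma$; $\tau(u)=1$ if $u\notin\Sigma$ and $u$ is either a subdivision vertex or one of the chosen vertices $v_i$; and $\tau(u)=2$ if $u\in V_i\setminus\{v_i\}$ for some $i$. This is well-defined and exhaustive, since sites are subdivision vertices and every non-subdivision vertex lies in a unique class $V_i$. I would prove three local facts by direct distance computations exploiting that the $V_i$ and the $W_{ij}$ are cliques. First, $d(s',u)\ge\tau(u)$ for every vertex $u$ and every site $s'\in\Sigma$: a site is a subdivision vertex, so original vertices are at distance $\ge 1$ from all sites, and a vertex $u\in V_i\setminus\{v_i\}$ has neighbours only inside $V_i$ and among subdivision vertices of edges incident to $u$, none of which is of the form $w(v_kv_l)$, so $u$ is adjacent to no site. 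Second, $d(s_{ij},u)=\tau(u)$ for every $u\in U_{ij}$: this holds because $s_{ij}$ is adjacent to $v_i$, to $v_j$, and (being in the clique $W_{ij}$) to all of $W_{ij}\setminus\{s_{ij}\}$, while every vertex of $V_i\cup V_j$ is one clique-edge away from $v_i$ or $v_j$. Third, $d(s_{ij},u)>\tau(u)$ for every $u\notin U_{ij}$: the key point is that the neighbourhood of $s_{ij}$ is exactly $\{v_i,v_j\}\cup(W_{ij}\setminus\{s_{ij}\})$, which is contained in $U_{ij}$, so any $u\notin U_{ij}$ is non-adjacent to and distinct from $s_{ij}$, giving $d(s_{ij},u)\ge 2$; and if moreover $\tau(u)=2$, i.e.\ $u\in V_k\setminus\{v_k\}$ with $k\notin\{i,j\}$, a short check shows $s_{ij}$ and $u$ share no neighbour, so $d(s_{ij},u)\ge 3$.

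Given these three facts, the conclusion is a two-line deduction. If $u\in U_{ij}$, then $d(s_{ij},u)=\tau(u)\le d(s',u)$ for all $s'\in\Sigma$, so $u\in\cell(s_{ij},\Sigma)$. If $u\notin U_{ij}$, then, using that the $U_{kl}$ cover $V(G)$ (every subdivision vertex lies in some $W_{kl}$ since $E_H(V_i,V_j)=\emptyset$ whenever $ij\notin E(P)$, and every $V_i$ lies in some $U_{ij}$ since $P$ has minimum degree at least $2$), I fix $kl\in E(P)$ with $u\in U_{kl}$; then $d(s_{kl},u)=\tau(u)$ but $d(s_{ij},u)>\tau(u)$, so $u\notin\cell(s_{ij},\Sigma)$. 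Hence $\cell(s_{ij},\Sigma)=U_{ij}$ for every $ij\in E(P)$, and therefore $\VV(\Sigma)=\UU$.

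The main obstacle is the bookkeeping in the third fact when $\tau(u)=2$: one has to certify that there is no two-edge path from $s_{ij}$ to a vertex of a ``third'' colour class $V_k$ with $k\notin\{i,j\}$. The other facts are essentially immediate once the neighbourhoods of the sites are written out, since $s_{ij}$ only ``sees'' the three pieces $\{v_i\}$, $\{v_j\}$ and $W_{ij}$ of $U_{ij}$, and no two of the cliques $V_1,\dots,V_\ell$ and $\{W_{ij}\}_{ij\in E(P)}$ are adjacent across classes that are disjoint.
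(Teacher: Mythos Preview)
Your proposal is correct and takes essentially the same approach as the paper: both choose $s_{ij}=w(v_iv_j)$ and verify $\cell(s_{ij},\Sigma)=U_{ij}$ by the same case-by-case distance computations (the paper's table of distances $0,1,1,2,\ge 2,\ge 3$ is exactly your $\tau$-function together with Facts~2 and~3). Your $\tau$ is a tidy organizational device, but the underlying argument is identical.
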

\begin{proof}
	Assume that $P$ is isomorphic to a multicolored subgraph of $H$.
	This means that there are vertices $v_i\in V_i$, for all $i\in [\ell]$,
	such that $v_iv_j\in E(H)$ for every $ij\in E(P)$.
	This means that, for every $ij\in E(P)$, 
	the vertex $w(v_iv_j)$ obtained when subdividing $v_iv_j$ belongs to $G$.
	We define $s_{ij}=w(v_iv_j)$ for every $ij\in E(P)$ and 
	$\Sigma=\{ w(v_iv_j)\mid ij\in E(P)\}$.
	
	We claim that $\Sigma$ is a set of sites in $G$ such that 
	$\cell(s_{ij},\Sigma)= U_{ij}$, for every $ij\in E(P)$.
	This claim implies the lemma.

	For each $s_{ij}\in \Sigma$ and for each vertex $u$ of $G$ we have the following distances
	\begin{equation*}
		d_G\left(s_{ij},u\right) ~=~
		\begin{cases}
			0 & \text{if $u=s_{ij}$},\\
			1 & \text{if $u\in W_{ij}\setminus \{ s_{ij}\}$},\\
			1 & \text{if $u=v_i$ or $v=u_j$},\\
			2 & \text{if $u\in (V_i\cup V_j)\setminus \{ v_i,v_j\}$},\\
			\ge 2 & \text{if $u\in W_{i'j'}$ for some $i'j'\neq ij$},\\
			\ge 3 & \text{if $u\in V_t$ for some $t\neq i,j$}.
		\end{cases}
	\end{equation*}
	Therefore, each vertex in $V_1\cup\dots \cup V_\ell=V(H)$ is at distance at most $2$ from some vertex of $\Sigma$
	and each vertex in $\bigcup_{ij\in E(P)} W_{ij}$ is at distance at most $1$ from some vertex of $\Sigma$.
	
	Now we note that, for each $ij\in E(P)$,
	each vertex of $W_{ij}$ is strictly closer to $s_{ij}$ than to any other site.
	Furthermore, for each $i\in [\ell]$, each vertex of $V_i$ has the same distance 
	to each site $s_{ij'}$ with $ij'\in E(P)$,
	and a larger distance to each $s_{i'j'}$ with $i'j'\in E(P-i)$.
	Therefore $\cell(s_{ij},\Sigma)= W_{ij}\cup V_i \cup V_j = U_{ij}$.
	The result follows.
\end{proof}

\begin{lemma}
\label{le:backward}
	If $G$ has a set $\Sigma$ of sites such that 
	$\VV(\Sigma)=\UU$, then
	$P$ is isomorphic to a multicolored subgraph of $H$.
\end{lemma}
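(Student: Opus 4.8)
The plan is to reverse the analysis of Lemma~\ref{le:forward}: given a set $\Sigma$ of sites realizing $\VV(\Sigma)=\UU$, I will show that each site must lie on a subdivision vertex $w(v_iv_j)$ of an actual edge of $H$, and that the selections made across the different cells are mutually consistent, thereby producing the desired vertices $v_i \in V_i$. First I would fix, for each $ij \in E(P)$, the site $s_{ij}$ with $\cell(s_{ij},\Sigma) = U_{ij} = W_{ij}\cup V_i\cup V_j$. By Lemma~\ref{le:starshaped}-style reasoning (here applied to $G$, not a tree — but the key fact is simply that $s_{ij}\in \cell(s_{ij},\Sigma)$ by Lemma~\ref{le:strict}), each $s_{ij}$ lies in $U_{ij}$, so $s_{ij}\in W_{ij}$, or $s_{ij}\in V_i$, or $s_{ij}\in V_j$.

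The first key step is to rule out $s_{ij}\in V_i\cup V_j$. Suppose $s_{ij}\in V_i$. Since every vertex of $P$ has degree at least $2$, there is another index $j'$ with $ij'\in E(P)$, $j'\neq j$; then $V_i\subseteq U_{ij'}$ as well, so $s_{ij}$ and $s_{ij'}$ are both candidate sites, and $V_i$ lies in both $\cell(s_{ij},\Sigma)$ and $\cell(s_{ij'},\Sigma)$. The vertices $w(e)\in W_{ij}$ for $e$ incident to $s_{ij}$ are at distance $1$ from $s_{ij}$ but (one checks) at distance $\geq 2$ from $s_{ij'}$ unless $s_{ij'}$ is the same subdivision vertex, which is impossible since $W_{ij}\cap W_{ij'}=\varnothing$; more simply, the clique $W_{ij}$ must be dominated from within $U_{ij}$ and being dominated by a vertex of $V_i$ at distance $2$ contradicts the requirement that $W_{ij}\subseteq\cell(s_{ij},\Sigma)$ once one exhibits a competing site at distance $\le 1$ — here I would use that some $s_{jt}$ or the structure forces a closer witness. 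I expect this case analysis to be the main obstacle: one must carefully verify, using the explicit distance table, that placing a site in $V_i$ (rather than on a subdivision vertex) always causes some vertex of $W_{ij}$ to be claimed by a competing cell, or causes $W_{ij}\not\subseteq\cell(s_{ij},\Sigma)$. So $s_{ij}=w(e_{ij})$ for some edge $e_{ij}\in E_H(V_i,V_j)$.

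The second key step is consistency: writing $e_{ij}=a_{ij}b_{ij}$ with $a_{ij}\in V_i$, $b_{ij}\in V_j$, I claim $a_{ij}$ depends only on $i$, not on $j$. Indeed $d_G(s_{ij},a_{ij})=1$ while $d_G(s_{ij},v)=2$ for every other $v\in V_i$; since $V_i\subseteq\cell(s_{ij},\Sigma)$ and also $V_i\subseteq\cell(s_{ij'},\Sigma)$ for every $ij'\in E(P)$, and these cells must be \emph{equal} on $V_i$ (they all contain it), compatibility (the three conditions: equality of distances on $U_{ij}\cap U_{ij'}\supseteq V_i$) forces $d_G(s_{ij},v)=d_G(s_{ij'},v)$ for all $v\in V_i$; hence the unique nearest vertex of $V_i$ to $s_{ij}$ equals that to $s_{ij'}$, i.e.\ $a_{ij}=a_{ij'}=:v_i$. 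Then for each $ij\in E(P)$ we have $s_{ij}=w(v_iv_j)$, so $v_iv_j\in E(H)$, which exhibits $P$ as a multicolored subgraph of $H$, completing the proof.
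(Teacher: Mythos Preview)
Your second step (consistency across cells sharing a class $V_i$) is essentially the paper's argument and is fine. The problem is your first step: you only extract from Lemma~\ref{le:strict} that $s_{ij}\in\cell(s_{ij},\Sigma)=U_{ij}$, and then you embark on a case analysis to rule out $s_{ij}\in V_i\cup V_j$ that you yourself flag as incomplete. The paper avoids this detour entirely by using the full statement of Lemma~\ref{le:strict}, namely $s_{ij}\in\cell^{<}(s_{ij},\Sigma)$, the \emph{open} cell. Since every vertex of $P$ has degree at least $2$, each $V_i$ is contained in at least two of the sets $U_{ij}$, hence
\[
\cell^{<}(s_{ij},\Sigma)=U_{ij}\setminus\bigcup_{i'j'\neq ij}U_{i'j'}=W_{ij},
\]
and $s_{ij}\in W_{ij}$ falls out in one line. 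Equivalently: if $s_{ij}\in V_i$, then $s_{ij}\in U_{ij'}=\cell(s_{ij'},\Sigma)$ for some $j'\neq j$, forcing $d(s_{ij'},s_{ij})\le d(s_{ij},s_{ij})=0$, which is impossible for distinct sites. This replaces your entire ``main obstacle'' with a triviality; once you plug it in, your outline matches the paper's proof.
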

\begin{proof}
	Let $\Sigma$ be a set of sites in $G$ such
	that $\VV(\Sigma)=\UU$. For each $ij\in E(P)$,
	let $s_{ij}$ be the site of $\Sigma$ with $\cell(s_{ij},\Sigma)=U_{ij}$.
	
	Because of Lemma~\ref{le:strict},
	each $s_{ij}\in \Sigma$ belongs to 
	\[
		\cell^<(s_{ij},\Sigma) ~=~ \cell(s_{ij},\Sigma)\setminus 
			\left(\bigcup_{s\in \Sigma\setminus\{ s_{ij}\}} \cell(s,\Sigma)\right)
		~=~ U_{ij}\setminus \left(\bigcup_{i'j'\in E(P)\setminus\{ ij \}} U_{i'j'} \right) ~=~
		W_{ij}.
	\]
	In the last equality we have used that each vertex of $P$ has degree at least $2$,
	which means that each $V_i$ is contained in at least $2$ sets $U_{ij}$ of $\UU$.
	We conclude that, for each $ij \in E(P)$,
	the site $s_{ij}$ must be in $W_{ij}$.
	
	Since each site $s_{ij}$ is in $W_{ij}$, for each $ij \in E(P)$,
	the construction of $G$ implies that
	there are unique vertices $v(i,ij)\in V_i$ and $v(j,ij)\in V_j$
	such that $s_{ij}$ is the vertex obtained when subdividing the edge connecting
	$v(i,ij)$ and $v(j,ij)$. In particular, $v(i,ij)v(j,ij)$ is an edge of $E(H)$.

	Fix the index $i\in [\ell]$ and consider two edges $ij, ij'\in E(P)$ incident to $i$.
	We must have $v(i,ij)=v(i,ij')$, as otherwise we would have
	$d_G(s_{ij},v(i,ij))=1 < 2 = d_G(s_{ij'},v(i,ij'))$,
	which would imply that $V_i\not\subset \cell(s_{ij'},\Sigma)=U_{ij'}$ and would contradict
	the definition of $U_{ij'}=V_i\cup V_{j'}\cup W_{ij'}$.
	Therefore, each of the (three) edges $ij$ of $E(P)$ defines the same vertex $v(i,ij)\in V_i$.
	Henceforth we denote this vertex by $v_i$.
	
	We have found $\ell$ vertices $v_1,\dots,v_\ell$ with the property 
	that $v_i\in V_i$, for each $i\in [\ell]$, and such that
	the edge $v_iv_j= v(i,ij)v(j,ij)$ in $E(H)$, for each $ij\in E(P)$. 
	This means that 
	$P$ is isomorphic to the multicolored subgraph of $H$ defined by $\{v_1,\dots,v_\ell\}$.
\end{proof}

\begin{proof}[Proof of Theorem~\ref{thm:W1hardness}]
	As shown in Lemmas~\ref{le:forward} and~\ref{le:backward},
	$H$ has a multicolored subgraph isomorphic to $P$ if and only if $\UU$ is a valid
	Voronoi diagram of $G$.
	Thus, the answers to \textsc{Multicolored Subgraph Isomorphism}$(H,P)$ and
	\textsc{Graphic Inverse Voronoi}$(G,\UU)$ are the same.
	
	Recall that $\UU$ has $|E(P)|=\Theta(\ell)$ candidate Voronoi regions.
	If we could solve each instance of the
	\textsc{Graphic Inverse Voronoi} problem with $n$ vertices and $k$
	sites in time $f(k)n^{o(k/\log k)}$, for some computable function $f$,
	then we could solve the instance $(G,\UU)$ in
	\[
		f(|\UU|)\cdot |V(G)|^{o(|\UU|/\log \UU|)} ~\le~ 
		f(\Theta(\ell))(\Theta(|V(H)|^2))^{o(\Theta(\ell)/\log (\Theta(\ell)))} \le
		g(\ell)|V(H)|^{o(\ell/\log \ell)}
	\]
	time, for some computable function $g$. 
	However, this also means that we could solve the 
	\textsc{Multicolored Subgraph Isomorphism} in $H$ with pattern $P$
	in $g(\ell)|V(H)|^{o(\ell/\log \ell)}$ time,
	and this contradicts the Exponential Time Hypothesis.
\end{proof}

\section{Hardness parameterized by the pathwidth and the treewidth}
\label{sec:treewidth}

In this section we show that the \textsc{Graphic Inverse Voronoi} problem is unlikely to 
be fixed parameter tractable with respect to the pathwidth of the graph. Since
the pathwidth is always smaller than the treewidth, this implies the same result
for the treewidth.
More precisely, in this section we will prove the following.

\begin{theorem}
\label{thm:pathwidth}
  The \textsc{Graphic Inverse Voronoi} problem is \textup{W[1]}-hard 
  parameterized by the pathwidth of the input graph.
  Furthermore, for $n$-vertex graphs with pathwidth $p$, 
  there is no algorithm to solve the \textsc{Graphic Inverse Voronoi} problem
  in time $f(p)n^{o(p)}$ for any computable function $f$,
  unless the Exponential Time Hypothesis fails.
  The claims hold even for graphs with unit edge-lengths and disjoint candidate Voronoi cells.
\end{theorem}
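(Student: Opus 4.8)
The plan is to give a polynomial-time reduction from the \textsc{Grid Tiling} problem. An instance of \textsc{Grid Tiling} consists of integers $k$ and $n$ and, for each $(i,j)\in[k]\times[k]$, a set $S_{i,j}\subseteq[n]\times[n]$; the task is to choose $(a_{i,j},b_{i,j})\in S_{i,j}$ for every $(i,j)$ so that $a_{i,j}=a_{i,j+1}$ for all $i$ and all $j<k$, and $b_{i,j}=b_{i+1,j}$ for all $j$ and all $i<k$. It is known (see, e.g., \cite{CyganFKLMPPS15}) that \textsc{Grid Tiling} is \textup{W[1]}-hard parameterized by $k$ and, unless the ETH fails, cannot be solved in time $f(k)\,n^{o(k)}$ for any computable $f$. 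Since a $k\times k$ grid has pathwidth $\Theta(k)$, this is the natural source: I will construct, from $(k,n,(S_{i,j}))$, an equivalent instance $(G,\UU)$ of \textsc{Graphic Inverse Voronoi} with unit edge-lengths and pairwise \emph{disjoint} candidate cells such that $|V(G)|=\mathrm{poly}(n,k)$ and the pathwidth of $G$ is $O(k)$. Feeding these two bounds into a hypothetical $f(p)\,|V(G)|^{o(p)}$-time algorithm would yield an $f'(k)\,n^{o(k)}$-time algorithm for \textsc{Grid Tiling}, and an FPT algorithm would give one for its W[1]-hard parameterization by $k$; both contradict the stated hardness, so both halves of Theorem~\ref{thm:pathwidth} follow.

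The graph $G$ consists of one \emph{cell gadget} $Q_{i,j}$ per grid position, laid out in a $k\times k$ grid pattern, plus \emph{consistency gadgets} attached along the horizontal and vertical adjacencies. Each $Q_{i,j}$ carries one candidate cell whose (unique) site position encodes an element of $S_{i,j}$: its ``choice vertices'' are in bijection with $S_{i,j}$ and are strung on a thin internal structure (a path or a small tree) so that the cell presents a constant-size interface to the rest of $G$. Because the cells are disjoint, the compatibility condition from Section~\ref{sec:basic} simplifies to: each site must be \emph{strictly} closer to every vertex of its own cell than any other site is; the whole design is steered by this condition, exactly as the pendant-vertex trick in the proof of Theorem~\ref{thm:PGIVhardness} forces an unwanted vertex into a cell whenever a site is misplaced. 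The horizontal consistency gadget between $Q_{i,j}$ and $Q_{i,j+1}$ is a further thin piece with its own candidate cell(s), attached to the two choice structures and designed so that it has the required Voronoi diagram if and only if the two chosen elements agree on the first coordinate; the vertical gadget is symmetric and tests the second coordinate. Correctness is then two case analyses in the style of Lemmas~\ref{le:forward} and~\ref{le:backward}: a valid tiling yields a site placement for which one checks $\VV(\Sigma)=\UU$ by computing the distance from each site to each vertex, and conversely any solution must place each $Q_{i,j}$'s site on a choice vertex coding an element of $S_{i,j}$ and must pass every consistency test, hence reads off a valid tiling.

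The pathwidth bound is where I expect the main difficulty, and it is exactly what forces the gadgets to be ``thin''. I would bound $p(G)$ using the node-searching game of Section~\ref{sec:basic}: sweep the grid column by column, permanently keeping $O(k)$ cleaners on the $O(1)$-size interfaces of the gadgets of the current and previous column, and, when a gadget is entered, clean its $O(\mathrm{poly}(n))$ internal vertices with $O(1)$ additional cleaners because its interior is a path/tree; no recontamination then reaches the already-cleaned part. This gives $p(G)=O(k)$. The hard part is therefore the simultaneous design of the cell and consistency gadgets meeting \emph{all} of the following: unit edge-lengths only (so every ``value'' and every equality test must be realized through graph distances, and one must rule out unintended equidistances among the quadratically many cross-pairs of vertices), pairwise disjoint candidate cells (so consistency tests cannot be implemented by identifying vertices of two cells but require dedicated cells), a faithful encoding of ``choose an element of $S_{i,j}$'' together with the two coordinate-equality checks, and constant interface width plus thin interior so that $p(G)$ stays $O(k)$. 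Once such gadgets are fixed, the rest is bookkeeping, and the resulting $|V(G)|^{o(p)}$ lower bound matches the $|V(G)|^{O(p)}$-time algorithm of Theorem~\ref{thm:PGIV-XPtw}, showing the exponent is essentially optimal.
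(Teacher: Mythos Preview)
Your plan reduces from \textsc{Grid Tiling} rather than from \textsc{Multicolored Independent Set} as the paper does; both are legitimate \textup{W[1]}-hard sources with the right ETH bound, and your column-sweep argument for the pathwidth is the same cleaning-game idea the paper uses. But there is a genuine gap, and you name it yourself: the gadgets are never actually designed. You list the constraints they must satisfy and then write ``once such gadgets are fixed, the rest is bookkeeping.'' Designing them \emph{is} the proof. In particular, your horizontal consistency gadget must test that two chosen pairs in $[n]\times[n]$ agree on \emph{one} coordinate---a projection followed by an equality test---realized solely through unit-edge Voronoi constraints, with disjoint cells and a constant-size interface. You give no construction for this, and it is not obvious one exists with all of those properties simultaneously.

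The paper sidesteps the projection difficulty by choosing a source problem that needs only whole-value equality. Reducing from \textsc{Multicolored Independent Set} with $\ell$ colour classes of size $t$, it builds $\ell$ rows and $m=|E(H)|$ columns. In row $i$, the choice vertices $v(i,j,1),\dots,v(i,j,t)$ are placed at staggered distances $t+1,\dots,2t$ from two boundary vertices $a(i,j)$ and $b(i,j)$, and a single edge $b(i,j)a(i,j+1)$ links consecutive columns; the only way both boundary vertices lie strictly closer to their own cell's site is that the \emph{same} index is chosen in columns $j$ and $j+1$. A further vertex $c(i,j)$ at uniform distance $t$ pins the site to the choice set, and one edge gadget per column encodes a single edge of $H$ and is satisfiable precisely when that edge is not spanned by the chosen vertices. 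This staggered-distance propagation trick is the concrete mechanism your outline is missing; if you still wish to go via \textsc{Grid Tiling}, you would need a variant that exposes only one coordinate of the chosen pair at each interface, and that remains to be supplied.
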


In order to show that, we will reduce the following \textup{W[1]}-hard problem.
\begin{quote}
	\textsc{Multicolored Independent Set}\\
	Input: A graph $H=(V,E)$ whose vertex set $V$ is partitioned into $\ell$ pairwise
		disjoint sets $V_1 \uplus \cdots \uplus V_\ell$.\\
	Question: Is there an independent set $X$ of size $\ell$ in $H$ 
		such that $|X \cap V_i|=1$, $\forall i \in [\ell]$?
\end{quote}

An independent set of $H$ is said \emph{multicolored} if it satisfies $|X \cap V_i|=1$, $\forall i \in [\ell]$. 
The \textsc{Multicolored Independent Set} problem is \textup{W[1]}-hard with respect to $\ell$ 
and cannot be solved in time $f(\ell)n^{o(\ell)}$ for any computable function $f$, assuming 
the Exponential Time Hypothesis~\cite[Corollary 14.23]{CyganFKLMPPS15}.
The lower bounds still hold if all the partite sets $V_i$ have the same cardinality
and there are no edges connecting any two vertices within a set $V_i$.

Let $H=(V_1 \uplus \cdots \uplus V_\ell, E)$ be an instance 
of \textsc{Multicolored Independent Set} such that $|V_1| = |V_2| = \cdots = |V_\ell|=t$.
Let $m$ be the number of edges in $H$.
We build an equivalent \textsc{Graphic Inverse Voronoi} instance $(G,\UU)$ 
where the pathwidth of $G$ is $\Theta(\ell)$.
This instance will have unit edge-length edges and the sets in $\UU$ will be pairwise disjoint.

Our global strategy for the reduction is to propagate a vertex choice 
in each $V_i$ with a \emph{path-like} structure with $\ell$ \emph{rows} 
and $m=|E|$ \emph{columns}.
In each column, we introduce a single distinct edge of $E$ 
so that the pathwidth of the built graph stays in $\Theta(\ell)$. 
Figure~\ref{fig:treewidth1} shows
the whole reduction in the graph $H$ of Figure~\ref{fig:graphH}. 
(Seeing the details requires zooming in.)
In Figure~\ref{fig:treewidth2} we show a part of the construction in detail showing also the
notation we employ.
The detailed construction is as follows.  

\begin{figure}
	\centering
	\includegraphics[page=1,scale=.75]{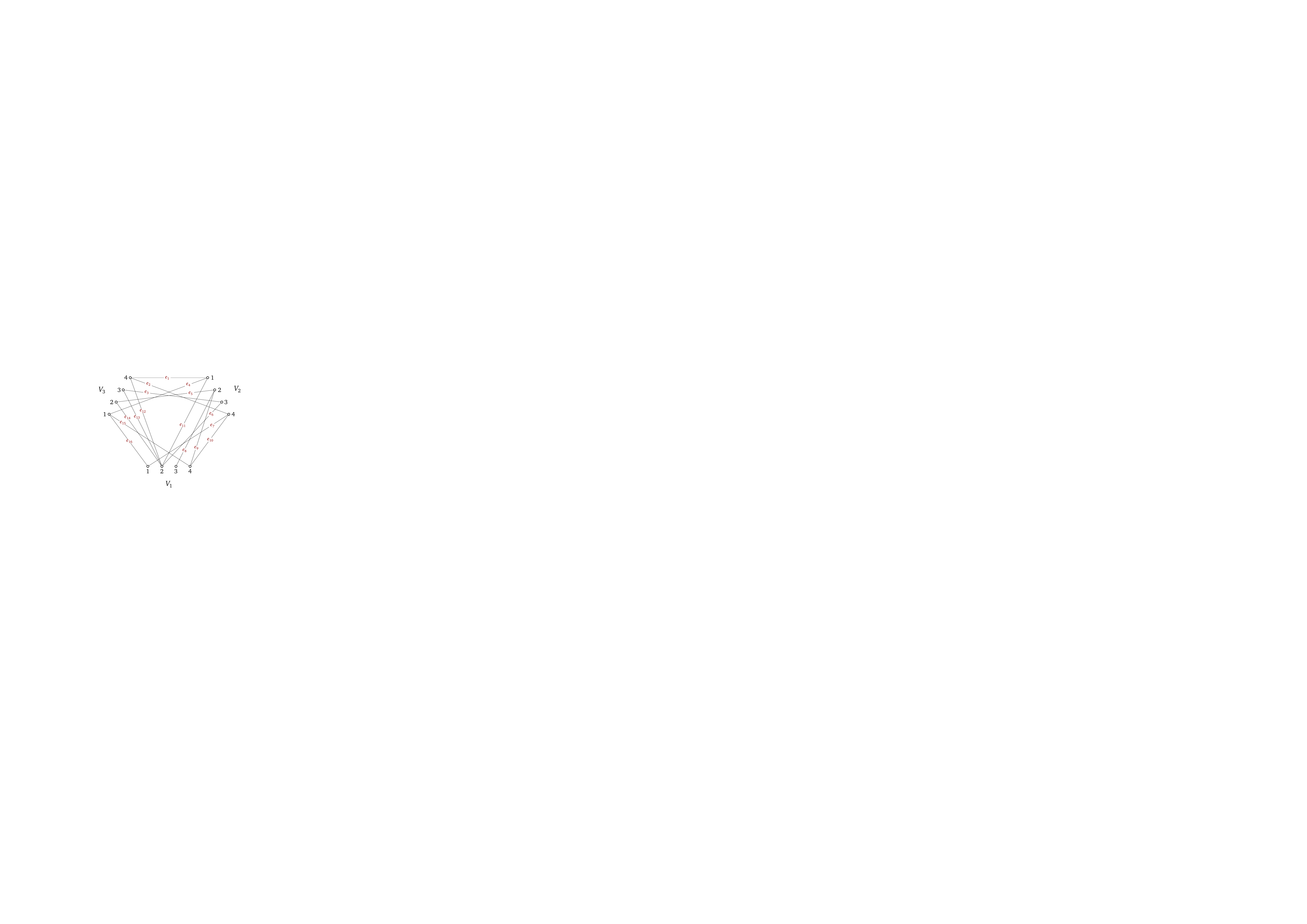}
	\caption{A graph $H$ whose vertex set is partitioned into $\ell=3$ partite sets, 
		each with $t=4$ vertices.}
	\label{fig:graphH}
\end{figure}

\begin{figure}
	\centering
	\includegraphics[page=2,angle=90,height=.9\textheight]{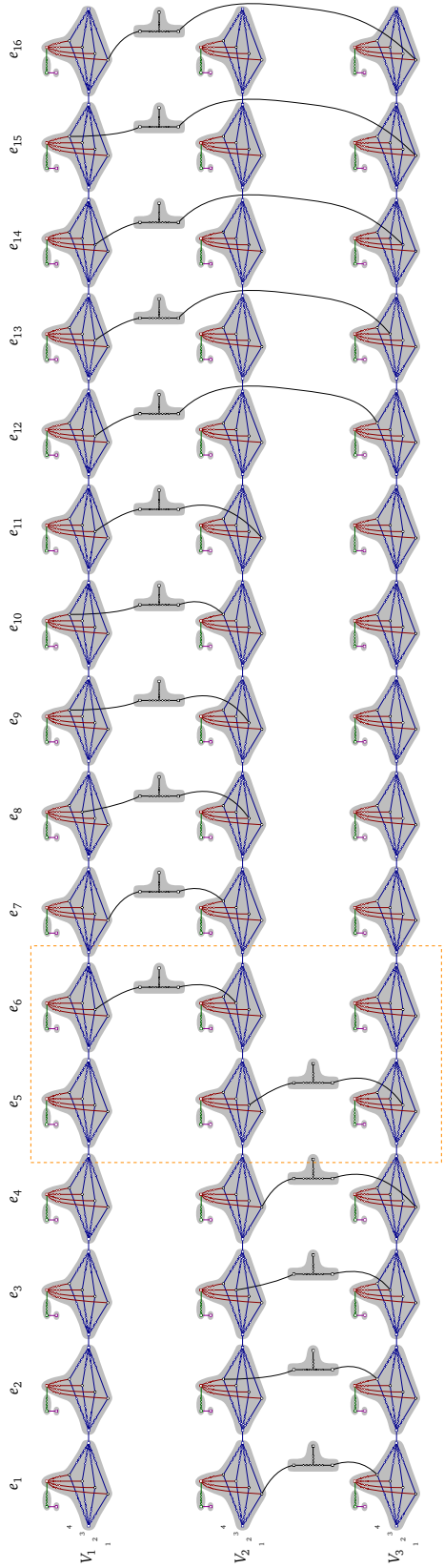}
	\caption{Whole graph showing the reduction used to prove Theorem~\ref{thm:pathwidth} 
	for the graph $H$ in Figure~\ref{fig:graphH}. Each connected gray area corresponds to one
	candidate Voronoi region. Figure~\ref{fig:treewidth2} shows details 
	for a part of the construction.}
	\label{fig:treewidth1}
\end{figure}

\begin{figure}
	\centering
	\includegraphics[page=3,height=.7\textheight]{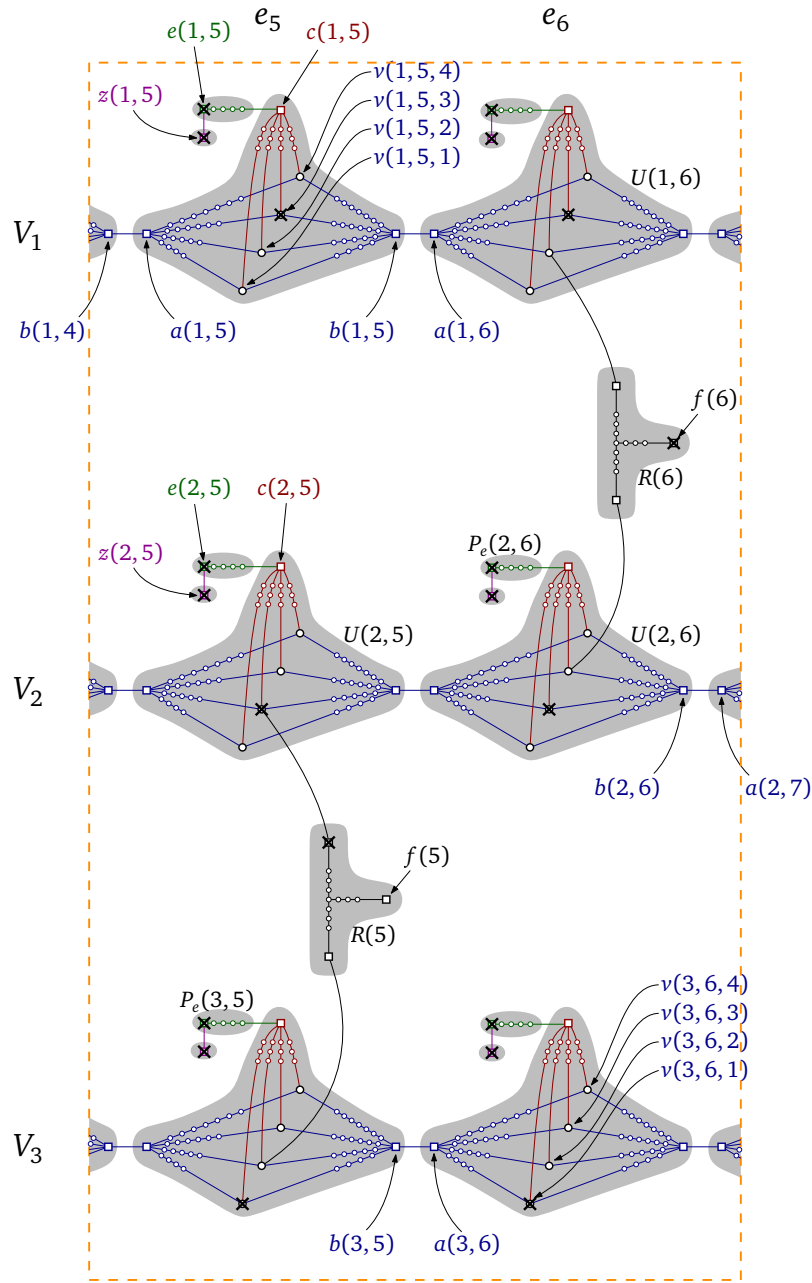}
	\caption{Zoom into a part of the reduction shown in Figure~\ref{fig:treewidth1}
		with some notation. 
		Each connected gray area corresponds to one candidate Voronoi region.
		Some selection of sites marked with crosses that is \emph{locally} correct 
		(but globally would have a problem).
		This selection corresponds to selecting vertex 3 of $V_1$, vertex of 2 of $V_2$ 
		and vertex $1$ of $V_3$.}
	\label{fig:treewidth2}
\end{figure}

\begin{itemize}
	\item
	For each $i\in [\ell]$ and $j\in [m]$, we add to $V(G)$ 
	an independent set $I(i,j)$ of size $|V_i|=t$.
	The vertices of the independent set $I(i,j)$ are denoted by $v(i,j,1)$ to $v(i,j,t)$, 
	the third index being in one-to-one correspondence with the vertices of $V_i$.
	\item 
	For each $i\in [\ell]$ and $j \in [m]$, 
	we add two vertices $a(i,j)$ and $b(i,j)$.
	Furthermore, for each $h \in [t]$, we connect $a(i,j)$ to $v(i,j,h)$ 
	by a private path $P_a(i,j,h)$ of length $t+h$, 
	and we connect $b(i,j)$ to $v(i,j,h)$ by a private path $P_b(i,j,h)$ of length $t+h$.
	For each $i\in [\ell]$ and $j \in [m-1]$, we connect $b(i,j)$ and $a(i,j+1)$ by an edge.
	\item
	For each $i\in [\ell]$ and $j \in [m]$, 
	we add three new vertices $c(i,j)$, $e(i,j)$ and $z(i,j)$.
	For each $h\in [t]$, we add a private path $P_c(i,j,h)$ 
	of length $t$ between $v(i,j,h)$ and $c(i,j)$. 
	Furthermore, we connect $e(i,j)$ and $z(i,j)$ with an edge
	and add a path $P_e(i,j)$ of length $t$ with one extreme on $e(i,j)$ 
	and the other extreme connected through an edge to $c(i,j)$. 
	(Thus $e(i,j)$ and $c(i,j)$ are connected with a path of length $t+1$.)
	\item
	For each $i\in [\ell]$ and $j \in [m]$,
	we denote by $U(i,j)$ the set of vertices comprising $I(i,j)$ and 
	all the paths going from this independent set to $a(i,j)$, $b(i,j)$,
	and $c(i,j)$, including those three vertices. 
	We add $U(i,j)$, $V(P_e(i,j))$, 
	and $Z(i,j)=\{z(i,j)\}$ to the candidate Voronoi cells $\UU$.
	\item
	We call the set $\bigcup_{i=1}^\ell U(i,j)\cup P_e(i,j)\cup Z(i,j)$ the \DEF{$j$-th column}, for a fixed $j \in [m]$.
	We introduce exactly one distinct edge of $E$ per column.
	Let $e_1,\ldots, e_m$ be any ordering of the edges of $E$.
	We put an edge gadget encoding $e_j$ in the $j$-th column, for every $j \in [m]$.
	Assume that $e_j$ is an edge between the $h$-th vertex of $V_i$ and the $h'$-th vertex of $V_{i'}$
	where $i\neq i'$.
	We add a path $P(e_j)$ of length $2t+2$ between $v(i,j,h)$ and $v(i',j,h')$. 
	We add a path $Q(e_j)$ of length $t$ between the middle vertex of $P(e_j)$
	and a new vertex, denoted $f(j)$. (The vertex $f(j)$ has degree 1 and it 
	is at distance $2t+1$ from $v(i,j,h)$ and from $v(i',j,h')$.)
	We add $R(j)=V(Q(e_j)) \cup V(P(e_j)) \setminus \{ v(i,j,h), v(i',j,h')\}$ 
	as a candidate Voronoi region to $\UU$. 
	The subgraph induced by $R(j)$ is the \DEF{edge gadget} of $e_j$.
\end{itemize}

That finishes the construction of $G=G(H)$ and of $\UU = \UU(H)$.
All the edges of $G$ have unit length.
One can observe that $\UU$ is made of pairwise disjoint sets and it contains
$(3\ell+1) m$ candidate Voronoi cells.
We first show that the pathwidth (and thus also the treewidth) of $G$ is at most $2\ell+5$.
For that, we use the cleaning game characterization of pathwidth presented in Section~\ref{sec:basic}.

\begin{lemma}
\label{le:pathwidth1}
    The pathwidth of $G$ is at most $2\ell+5$.
\end{lemma}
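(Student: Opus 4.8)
\emph{Proof plan.}
I will use the node-searching characterisation of pathwidth recalled in Section~\ref{sec:basic}: it is enough to describe a search strategy cleaning all edges of $G$ that never keeps more than $2\ell+6$ cleaners on the graph simultaneously, since then the node-searching number of $G$ is at most $2\ell+6$ and hence the pathwidth of $G$ is at most $2\ell+5$. The key structural fact is that $G$ is ``column-separable'': the only edges joining the $j$-th column to the rest of $G$ are the $\ell$ edges $a(i,j)b(i,j-1)$ (to column $j-1$) and the $\ell$ edges $b(i,j)a(i,j+1)$ (to column $j+1$), and inside the $j$-th column the $\ell$ rows are pairwise vertex-disjoint except for the single edge gadget of $e_j$ --- the path $P(e_j)$ between $v(i,j,h)$ and $v(i',j,h')$ together with the pendant path $Q(e_j)$ from the midpoint $\mu$ of $P(e_j)$ to $f(j)$ --- which joins the two rows $i,i'$, where $e_j$ has one endpoint in $V_i$ and one in $V_{i'}$; call these two rows the \emph{gadget rows} of column $j$. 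The strategy sweeps the columns from left to right, maintaining after column $j$ has been processed the invariant: all edges inside columns $1,\dots,j$ are clean, all other edges are contaminated, and cleaners sit exactly on $b(1,j),\dots,b(\ell,j)$. This is self-sustaining because any cleaner-free path leaving columns $1,\dots,j$ must cross one of the occupied vertices $b(i,j)$.

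To advance from column $j-1$ to column $j$, I add $\ell$ cleaners, one at a time, on $a(1,j),\dots,a(\ell,j)$ (momentarily $2\ell$ cleaners are present, and the edges $a(i,j)b(i,j-1)$ become clean), and then remove the cleaners from $b(1,j-1),\dots,b(\ell,j-1)$; nothing is recontaminated, since the freed vertices reach through cleaner-free paths only the already clean columns $1,\dots,j-1$ and are blocked by the occupied $a(i,j)$'s. I then clean the interior of column $j$ while keeping $a(1,j),\dots,a(\ell,j)$ occupied, processing the $\ell$ rows one after another and, when a row $r$ is finished, leaving a cleaner on $b(r,j)$. For a non-gadget row $r$: put cleaners on $b(r,j)$ and $c(r,j)$; then for $h=1,\dots,t$ put a cleaner on $v(r,j,h)$ and use one extra ``runner'' cleaner to sweep, one at a time, the three internally disjoint paths $P_a(r,j,h)$, $P_b(r,j,h)$, $P_c(r,j,h)$ --- each of them has both endpoints occupied, so a single runner cleans it with no recontamination --- and then release $v(r,j,h)$; finally sweep with the runner the path joining $c(r,j)$ to the leaf $z(r,j)$ (which passes through $e(r,j)$) and release $c(r,j)$, $e(r,j)$, $z(r,j)$, keeping only $b(r,j)$.

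A gadget row is handled in the same way, except that the gadget is cleaned in two phases using the midpoint $\mu$ of $P(e_j)$: while processing row $i$, at the index $h$ with $v(i,j,h)$ an endpoint of $P(e_j)$, also put a cleaner on $\mu$, sweep with the runner the half of $P(e_j)$ between $v(i,j,h)$ and $\mu$, release $v(i,j,h)$ but keep $\mu$ occupied; later, while processing row $i'$, the runner sweeps the half of $P(e_j)$ between $\mu$ and $v(i',j,h')$ and then the pendant $Q(e_j)$ down to the degree-one vertex $f(j)$, after which $\mu$ and $v(i',j,h')$ are released. Counting cleaners at any moment: at most $\ell$ on the wall $a(1,j),\dots,a(\ell,j)$, at most $\ell-1$ on the $b$-vertices of already-finished rows while some row is still unfinished, and at most $5$ further cleaners (the two guards $b(r,j)$ and $c(r,j)$, one vertex $v(r,j,h)$, the held midpoint $\mu$, and the runner), while during the column-advance step the count is only $2\ell$; so at most $2\ell+4\le 2\ell+6$ cleaners are ever used, which proves the lemma. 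The only points that genuinely have to be checked --- routine, but they are the substance of the argument --- are that whenever a cleaner is removed (a vertex $v(r,j,h)$, the midpoint $\mu$, a vertex $c(r,j)$, or the old wall $b(\cdot,j-1)$) every vertex still reachable from it by a cleaner-free path lies in the part of $G$ that is already clean, so no edge is recontaminated; this is exactly where the wall $a(1,j),\dots,a(\ell,j)$, the occupied hubs $b(r,j)$ and $c(r,j)$, and the held midpoint $\mu$ are used. The main obstacle in writing this out is precisely the bookkeeping caused by the edge gadget of $e_j$ straddling two distinct rows of the same column, which is why $P(e_j)$ must be swept in two phases through the held vertex $\mu$.
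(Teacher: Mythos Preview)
Your proof is correct and, like the paper, uses the node-searching characterisation of pathwidth; however, your cleaning strategy is genuinely different from the paper's. The paper, at the start of each column~$j$, occupies simultaneously all $2\ell$ vertices $a(i,j),b(i,j)$ together with the two gadget endpoints $v(i,j,h)$ and $v(i',j,h')$, and then cleans each row (and the edge gadget, which becomes a subdivided claw once its endpoints are held) with four further cleaners, for a total of $2\ell+6$. You instead hold only the $\ell$ vertices $a(i,j)$ as a wall, add each $b(r,j)$ only when row~$r$ is finished, and handle the edge gadget by parking a cleaner on the midpoint $\mu$ of $P(e_j)$ between the two gadget rows; this gives the tighter count $2\ell+4$. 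Your approach is more economical but requires more bookkeeping (the two-phase sweep of $P(e_j)$ through $\mu$ and the careful check that releasing $v(r,j,h)$, $c(r,j)$, $\mu$, and the old wall never causes recontamination), whereas the paper's strategy is coarser but conceptually simpler: once all of $X_j$ is held, every row and the gadget are just disjoint unions of paths and subdivided claws. Both strategies prove the lemma; yours would in fact yield pathwidth at most $2\ell+3$.
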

\begin{proof}
	We present a winning strategy for cleaning $G$ using $2\ell+6$ cleaners. 
	We make $m$ rounds where in the $j$-th round, $j=1,\dots,m$, we scan completely the $j$-th column
	and the gadget for $e_j$.

	At the start of the $j$-th round we have $2\ell$ cleaners placed at the vertices $a(i,j)$ and $b(i,j)$ for all $i\in [\ell]$.
	Assume that the edge $e_j \in E$ is between the $h$-th vertex of $V_i$ and the $h'$-th vertex of $V_{i'}$.
	We place two cleaners at $v(i,j,h)$ and $v(i',j,h')$.
        Let $X_j$ be the set of $2\ell+2$ vertices where we have cleaners.
        They will stay there for most of the $j$-th round. 

        We then clean the whole $j$-th column plus the edge gadget of $e_j$ using the remaining four cleaners.
        For every $i \in [\ell]$, we define $\mathcal G(i,j) = U(i,j) \cup P_e(i,j) \cup Z(i,j)$, the intersection of the $j$-th column with the \emph{$i$-th row}.
        For every $i$ going from 1 to $\ell$, we place a $2\ell+3$-rd cleaner at $c(i,j)$.
        All the connected components of $\mathcal G(i,j)$ in $G - (X_j \cup \{c(i,j)\})$ are paths or subdivisions of a claw (i.e., a star with three leaves).
        These graphs, and more generally, disjoint unions of subdivided stars can be decontaminated with three cleaners.
        Place a first cleaner at the center of the first subdivided star.
        This disconnects it into a disjoint union of paths, that can be cleaned with two additional cleaners.
        When all the paths are cleaned, move the first cleaner to the center of the second subdivided star and proceed similarly with the remaining connected components.
        When all the paths and subdivided claws of $\mathcal G(i,j)$ have been decontaminated, we remove the $2\ell+3$-rd cleaner (to place it at $c(i+1,j)$ at the next iteration).
        We proceed similary, up to $\mathcal G(\ell,j)$.
        Finally we observe that the edge gadget is also a subdivided claw, so it can be handled with three additional cleaners.
        
	At this point, the remaining toxic gas is confined \emph{to the right} of the $\ell$ cleaners placed at $b(1,j), \ldots, b(\ell,j)$, that is, on some edge incident to some vertex defined by $j'>j$. 
	If $j=m$, we are done since there can no longer be contaminated edges.
	Otherwise, we move the cleaners from $a(1,j), \ldots, a(\ell,j)$ to $a(1,j+1), \ldots, a(\ell,j+1)$, then the cleaners from $b(1,j), \ldots, b(\ell,j)$ to $b(1,j+1), \ldots, b(\ell,j+1)$, and start the next round.
\end{proof}
  
We now show the correctness of the reduction.
\begin{lemma}
 \label{le:pathwidth2}
	If $H$ has a multicolored independent set of size $\ell$, 
	then there is a set $\Sigma \subseteq V(G)$ such that $\VV_G(\Sigma)=\UU$. 
\end{lemma}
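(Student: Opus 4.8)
The plan is to produce $\Sigma$ explicitly from the independent set and then verify it cell by cell. Let $X$ be a multicolored independent set of $H$, and for each $i\in[\ell]$ let $h_i\in[t]$ be the index with the $h_i$-th vertex of $V_i$ in $X$. I would assign to each candidate cell one site: to $U(i,j)$ the vertex $v(i,j,h_i)$; to $V(P_e(i,j))$ the vertex $e(i,j)$; to $Z(i,j)=\{z(i,j)\}$ the vertex $z(i,j)$ itself; and to the edge gadget $R(j)$ of the edge $e_j$ joining the $h$-th vertex of $V_i$ to the $h'$-th vertex of $V_{i'}$ — where, since $X$ is independent, we \emph{cannot} have both $h=h_i$ and $h'=h_{i'}$ — the vertex $f(j)$ if $h\neq h_i$ and $h'\neq h_{i'}$, and otherwise, say if $h=h_i$, the unique vertex of $R(j)$ lying on $P(e_j)$ adjacent to $v(i,j,h)$. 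Let $\Sigma$ be the resulting set of sites. Because the sets in $\UU$ are pairwise disjoint and cover $V(G)$, to prove $\VV_G(\Sigma)=\UU$ it suffices (using the compatibility discussion of Section~\ref{sec:basic}) to show, for each candidate cell $C$ with assigned site $s_C$, that every vertex of $C$ is strictly closer to $s_C$ than to every other site of $\Sigma$, and every vertex outside $C$ with a neighbor in $C$ is strictly closer to some site other than $s_C$.

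The verification rests on a recurring ``slack of one'' phenomenon. Each candidate cell touches the rest of $G$ only through a few boundary edges — $U(i,j)$ through $a(i,j)b(i,j-1)$, $b(i,j)a(i,j+1)$, the edge joining $c(i,j)$ to $P_e(i,j)$, and at most one edge into the gadget of column $j$; $V(P_e(i,j))$ through $e(i,j)z(i,j)$ and the $c(i,j)$-edge; $Z(i,j)$ through $e(i,j)z(i,j)$; $R(j)$ through its two leg-ends next to $v(i,j,h)$ and $v(i',j,h')$ — and along each such edge the nearest site on the far side is at distance strictly larger, by at least one, than the distance from the boundary vertex to $s_C$ along a route remaining inside $C$. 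Concretely, for $x\in U(i,j)$ one has $d_G(x,v(i,j,h_i))\le d_G(x,c(i,j))+t< d_G(x,c(i,j))+(t+1)=d_G(x,e(i,j))$, and $d_G(x,v(i,j,h_i))\le d_G(x,a(i,j))+(t+h_i)$, which is one less than the distance from $x$ to the closest site reachable through $a(i,j)$ (namely $v(i,j-1,h_i)$ when $j>1$), and symmetrically through $b(i,j)$; running the same estimates in reverse shows the exterior neighbors $b(i,j-1)$, $a(i,j+1)$, the endpoint of $P_e(i,j)$ adjacent to $c(i,j)$, and $z(i,j)$ all stay with their own sites. The analogous short computations give $\cell_G(e(i,j),\Sigma)=V(P_e(i,j))$ and $\cell_G(z(i,j),\Sigma)=\{z(i,j)\}$ (here $z(i,j)$ is a degree-one vertex whose only neighbor $e(i,j)$ is a site). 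Since the only links between columns are the single edges $b(i,j)a(i,j+1)$, which carry exactly this slack, selecting the same $h_i$ in every column is globally consistent.

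The main obstacle — and the only place the independence of $X$ is used — is the edge gadget. For $e_j$ between the $h$-th vertex of $V_i$ and the $h'$-th vertex of $V_{i'}$, the graph $R(j)$ is a subdivided claw whose centre is the middle vertex of $P(e_j)$ and whose three legs all have length $t$: two end at the vertices $q$, $q'$ adjacent respectively to $v(i,j,h)$ and $v(i',j,h')$, and one ends at $f(j)$. The lengths $2t+2$ of $P(e_j)$ and $t$ of $Q(e_j)$ are tuned precisely so that $d_G(q,f(j))=d_G(q,q')=2t$ while $d_G(q,v(i,j,h_i))=2t+1$ as soon as $h\neq h_i$ (using $d_G(v(i,j,h),v(i,j,h_i))=2t$ via $c(i,j)$). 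If neither endpoint of $e_j$ is selected, both $v(i,j,h)$ and $v(i',j,h')$ are non-sites and their own sites $v(i,j,h_i)$, $v(i',j,h_{i'})$ lie at distance $2t+1$ from $q$, $q'$; placing the site of $R(j)$ at $f(j)$ then yields $\cell_G(f(j),\Sigma)=R(j)$ by the slack argument, with $v(i,j,h)$, $v(i',j,h')$ falling into their own cells. If exactly one endpoint, say the $h$-th vertex of $V_i$, is selected, then $v(i,j,h)=v(i,j,h_i)$ is already a site adjacent to $q$, so the only way to keep $q\notin\cell_G(v(i,j,h_i),\Sigma)$ is to put the site of $R(j)$ at $q$ itself; one then checks $\cell_G(q,\Sigma)=R(j)$, the far leg-end $q'$ remaining with $R(j)$ because $d_G(q',q)=2t<2t+1=d_G(q',v(i',j,h_{i'}))$, and $v(i',j,h')$ going to its own cell. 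The case where both endpoints of $e_j$ would be selected is exactly the one ruled out by independence, which is why the reduction is faithful. The only remaining bookkeeping in a full write-up is checking that no shortest path between relevant vertices ever crosses into another row or column in an unexpected way, and this is immediate once one observes that the sole cross-connections are the single edges $b(i,j)a(i,j+1)$ and the length-$t$ gadget legs, all of which are accounted for above.
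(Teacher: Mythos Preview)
Your proposal is correct and takes essentially the same approach as the paper: the site placement is identical (at $v(i,j,h_i)$, $e(i,j)$, $z(i,j)$, and either $f(j)$ or the leg-end $q$ adjacent to the selected endpoint in each edge gadget), and the verification proceeds by the same distance computations exploiting $d_G(v(i,j,h),v(i,j,h'))=2t$ via $c(i,j)$ and the one-unit slack across every boundary edge. The only cosmetic difference is that the paper phrases the check as pairwise \emph{compatibility} of adjacent sites whereas you phrase it as a per-cell boundary analysis; both are the same argument.
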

\begin{proof}
	Assume there is a multicolored independent set $X$ of size $\ell$ in $H$.
	We define the set $\Sigma$ of sites as follows.
	\begin{itemize}
	\item
	For each $i \in [\ell]$, we place a site on the $m$ vertices $v(i,j,h_i)$ 
	for all $j\in [m]$,	where $h_i$ is the index of the vertex of $X$ in the partite set $V_i$.
	\item
	For every $i \in [\ell]$ and $j \in [m]$, we place two sites at $e(i,j)$ and $z(i,j)$.
	\item
	For each edge $e_j$ of $H$ with no endpoint in $X$, we place a site at $s_j=f(j)$.
	\item
	For each edge $e_j$ of $H$ with exactly one endpoint in $X$, 
	we place a site on the vertex $s_j$ of $R(j)$ adjacent to $v(i,j,h_i)$.
	\end{itemize}
	Note that, since $X$ is an independent set, there cannot be an edge $e_j$ 
	with two endpoints in $X$. Therefore we have covered all cases.
	This finishes the placement of the sites.

	For every $i \in [\ell]$ and $j \in [m]$, we have
	$\cell(z(i,j),\Sigma) = Z(i,j)$ because $e(i,j)$, the only neighbor of $z(i,j)$, is also a site. 
	It also holds that $\cell(e(i,j),\Sigma) = P_e(i,j)$ because $c(i,j)$ is at distance $t+1$
	from $e(i,j)$ and at distance $t$ from the site $v(i,j,h_i)$.
	For every $i \in [\ell]$ and $j \in [m-1]$, the vertices $v(i,j,h_i)$ and $v(i,j+1,h_i)$ 
	are \emph{compatible} since $d_G(v(i,j,h_i),b(i,j))=t+h_i=d_G(v(i,j+1,h_i),a(i,j))$ and 
	$d_G(v(i,j,h_i),a(i,j+1))=t+h_i+1=d_G(v(i,j+1,h_i),b(i,j))$.
	Here it is relevant to choose the lengths of the paths $P_a(i,j,h)$, $P_b(i,j,h)$ and $P_c(i,j,h)$
	to ensure that the shortest path from vertex $v(i,j,h)$ to $a(i,j)$ is $P_a(i,j,h)$,
	instead of passing through $b(i,j)$ or $c(i,j)$. (Similar statements hold for the shortest paths from
	$v(i,j,h)$ to $b(i,j)$ and to $c(i,j)$.)

	We now only need to check that the site $s_j$ in the edge gadget of $e_j$ 
	--the edge, say, between the $h$-th vertex of $V_i$ and the $h'$-th vertex of $V_{i'}$-- 
	is compatible with the sites chosen in $\Sigma$ for $U(i,j)$ and $U(i',j)$.
	The nice property that makes everything work is that, 
	for every $i \in [\ell], j \in [m], h \neq h' \in [t]$, $d_G(v(i,j,h),v(i,j,h'))$ is always equal to $2t$.
	Indeed the shortest path between $v(i,j,h)$ and $v(i,j,h')$ goes through $c(i,j)$,
	which is at distance $t$ of both vertices.
  
	There are two cases: $s_j$ is on the path $P(e_j)$ or $s_j = f(j)$.
	If $s_j \in P(e_j)$, it means that one of the endpoints of $e_j$ is in the multicolored independent set $X$.
	Without loss of generality, we assume that it is the $h$-th vertex of $V_i$ (hence, $h=h_i$).
	In that case, the sites $s_j$ and $v(i,j,h_i)$ are adjacent vertices, and therefore they are compatible.
	The sites $s_j$ and $v(i',j,h_{i'})$ are also compatible since
	$d_G(s_j,v(i',j,h_{i'}))=2t+1$ and $d_G(v(i',j,h'),v(i',j,h_{i'}))=2t$.
	
	Now, if $s_j = f(j)$, it means that $e_j$ does not touch any vertex of $\Sigma$.
	Hence, $h \neq h_i$ and $h' \neq h_{i'}$.
	Then we have 
	$d_G(s_j,v(i,j,h_i))=d_G(s_j,v(i',j,h_{i'}))=2t+1$, $d_G(v(i,j,h),v(i,j,h_i))=2t$ and
	$d_G(v(i',j,h'),v(i',j,h_{i'}))=2t$. It follows that also in this case 
	the site $s_j$ is compatible with $v(i,j,h_i)$ and $v(i',j,h_{i'})$.

	Therefore, we showed that each site $v(i,j,h_i) \in \Sigma$ is compatible with every other site of $\Sigma$.
	This implies that for every $i \in [\ell]$ and $j \in [m]$, we have $\cell(v(i,j,h_i),T)=U(i,j)$.
	In turn, it implies that $\cell(s_j,T)=R(j)$ for each $j\in [m]$, and therefore $\VV_G(\Sigma)=\UU$.
\end{proof}
  
\begin{lemma}
\label{le:pathwidth3}
	If $H$ has no multicolored independent set of size $\ell$, 
	then there is no set $\Sigma \subseteq V(G)$ such that $\VV_G(\Sigma)=\UU$.
\end{lemma}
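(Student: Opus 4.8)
The plan is to prove the contrapositive: assuming there is a set $\Sigma\subseteq V(G)$ with $\VV_G(\Sigma)=\UU$, I would extract a multicolored independent set of size $\ell$ in $H$. The starting observation is that the sets in $\UU$ are pairwise disjoint and nonempty, so by Lemma~\ref{le:strict} each $U\in\UU$ has a \emph{unique} site $s_U\in\Sigma$ with $\cell_G(s_U,\Sigma)=U$, and moreover $s_U\in\cell^<(s_U,\Sigma)=U$. The tool used everywhere is the following \emph{strict-closeness} consequence of disjointness: if a vertex $v$ lies in some $U\in\UU$, then $v$ is strictly closer to $s_U$ than to any other site, since equidistance to two sites would place $v$ in two cells of $\VV_G(\Sigma)=\UU$. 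I denote by $s(i,j)$, $s_e(i,j)$, $s_z(i,j)$ and $s_R(j)$ the sites of $U(i,j)$, $V(P_e(i,j))$, $Z(i,j)$ and $R(j)$ respectively.

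\textbf{Step 1: locating the sites of one column.} Since $Z(i,j)=\{z(i,j)\}$, one gets $s_z(i,j)=z(i,j)$. As $z(i,j)$ is a site adjacent to $e(i,j)\in V(P_e(i,j))$, strict closeness forces $d_G(s_e(i,j),e(i,j))<d_G(z(i,j),e(i,j))=1$, hence $s_e(i,j)=e(i,j)$. Using the planted path lengths (as discussed in the proof of Lemma~\ref{le:pathwidth2}) one has $d_G(e(i,j),c(i,j))=t+1$, and the endpoint $p$ of $P_e(i,j)$ adjacent to $c(i,j)$ satisfies $d_G(e(i,j),p)=t$. Since $c(i,j)\in U(i,j)$ is strictly closer to $s(i,j)$ than to $s_e(i,j)=e(i,j)$, we obtain $d_G(s(i,j),c(i,j))\le t$; since $p\in V(P_e(i,j))$ is strictly closer to $e(i,j)$ than to $s(i,j)$, we obtain $d_G(s(i,j),p)\ge t+1$ and hence $d_G(s(i,j),c(i,j))\ge t$. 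So $d_G(s(i,j),c(i,j))=t$. A case analysis over all vertices of $U(i,j)$, using that the planted paths realize $d_G(v(i,j,h),c(i,j))=t$ and $d_G(v(i,j,h),a(i,j))=d_G(v(i,j,h),b(i,j))=t+h$, shows that $v(i,j,1),\dots,v(i,j,t)$ are the only vertices of $U(i,j)$ at distance exactly $t$ from $c(i,j)$. As $s(i,j)\in U(i,j)$, this gives $s(i,j)=v(i,j,h_{i,j})$ for some index $h_{i,j}\in[t]$.

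\textbf{Step 2: the index is constant along a row, giving $X$.} Fix $i\in[\ell]$ and $j\in[m-1]$. Since $a(i,j+1)\in U(i,j+1)$ is not in $U(i,j)$, strict closeness gives $d_G(s(i,j+1),a(i,j+1))<d_G(s(i,j),a(i,j+1))\le d_G(s(i,j),b(i,j))+1$; plugging in $d_G(s(i,j),b(i,j))=t+h_{i,j}$ and $d_G(s(i,j+1),a(i,j+1))=t+h_{i,j+1}$ yields $h_{i,j+1}\le h_{i,j}$. The symmetric argument applied to $b(i,j)\in U(i,j)\setminus U(i,j+1)$ yields $h_{i,j}\le h_{i,j+1}$, so $h_{i,j}$ is independent of $j$; call it $h_i$, let $x_i$ be the $h_i$-th vertex of $V_i$, and set $X=\{x_1,\dots,x_\ell\}$, a set meeting each class $V_i$ exactly once. \textbf{Step 3: $X$ is independent.} Suppose some edge $e_j\in E(H)$ joins the $h$-th vertex of $V_i$ and the $h'$-th vertex of $V_{i'}$ with $h=h_i$ and $h'=h_{i'}$. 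Then $v(i,j,h)=s(i,j)$ and $v(i',j,h')=s(i',j)$ are both sites and are the two endpoints of the length-$(2t+2)$ path $P(e_j)$ inside the gadget $R(j)$. Let $u_1$ be the neighbour of $v(i,j,h)$ on $P(e_j)$; then $u_1\in R(j)$ and $d_G(v(i,j,h),u_1)=1$, so strict closeness of $R(j)$ versus $U(i,j)$ forces $d_G(s_R(j),u_1)<1$, i.e. $s_R(j)=u_1$. The symmetric argument forces $s_R(j)$ to equal the neighbour $u'_1$ of $v(i',j,h')$ on $P(e_j)$; but $u_1\ne u'_1$ because $P(e_j)$ has length $2t+2\ge4$. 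This contradiction shows no edge of $H$ has both endpoints in $X$, so $X$ is a multicolored independent set of size $\ell$, which together with Lemmas~\ref{le:pathwidth1} and~\ref{le:pathwidth2} completes the proof of Theorem~\ref{thm:pathwidth}.

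I expect Step~1 to be the main obstacle. The sandwich $d_G(s(i,j),c(i,j))=t$ has to be arranged so that disjointness of the candidate cells is exploited on both the upper and the lower side, and the assertion that $v(i,j,1),\dots,v(i,j,t)$ are the \emph{only} vertices of $U(i,j)$ at distance $t$ from $c(i,j)$ requires going through every vertex type of $U(i,j)$ (interior vertices of the paths $P_a$, $P_b$, $P_c$, as well as $a(i,j)$ and $b(i,j)$) and checking it against the planted path lengths. Once Step~1 pins every column site to some $v(i,j,h)$, Steps~2 and~3 reduce to short distance comparisons.
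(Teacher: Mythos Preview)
Your proof is correct and follows essentially the same three-step route as the paper: pin the sites of $Z(i,j)$ and $P_e(i,j)$, use compatibility with $e(i,j)$ to force $d_G(s(i,j),c(i,j))=t$ and hence $s(i,j)\in I(i,j)$, propagate the choice across columns via the $b(i,j)a(i,j+1)$ edge, and derive a contradiction at the edge gadget. Your exposition is more careful than the paper's (you make the sandwich on $d_G(s(i,j),c(i,j))$ explicit via the vertex $p$, split the propagation into two inequalities $h_{i,j+1}\le h_{i,j}$ and $h_{i,j}\le h_{i,j+1}$, and phrase the final contradiction as $s_R(j)=u_1=u'_1$), but the underlying argument is identical.
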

\begin{proof}
    A solution for the \textsc{Graphic Inverse Voronoi} has to put sites on every $e(i,j)$ and $z(i,j)$, 
	otherwise the Voronoi cell $Z(i,j)$ would not appear in the set of cells.
    As $e(i,j)$ is at distance $t+1$ of $c(i,j)$, 
	the site chosen for the cell $U(i,j)$ has to be at distance exactly $t$ of $c(i,j)$ (otherwise, 
	this site would not be compatible with $e(i,j)$).
    So, the site chosen for $U(i,j)$ has to be in $I(i,j)$.

    Then we prove that if a site is placed on $v(i,j,h)$, a site should be placed consistently on $v(i,j+1,h)$.
    This is immediate by construction, since the only vertex of $U(i,j+1)$ 
	which has a distance to $a(i,j+1)$ equal to $d_G(v(i,j,h),b(i,j))=t+h$ is $v(i,j+1,h)$.
	Here we are using again that the shortest path from $v(i,j,h)$ to $b(i,j)$ is indeed $P_b(i,j,h)$,
	and does not detour through $a(i,j)$ or $c(i,j)$.
    This implies that, for each $i \in [\ell]$, 
	all the choices of sites for the cells $\{U(i,j)\}_{j \in [m]}$ 
	have to be consistent to the same vertex, say of index $h_i$ in $V_i$.
    This defines a (consistent) set $X$ of $\ell$ vertices of $H$.

    As by assumption $X$ cannot be an independent set, there is an edge $e_j$ with both endpoints in $X$.
    Say those endpoints are the vertices in partite sets $V_i$ and $V_{i'}$.
    Then, the site for $R(j)$ cannot be closer to 
	the two vertices of $R(j)$ that are adjacent to $v(i,j,h_i)$ and $v(i',j,h_{i'})$.
    Hence there is no $\Sigma \subseteq V(G)$ such that $\VV_G(\Sigma)=\UU$.
  \end{proof}

\begin{proof}[Proof of Theorem~\ref{thm:pathwidth}]
	Because of Lemmas~\ref{le:pathwidth2} and~\ref{le:pathwidth3}, 
	solving \textsc{Graphic Inverse Voronoi} for $(G,\UU)$ also solves 
	\textsc{Multicolored Independent Set} for $H$. 
	The graph $G$ has $O(m \ell t^2) = O(|V(H)|^5)$ vertices 
	and pathwidth $p\le 2\ell +5$ because of Lemma~\ref{le:pathwidth1}.
	An algorithm for the \textsc{Graphic Inverse Voronoi}
	with running time $f(p)|V(G)|^{o(p)}$ (for some computable function $f$)
	would imply that we can solve \textsc{Multicolored Independent Set} in time 
	$f(2\ell+5) \left(|V(H)|^5\right)^{o(2\ell +5)} =g(\ell)n^{o(\ell)}$
	for a computable function $g$.
	This would disprove the Exponential Time Hypothesis.
\end{proof}

We show an almost matching upper bound when the potential Voronoi cells form a partition of the vertex set.

\begin{theorem}
  \label{thm:PGIV-XPtw}
  Instances $(G,\UU)$ of \textsc{Graphic Inverse Voronoi} can be solved 
  in time $|V(G)|^{O(w \log k)}$, when the $k$ cells of $\UU$ are pairwise disjoint 
  and $w$ is the treewidth of $G$. 
\end{theorem}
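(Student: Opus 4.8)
The plan is to solve the problem by a recursive branching procedure built on balanced vertex-separators, organised so that the recursion has depth $O(\log k)$ rather than $O(\log |V(G)|)$. First I would precompute all pairwise distances $d_G(\cdot,\cdot)$ in polynomial time, so that from now on every distance query costs $O(1)$. A subproblem is a pair $(\mathcal P,\sigma_B)$, where $\mathcal P$ is a union of some of the cells $U_1,\dots,U_k$, the set $B$ of \emph{frozen} cells is such that $\bigcup_{i\in B}U_i$ separates $\mathcal P$ from $V(G)\setminus \mathcal P$ in $G$, and $\sigma_B$ fixes a site $s_i\in U_i$ for every $i\in B$. The top call is $(V(G),\emptyset)$, and the invariant maintained is $|B|=O(w\cdot(\text{current recursion depth}))$. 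Because cells are pairwise disjoint, the \emph{local constraint} between two cells $U_j,U_i$ with known sites ($i\neq j$) is exactly the compatibility condition from Section~\ref{sec:basic}, which here reads: $d_G(s_j,u)<d_G(s_i,u)$ for every $u\in U_j$, and $d_G(s_i,x)<d_G(s_j,x)$ for every $x\in U_i$; each such test costs $O(|U_i|+|U_j|)$.

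To process $(\mathcal P,\sigma_B)$, I would contract each cell contained in $\mathcal P$ to a single vertex. By Lemma~\ref{le:starshaped} every cell induces a connected subgraph, so the contraction yields a minor of $G$, which therefore has treewidth at most $w$; as recalled in Section~\ref{sec:basic}, it admits a balanced vertex-separator of at most $w+1$ vertices, and one can find it by brute force in $|V(G)|^{O(w)}$ time. This separator is a set $\mathcal S$ of at most $w+1$ cells whose union separates $\mathcal P$ into parts $\mathcal P_1,\dots,\mathcal P_c$, each containing at most $\tfrac23$ of the cells of $\mathcal P$ (when $\mathcal P$ has at most $w+1$ cells we take $\mathcal S$ to be all of them and declare a base case; if the contracted graph is disconnected we take $\mathcal S=\emptyset$ and its components as the parts). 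I would then branch over all $|V(G)|^{O(w)}$ ways of placing a site $s_i\in U_i$ for every $i\in \mathcal S$; for each branch, I test all local constraints among the cells of $\mathcal S$ and between each cell of $\mathcal S$ and each cell of $B$, and, if all hold, recurse on every $(\mathcal P_r,\,\sigma_{B\cup\mathcal S})$ (the whole enlarged frozen set $B\cup\mathcal S$ is carried into each child, which keeps its union separating $\mathcal P_r$ from the rest of $G$ and keeps $|B\cup\mathcal S|=O(w\cdot\text{depth})$). The call returns ``yes'' iff some branch makes all its recursive calls return ``yes''; at a base case there are no recursive calls. When the top call returns ``yes'', the surviving branches reconstruct a full assignment $\Sigma$ of sites, which I would finally re-certify in $O(N)$ time with Lemma~\ref{le:check}.

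The correctness rests on a \emph{shielding} property: if $\Sigma$ is a placement, $v\in U_j$, $\mathcal A$ is a set of cell indices whose cells' union separates $v$ from $U_{j'}$ in $G$, every $U_a$ with $a\in\mathcal A$ is realised correctly by $\Sigma$, and $d_G(s_j,v)<d_G(s_a,v)$ for all $a\in\mathcal A$, then $d_G(s_j,v)<d_G(s_{j'},v)$; indeed a shortest $s_{j'}$--$v$ path first meets $\bigcup_{a\in\mathcal A}U_a$ at some $x\in U_a$, whence $d_G(s_{j'},v)=d_G(s_{j'},x)+d_G(x,v)\ge d_G(s_a,x)+d_G(x,v)\ge d_G(s_a,v)>d_G(s_j,v)$, using $x\in U_a=\cell_G(s_a,\Sigma)$. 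Using this, I would prove the equivalence ``$\Sigma$ satisfies every local constraint tested by the recursion $\iff$ $\Sigma$ realises every cell correctly (i.e. $\VV_G(\Sigma)=\UU$)''. The nontrivial direction goes by induction on the depth of the node $\tau(j)$ at which a cell $U_j$ becomes a separator cell: any two cells whose nodes are comparable are directly tested against each other (thanks to carrying the full frozen set down), and any two cells $U_j,U_{j'}$ whose nodes are incomparable have their common ancestor $\tau^{*}$ with separator-plus-frozen set $\mathcal A$ that (i) separates $v\in U_j$ from $U_{j'}$, (ii) is tested against $U_j$, and (iii) consists of cells whose nodes are strictly shallower than $\tau(j)$, hence already proven correct; the shielding property then gives $d_G(s_j,v)<d_G(s_{j'},v)$, and symmetrically the reverse inequality on $U_{j'}$. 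Together with the compatibility characterisation of Section~\ref{sec:basic}, this shows the recursion decides the instance: if a valid placement exists, restricting it to the separator cells at each node is a branch that passes every local test and is followed successfully, and conversely a successful run produces $\Sigma$ satisfying all tested constraints, hence valid.

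For the running time, the recursion has depth $O(\log k)$ because the number of cells of a part shrinks by a factor at most $\tfrac23$ per step (down to the $w+1$-cell base case), and each node branches $|V(G)|^{O(w)}$ ways; hence the number of (recursion node, tuple of separator-site choices along the path reaching it) pairs is $|V(G)|^{O(w\log k)}$, and each is handled in $\mathrm{poly}(|V(G)|)$ time (separator computation, branch enumeration, and the $O(w\cdot\text{depth})$ local tests of cost $O(|V(G)|)$ each), for a total of $|V(G)|^{O(w\log k)}$. The step I expect to be the main obstacle is the shielding property together with the bookkeeping it requires: one must verify that the carried frozen set genuinely separates each part from the rest of $G$ (combining ``$\bigcup\mathcal S$ separates $\mathcal P$ internally'' with ``$\bigcup B$ separates $\mathcal P$ externally''), that the set of pairs actually tested along the recursion, plus the shielding argument, captures global validity, and that the induction on depth is well founded. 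Pairwise disjointness of the cells is essential here: with overlapping cells the inequality $d_G(s_{j'},v)\ge d_G(s_a,v)$ in the shielding argument can be strict exactly where an equality $d_G(s_j,v)=d_G(s_{j'},v)$ is required, and the reduction to local constraints breaks down.
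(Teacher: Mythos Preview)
Your approach is essentially the same as the paper's: contract the disjoint cells, find a balanced vertex-separator of size at most $w+1$ in the resulting graph, branch over all placements of the $w+1$ corresponding sites, and recurse on the pieces to depth $O(\log k)$. The paper packages the bookkeeping differently: instead of carrying an ever-growing frozen set $B$ into the subproblems, it passes to the generalized problem with \emph{prescribed subsets} $S_i\subseteq U_i$ and, after fixing the separator sites, simply prunes each remaining $S_i$ to the placements compatible with those fixed sites; the subproblems then become genuinely independent instances of the same generalized problem, and no frozen set needs to be remembered. Your version trades this for an explicit shielding argument, which is in fact the correctness justification the paper leaves implicit, so your write-up is more rigorous on that point. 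Two small citation issues: Lemma~\ref{le:starshaped} and Lemma~\ref{le:check} are stated in the paper for \emph{trees}, not general graphs; the connectedness of Voronoi cells holds in general graphs by the same argument, but you should also check up front that every input $U_i$ is connected (otherwise answer ``no'') before contracting, and the final re-certification should just be a polynomial-time Voronoi-diagram computation rather than an appeal to Lemma~\ref{le:check}. Neither affects the $|V(G)|^{O(w\log k)}$ bound.
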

\begin{proof}
  We solve a more general problem where each potential Voronoi cell of $\UU$ 
  comes with a prescribed subset, specifying where one can actually place its site.
  Let $H$ be the graph on $k$ vertices obtained by contracting each cell of $\UU$ into a single vertex.
  Contracting edges can only decrease the treewidth (see for instance \cite{Diestel12}).
  Thus the treewidth of $H$ is at most $w$.
  We exhaustively guess in time $k^{w+1}$ a balanced vertex-separator $S$ of size $w+1$ in the graph $H$.
  Each connected component of $H-S$ has thus less than $2k/3$ vertices.
  We further guess in time at most $|V(G)|^{w+1}$ the $w+1$ corresponding sites, say, $s'_1, \ldots, s'_{w+1}$ (for the cells of $S$) in a fixed solution.
  For each guess, we remove the $w+1$ corresponding cells -- say, their union is $U$ -- from $G$, and update the prescribed subsets of the remaining cells to those placements compatible with the sites that are already fixed.
  More precisely, if $s$ is a placement of the site of $U'$ (not included in $U$) incompatible with a site $s'_i$ (for some $i \in [w+1]$), then we remove $s$ from the prescribed set for cell $U'$.
  We then solve recursively each connected component of $G-U$.
  Thus, we get $k|V(G)|^{w+1} \leq |V(G)|^{2(w+1)}$ independent subproblems, each of them with at most $2k/3$ candidate Voronoi regions (and restricted subset of possible placements).
  Since the depth of the branching tree is $O(\log k)$, the total running time is bounded by $|V(G)|^{O(w \log k)}$.
\end{proof}
  
\section{Arbitrary trees -- Transforming to nicer instances}
\label{sec:transform}
In this section we provide a transformation to reduce the problem
\textsc{Generalized Graphic Inverse Voronoi in Trees} to instances where the
tree has maximum degree $3$ and the candidate Voronoi regions are disjoint.
First we show how to transform it into disjoint Voronoi regions, 
and then we handle the degree.
In our description, we first discuss the transformation without paying attention to its
efficiency. At the end of the section we discuss how the transformation can be done
in linear time.

\subsection{Transforming to disjoint cells}
\label{sec:disjoint}
In this section we explain how to decrease the overlap between different Voronoi regions.
The procedure is iterative: we consider one edge 
of the tree at a time and transform the instance.
When there are no edges to process, we can conclude that the original instance has no solution or we can find a solution to the original instance.

Consider an instance $I=(T,( (U_1,S_1),\dots, (U_k,S_k) ))$
for the problem \textsc{Generalized Graphic Inverse Voronoi in Trees}.
See Figure~\ref{fig:disjoint} for an example of such an instance.

For each index $i\in [k]$ we define
\begin{align*}
	W_i ~&=~ U_i\setminus \bigcup_{j\neq i} U_j ,\\
	E_i ~&=~ \{ uv\in E(T)\mid u \in W_i,~ v\in U_i\setminus W_i\}.
\end{align*}

The intuition is that each $W_i$ should be the open Voronoi cell 
defined by the (unknown) site $s_i$, that is, the vertices of $T$ with $s_i$ 
as unique closest site; see Lemma~\ref{le:strict}. 
Each $E_i$ is then the set of edges with one vertex in $W_i$ and
another vertex in $U_i\cap U_j$ for some $j\neq i$.
The following result is easy to prove using Lemma~\ref{le:starshaped}.

\begin{lemma}
\label{le:transform0}
	Supposing that there is a solution to 
	\textsc{Generalized Graphic Inverse Voronoi in Trees} with input $I$,
	the following hold.
	\begin{itemize}
	\item[\rm (a)] Each set $U_i$ $(i\in [k])$ and each set $W_i$ $(i\in [k])$ 
		induces a connected subgraph of $T$.
    \item[\rm (b)] If two sets $U_i$ and $U_j$ $(i\ne j)$ intersect, 
		then $E_i\ne\emptyset$ and $E_j\ne \emptyset$.
	\end{itemize}
\end{lemma}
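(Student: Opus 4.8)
The plan is to fix an arbitrary solution $s_1,\dots,s_k$ of the instance $I$, put $\Sigma=\{s_1,\dots,s_k\}$, and reinterpret $W_i$ as an open Voronoi cell, so that Lemmas~\ref{le:strict} and~\ref{le:starshaped} apply directly. First I would observe that the sites are pairwise distinct: if $s_i=s_j$ with $i\ne j$ then $U_i=\cell_T(s_i,\Sigma)=\cell_T(s_j,\Sigma)=U_j$, so we may assume at the outset that the candidate cells are pairwise distinct (equal cells can be merged without changing the answer). With distinct sites, $\Sigma$ has $k$ elements, $U_i=\cell_T(s_i,\Sigma)$, and Lemma~\ref{le:strict} gives
\[
	W_i ~=~ U_i\setminus\bigcup_{j\ne i}U_j ~=~ \cell_T(s_i,\Sigma)\setminus\Bigl(\bigcup_{s\in\Sigma\setminus\{s_i\}}\cell_T(s,\Sigma)\Bigr) ~=~ \cell_T^<(s_i,\Sigma) ,
\]
so in particular $s_i\in W_i\subseteq U_i$.

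For part~(a) I would invoke Lemma~\ref{le:starshaped} verbatim. Both $U_i=\cell_T(s_i,\Sigma)$ and $W_i=\cell_T^<(s_i,\Sigma)$ contain $s_i$, and for every vertex $v$ in either set the path of $T$ from $s_i$ to $v$ is contained in the subgraph induced by that set. Since a subgraph of the tree $T$ in which every vertex is joined to a fixed vertex by a path of the subgraph is connected (equivalently, in a tree the path between two vertices lies inside the union of their paths to $s_i$), both $T[U_i]$ and $T[W_i]$ are connected.

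For part~(b), suppose $U_i\cap U_j\ne\emptyset$ with $i\ne j$. Then $U_i\setminus W_i=U_i\cap\bigcup_{l\ne i}U_l\supseteq U_i\cap U_j$ is nonempty, and $W_i\ni s_i$ is nonempty, so the vertex set of the connected graph $T[U_i]$ is split into the two nonempty parts $W_i$ and $U_i\setminus W_i$. Hence $T[U_i]$ has an edge with one endpoint in $W_i$ and the other in $U_i\setminus W_i$; by definition this edge lies in $E_i$, so $E_i\ne\emptyset$. Exchanging the roles of $i$ and $j$ gives $E_j\ne\emptyset$.

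I do not anticipate a genuine obstacle: the whole statement reduces to Lemmas~\ref{le:strict} and~\ref{le:starshaped} together with the elementary fact that a connected graph whose vertices are split into two nonempty classes has a crossing edge. The only point requiring a moment's attention is the identification $W_i=\cell_T^<(s_i,\Sigma)$, which relies on the sites being distinct; this is why I would record the (harmless) assumption that the input cells are pairwise distinct before starting the argument.
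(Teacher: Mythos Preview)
Your proposal is correct and follows essentially the same route as the paper: identify $W_i$ with $\cell_T^<(s_i,\Sigma)$ via Lemma~\ref{le:strict}, then invoke Lemma~\ref{le:starshaped} for connectedness, and for part~(b) use that a connected set split into two nonempty parts has a crossing edge. The only addition is your explicit remark about distinct sites, which the paper leaves implicit; it is a harmless clarification rather than a different idea.
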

\begin{proof}
	Consider a solution $s_1,\dots,s_k$
	to \textsc{Generalized Graphic Inverse Voronoi in Trees} with input $I$,
	and define $\Sigma=\{ s_1,\dots, s_k\}$.
	This means that, for each $i\in [k]$,
	we have $s_i\in S_i$ and $U_i=\cell_T(s_i,\Sigma)$.
	Note that because of Lemma~\ref{le:strict}, we have
	\begin{equation*}
		\forall i\in [k]:~~~
				W_i ~=~ U_i\setminus \bigcup_{j\neq i} U_j ~=~
				\cell_T(s_i,\Sigma)\setminus \bigcup_{j\neq i} \cell_T(s_j,\Sigma) ~=~
				\cell^<_T(s_i,\Sigma).
	\end{equation*}

	If there are distinct indices $i,j\in [k]$ such that $U_i$ and $U_j$
	intersect, then $W_i\subsetneq U_i$.
	Because of Lemma~\ref{le:strict}, we have $s_i\in W_i$,
	and therefore $W_i$ is nonempty.
	Because of Lemma~\ref{le:starshaped}, the sets $U_i$ and $W_i$
	induce subtrees of $T$. Since $W_i\subsetneq U_i$, it follows that
	$T$ has some edge from $W_i$ to $U_i\setminus W_i$, and therefore $E_i$ is nonempty.
\end{proof}

As a preprocessing step, we replace $S_i$ by $S_i\cap W_i$ for each $i\in [k]$.
Since a site cannot belong to two Voronoi regions, this replacement
does not reduce the set of feasible solutions for $I$. 
To simplify notation, we keep using $I$ for the new instance.
We check that,
for each $i\in [k]$, the set $S_i$ is nonempty and
the sets $U_i$ and $W_i$ induce a connected subgraph of $T$.
If any of those checks fail, we correctly report that there is no solution to $I$.

If the sets $U_1,\dots,U_k$ are pairwise disjoint, we do not need to do anything.
If at least two of them overlap but the sets $E_1,\dots,E_k$ are empty, then
Lemma~\ref{le:transform0} implies that there is no solution.
In the remaining case some $E_i$ is nonempty, and we transform the instance as follows.

In the transformations we will need ``short'' edges.
To quantify this, we introduce the \DEF{resolution} $\res(I)$
of an instance $I$, defined by
\[
	\res(I) ~=~ \min \left( \RR_{>0}\cap \{ d_T(s_i,u) - d_T(s_j,u)\mid
						u\in U_i\cap U_j,~ s_i\in S_i, ~s_j\in S_j,~ i,j\in [k] \}\right).
\]
Here we take the convention that $\min(\emptyset)=+\infty$.
From the definition we have the following property:
\begin{align}
\begin{split}\label{eq:resolution}
	\forall i,j\in [k],~ u\in U_i\cap U_j,~ & s_i\in S_i,~ s_j\in S_j:\\
			& |d_T(s_i,u)-d_T(s_j,u)| < \res(I) ~ \Longrightarrow ~
									d_T(s_i,u)=d_T(s_j,u).
\end{split}
\end{align}

Consider any value $\eps>0$.
Fix any index $i\in [k]$ such that $E_i\neq \emptyset$ and
consider an edge $xy\in E_i$ with $x\in W_i$ and $y\in U_i\setminus W_i$.
By renaming the sets, if needed, we assume henceforth that $i=1$, that is,
$E_1\neq \emptyset$, $x\in W_1$ and $y\in U_1\setminus W_1$.
We build a tree $T'$ with edge-lengths $\lambda'$ and a new set $U'_1$ as follows.
We obtain $T'$ from $T$ by subdividing $xy$ with a new vertex $y'$.
We define $U'_1$ to be the subset of vertices of $U_1$ that belong to the component
of $T-y$ that contains $x$, and then we also add $y'$ into $U'_1$.
Note that $u\in U_1$ belongs to $U'_1$ if and only if
$d_T(u,x)<d_T(u,y)$.
In particular, $y\notin U'_1$.
Finally, we set the edge-lengths $\lambda' (xy')=\lambda(xy)$ and $\lambda'(yy')= \eps$,
and the remaining edges have the same length as in $T$.
This completes the description of the transformation.
Note that $T'$ is just a subdivision of $T$ and, effectively,
the edge $xy$ became a $2$-edge path $xy'y$ that is longer by $\eps$.
All distances in $T'$ are larger or equal than in $T$, and the difference is at most $\eps$.

Let $I'$ be the new instance, where we use $T'$, $\lambda'$ and $U'_1$,
instead of $T$, $\lambda$ and $U_1$, respectively.
(We leave $U_i$ unchanged for each $i\in [k]\setminus \{1 \}$
and $S_i$ unchanged for each $i\in [k]$.)
See Figure~\ref{fig:disjoint2} for two examples of this transformation and
Figure~\ref{fig:transform1} for a schematic view.
We call $I'$ the instance obtained from $I$ by \DEF{expanding the edge $xy$ from $E_1$ by $\eps$}.
Note that $y'$ is not a valid placement for a site in $I'$, since $y'\notin S_1$.

Our definition of $\res(I)$ is carefully chosen so that it does not decrease 
with the expansion of an edge. That is, $\res(I')\ge \res(I)$. 
(This property is exploited in the proof of Lemma~\ref{le:transform_disjoint}.)
This is an important but subtle point needed to achieve efficiency. 
It will permit that all the short edges 
that are introduced during the transformations have the same small length $\eps$,
and we will be able to treat $\eps$ symbolically.

The next two lemmas show the relation between solutions to the instances $I$ and $I'$.

\begin{figure}
\centering
	\includegraphics[page=2,width=\textwidth]{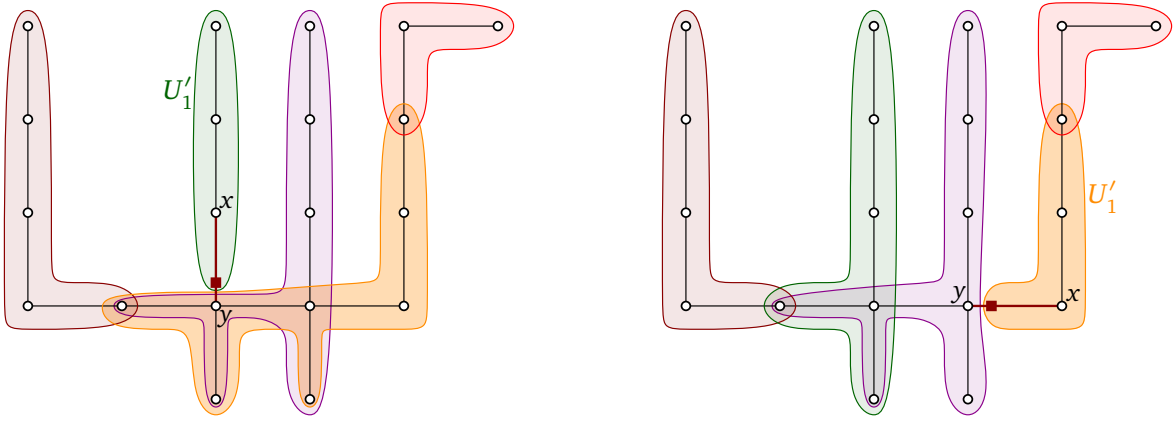}
	\caption{The transformation from the instance $I$ in Figure~\ref{fig:disjoint}
		to $I'$ for two different choices of the set $U_1$ and $xy\in E_1$.
		The new vertex $y'$ appearing because of the subdivision is marked with a square.
		The ``shorter'' edges in the drawing
		have length $\eps$; all other edges have unit length.}
	\label{fig:disjoint2}
\end{figure}

\begin{lemma}
\label{le:transform1}
	Suppose that $\eps>0$.
	If $\Sigma$ is a solution to \textsc{Generalized Graphic Inverse Voronoi in Trees}
	with input $I$, then $\Sigma$ is also a solution to
	\textsc{Generalized Graphic Inverse Voronoi in Trees} with input $I'$. 	
\end{lemma}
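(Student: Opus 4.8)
The plan is to show that the same site placement $\Sigma = \{s_1,\dots,s_k\}$ that works for $I$ continues to work for $I'$, exploiting that $T'$ is merely a subdivision of $T$ and that the transformation only lengthens the $x$--$y$ connection slightly. First I would recall that distances in $T'$ satisfy $d_T(a,b) \le d_{T'}(a,b) \le d_T(a,b) + \eps$ for all vertices $a,b$ of $T$ (identifying them with their copies in $T'$), with equality $d_{T'}(a,b) = d_T(a,b)$ whenever the shortest $a$--$b$ path in $T$ does not use the edge $xy$. Since $s_i \in S_i \subseteq W_i$, each site lies in the component structure in a controlled way; in particular, by Lemma~\ref{le:starshaped} applied to the solution $\Sigma$ for $I$, the cell $U_i = \cell_T(s_i,\Sigma)$ is connected and the shortest path from $s_i$ to any $u \in U_i$ stays inside $U_i$. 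I would use this to pin down exactly which site-to-vertex shortest paths cross the edge $xy$.

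The key structural observation is: the edge $xy$ lies with $x \in W_1$ and $y \in U_1 \setminus W_1$, so $x \in \cell^<_T(s_1,\Sigma)$ and $y$ belongs to $U_1$ and to at least one other cell. For a vertex $u$, its shortest path from $s_1$ uses $xy$ (in the direction $x$ then $y$) precisely when $u$ lies in the component of $T - \{x,y\}$ hanging off $y$ on the far side, i.e.\ when $d_T(u,y) < d_T(u,x)$; for such $u$, either $u \notin U_1$ (and then $d_{T'}$ being larger only helps keep $u$ out of $U_1'$) or $u \in U_1 \setminus U_1'$. Conversely, for $u \in U_1'$ we have $d_T(u,x) < d_T(u,y)$, so the shortest $s_1$--$u$ path in $T$ does not traverse $xy$, hence $d_{T'}(s_1,u) = d_T(s_1,u)$. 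For $u \in U_1 \setminus U_1'$, the distance from $s_1$ increases by exactly $\eps$, i.e.\ $d_{T'}(s_1,u) = d_T(s_1,u) + \eps$. For the other sites $s_j$ ($j \ne 1$), since $s_j \notin U_1$, I would argue that the shortest $s_j$--$u$ path is affected by the subdivision only when it crosses $xy$, and in that case it crosses from the $y$-side to the $x$-side; but $s_j \notin W_1 \ni x$, so such a path from $s_j$ cannot reach into the far $x$-component while still being shortest to a vertex there unless that vertex is in $U_1$'s cell as well — this needs a careful case analysis. The cleanest route is: for any $u$ and any $j \ne 1$, $d_{T'}(s_j,u) \ge d_T(s_j,u)$, with equality unless the $T$-shortest path from $s_j$ to $u$ uses $xy$; and when it does, $u$ is in the far $x$-component, so $u \in W_1 \subseteq U_1'$ and $d_{T'}(s_j,u) = d_T(s_j,u) + \eps$.

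Putting these together, I check compatibility of every pair $(s_i,s_j)$ in $I'$ using the characterization recalled in the ``Inverse Voronoi and compatibility'' paragraph. The only pairs whose relevant inequalities could be disturbed involve $s_1$ and whether vertices land correctly in $U_1'$ versus $U_i$ for $i \ne 1$. For $u \in U_1' \cap U_i$: in $T$ we had $d_T(s_1,u) = d_T(s_i,u)$; since $u \in U_1'$ the distance from $s_1$ is unchanged, and I must show the distance from $s_i$ is also unchanged — which holds because $u$ being in the near $x$-component means the $s_i$--$u$ shortest path does not cross $xy$ (it would have to exit the $x$-component through $x \in W_1$, impossible for a shortest path from $s_i \notin W_1$ that ends inside that component at a vertex not separated further). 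For $u \in U_1' \setminus U_i$: $d_T(s_1,u) < d_T(s_i,u)$ in $T$, both distances are unchanged, so the strict inequality persists. For $u \in U_i \setminus U_1'$ with $u \in U_1$ originally (so $u \in U_1 \setminus U_1'$, meaning $u$ is in the far $y$-component and $u \in U_i$, $i\ne 1$): here $d_T(s_1,u)$ increases by $\eps$ to $d_{T'}(s_1,u) = d_T(s_1,u)+\eps$, while $d_{T'}(s_i,u)$ may or may not increase; I need $d_{T'}(s_i,u) \ge d_{T'}(s_1,u)$ — wait, actually I need $d_{T'}(s_i,u) < d_{T'}(s_1,u)$ is false and instead the correct target is that $u \notin \cell_{T'}(s_1,\Sigma)$, which follows since $d_{T'}(s_i,u) \le d_T(s_i,u) + \eps = d_T(s_1,u) + \eps = d_{T'}(s_1,u)$ using $u \in U_i$ so $d_T(s_i,u) \le d_T(s_1,u)$; and this weak inequality is exactly what is needed because $u \notin U_1'$ should hold. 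For vertices $u \notin U_1$ at all, nothing about $s_1$'s cell changes in the wrong direction since distances from $s_1$ only grow. Finally, for pairs $(s_i,s_j)$ with $i,j \ne 1$, distances from $s_i$ and $s_j$ to any $u$ either both stay the same or, in the crossing case, both equal their $T$-value plus $\eps$ (as that forces $u \in W_1$, i.e.\ $u\notin U_i\cup U_j$, contradiction — so in fact the crossing case cannot arise for $u \in U_i \cup U_j$, and for $u$ outside both the strict/equality pattern is irrelevant), hence compatibility is preserved verbatim.

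The main obstacle is the case analysis establishing that a shortest path from a site $s_j$ with $j \ne 1$ traverses the subdivided edge $xy$ only toward vertices in $W_1 \subseteq U_1'$, and that for $u \in U_1' \cap U_j$ the distance from $s_j$ is genuinely unchanged. Both rest on the ``star-shapedness'' Lemma~\ref{le:starshaped} (and Lemma~\ref{le:strict}) for the original solution $\Sigma$ in $T$: because $x \in \cell^<_T(s_1,\Sigma)$, any vertex $u$ with $d_T(u,y)<d_T(u,x)$ that lies in some $\cell_T(s_j,\Sigma)$ with $j\ne 1$ would need its $s_j$--$u$ shortest path to pass through $y$ but not $x$, which is automatically fine, but conversely a vertex $u$ in the near component ($d_T(u,x)<d_T(u,y)$) lying in $\cell_T(s_j,\Sigma)$ cannot have its shortest $s_j$--$u$ path leave the near component, since the only exit is $x \in W_1$ and a shortest path through $x$ would force $u\in\cell_T(s_1,\Sigma)$ — consistent only if $u\in U_1$, and then one checks $d_T(s_1,u)\le d_T(s_j,u)$ forces equality, handled above. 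Once this is nailed down, every compatibility inequality for $I'$ reduces to the corresponding one for $I$ together with the elementary distance bounds, and the lemma follows.
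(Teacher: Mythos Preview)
Your compatibility-based approach is a legitimate alternative to the paper's proof, which instead proves two structural claims --- that $y'\in\cell^<_{T'}(s_1,\Sigma)$ and $y\notin\cell_{T'}(s_1,\Sigma)$ --- and uses them to argue that $y'$ cleanly separates the Voronoi diagram of $T'$ into two independent pieces on $V_x$ and $V_y$ where all relevant distances coincide with those in $T$. That route avoids the pairwise case analysis entirely. Your approach can be made to work, but as written it has two concrete problems.

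First, you never handle the new vertex $y'$. All of your distance comparisons invoke $d_T(\cdot,\cdot)$ for vertices of $T$, but $y'\notin V(T)$; yet $y'\in U_1'\setminus U_i$ for every $i\neq 1$, so you must verify $d_{T'}(s_1,y')<d_{T'}(s_i,y')$. This is easy (it is exactly Claim~\ref{claim1} in the paper), but it is not a special case of any of your bullet points.

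Second, and more seriously, in the case $u\in (U_1\setminus U_1')\cap U_i$ you state that the weak inequality $d_{T'}(s_i,u)\le d_{T'}(s_1,u)$ ``is exactly what is needed because $u\notin U_1'$ should hold.'' This is wrong: to conclude $u\notin\cell_{T'}(s_1,\Sigma)$ you need a \emph{strict} witness $d_{T'}(s_j,u)<d_{T'}(s_1,u)$ for some $j$; equality would place $u$ in $\cell_{T'}(s_1,\Sigma)$ as well. The strict inequality does hold here, but for a reason your chain of inequalities does not capture: since $u\in U_i\cap V_y$ and $x\in W_1$, Lemma~\ref{le:starshaped} forces $s_i\in V_y$, whence $d_{T'}(s_i,u)=d_T(s_i,u)=d_T(s_1,u)<d_T(s_1,u)+\eps=d_{T'}(s_1,u)$. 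Your bound $d_{T'}(s_i,u)\le d_T(s_i,u)+\eps$ is too coarse. A related imprecision appears in your treatment of pairs $s_i,s_j$ with $i,j\neq1$: when $s_i\in V_x$ and $s_j\in V_y$ the ``crossing case'' \emph{does} arise for $u\in U_i\cup U_j$, contrary to your assertion; the compatibility still goes through, but because the path that lengthens is the one that was already longer, not because no crossing occurs.
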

\begin{proof}
	We first introduce some notation.
	Let $V_x$ be the vertex set of the component of $T'-y'$ that contains $x$ and
	let $V_y$ be the vertex set of the component of $T'-y'$ that contains $y$.
	See Figure~\ref{fig:transform1}.
	Note that $x\in V_x$ and $y\in V_y$, while $y'$ is neither in $V_x$ nor in $V_y$.
	From the definition of $U'_1$, we have $U'_1=\{ y'\}\cup (V_x\cap U_1)$ and
	$U_1\setminus U'_1= V_y\cap U_1$. 

	\begin{figure}
	\centering
		\includegraphics[page=6,scale=1.2]{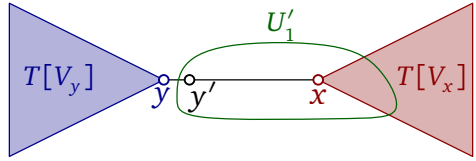}
		\caption{Notation in the proof of Lemma~\ref{le:transform1}.}
		\label{fig:transform1}
	\end{figure}

	We have the following easy relations between distances in $T$ and $T'$;
	we will use them often without explicit reference.
	\begin{align*}
		\forall u,v\in V_x&: ~~~ d_{T'}(u,v)= d_T(u,v)\\
		\forall u,v\in V_y&: ~~~ d_{T'}(u,v)= d_T(u,v)\\
		\forall u\in V_x,~ v\in V_y&: ~~~ d_{T'}(u,v)= d_T(u,v)+\eps\\
		\forall u\in V_x&: ~~~ d_{T'}(u,y')= d_T(u,y)\\
		\forall u\in V_y&: ~~~ d_{T'}(u,y')= d_T(u,y)+\eps.
	\end{align*}
	
	Consider a solution $s_1,\dots,s_k$	to
	\textsc{Generalized Graphic Inverse Voronoi in Trees} with input $I$,
	and define $\Sigma=\{ s_1,\dots, s_k\}$.
	This means that, for all $i\in [k]$,
	we have $s_i\in S_i$ and $U_i=\cell_T(s_i,\Sigma)$.
	Our objective is to show that $U'_1=\cell_{T'}(s_1,\Sigma)$
	and $U_i=\cell_{T'}(s_i,\Sigma)$ for all $i\in [k]\setminus \{ 1 \}$.
	
	Since $U_i=\cell_T(s_i,\Sigma)$ for all $i\in [k]$,
	Lemma~\ref{le:strict} implies that 
	$W_1 = \cell^<_T(s_1,\Sigma)$ and $s_1\in W_1$.
	Since $x\in W_1$, $y\not\in W_1$, 
	and $W_1$ induces a connected subgraph of $T$ 
	because of Lemma~\ref{le:transform0}(a),
	the set $W_1$ is contained in $V_x$.
	Since $W_1\subseteq V_x$ and $W_1\subseteq U_1$,
	we have $W_1\subseteq V_x\cap U_1$ and we 
	conclude that $W_1\subseteq U'_1$. 
	Furthermore, because $s_1\in \cell^<_T(s_1,\Sigma) = W_1$
	and $W_1\subseteq V_x$, 
	we obtain that $s_1\in V_x$. 

	For each $i\in [k]\setminus \{ 1 \}$,
	we have $x\notin U_i$ because $x\in W_1$, 
	and Lemma~\ref{le:transform0}(a) implies that
	the set $U_i=\cell_T(s_i,\Sigma)$ is fully contained either in $V_x$ or in $V_y$.
	
	Consider any index $\ell\in [k]\setminus \{1 \}$ with the property
	that $y\in U_1\cap U_\ell$.
	Since $U_\ell$ contains $y$, it cannot be that $U_\ell\subseteq V_x$,
	and therefore $U_\ell\subseteq V_y$. In particular, $s_\ell\in V_y$.

	We first note that the sets $U'_1, U_2,\dots, U_k$ cover $V(T')$.
	Indeed, since $y\in U_1\cap U_\ell$, 
	the sites $s_1$ and $s_\ell$ are closest sites to $y$ in $T$,
	and using that $s_1\in V_x$ and $s_\ell\in V_y$, we obtain that
	$U_1\setminus U'_1$ is contained in $U_\ell$.
	Further, since $U_1,\dots,U_k$ cover $V(T)$, $y'\in U'_1$ by construction,
	and $V(T')=V(T)\cup \{ y'\}$, we conclude that indeed
	$U'_1, U_2, \dots, U_k$ cover $V(T')$.

	Next, we make the following two claims.
	
	\begin{claim}\label{claim1}
	$y'\in \cell_{T'}(s_1,\Sigma)$ and
	$y'\notin \cell_{T'}(s_i,\Sigma)$ for any $i\in [k]\setminus\{ 1\}$.
	\end{claim}
	\begin{proof}
	Fix any index $i\in [k]\setminus \{1\}$.
	Consider first the case when $s_i\in V_x$.
	In this case the path from $s_i$ to $y'$ passes through $x$,
	which is a vertex in $\cell^<_T(s_1,\Sigma)$.
	It follows that $d_T(s_1,x)<d_T(s_i,x)$, which
	implies
	\[
		d_{T'}(s_1,y') = d_T(s_1,y) < d_T(s_i,y) = d_{T'}(s_i,y').
	\]
	
	Consider now the case when $s_i\in V_y$.
	Because $y\in U_1=\cell(s_1,\Sigma)$, we have $d_T(s_1,y)\le d_T(s_i,y)$
	and we conclude that
	\[
		d_{T'}(s_i,y')= d_T(s_i,y)+\eps \ge d_T(s_1,y) +\eps = d_{T'}(s_1,y') +\eps > d_{T'}(s_1,y').
	\]
	
	In each case we get $d_{T'}(s_1,y')<d_{T'}(s_i,y')$, and the claim follows.
	\end{proof}

	\begin{claim}\label{claim2}
	$y\notin \cell_{T'}(s_1,\Sigma)$.
	\end{claim}
	\begin{proof}
	Since $y$ belongs to $U_1\cap U_\ell$,
	we have $d_T(s_1,y) = d_T(s_\ell,y)$.
	Using that $U_\ell$ is contained in $V_y$, and thus $s_\ell\in V_y$, we have
	\[
		d_{T'}(s_\ell,y) = d_T(s_\ell,y) = d_T(s_1,y) = d_{T'}(s_1,y)-\eps < d_{T'}(s_1,y).
	\]
	We conclude that $y$ is not an element of $\cell_{T'}(s_1,\Sigma)$.
	\end{proof}
	
	Claims~\ref{claim1} and~\ref{claim2} imply that $y'$ belongs \emph{only} to the
	Voronoi region $\cell_{T'}(s_1,\Sigma)$ and $y$ does not belong to $\cell_{T'}(s_1,\Sigma)$.
	This means that each vertex of $V_x$ belongs only to
	some regions $\cell_{T'}(s_i,\Sigma)$ with $s_i\in V_x$ 
	and each vertex of $V_y$ belongs
	to some regions $\cell_{T'}(s_i,\Sigma)$ with $s_i\in V_y$.
	That is, it cannot be that some vertex $u\in V_x$ belongs to $\cell_{T'}(s_i,\Sigma)$
	with $s_i\in V_y$ and it cannot be that some vertex $u\in V_y$ belongs to
	$\cell_{T'}(s_i,\Sigma)$ with $s_i\in V_x$. 
	Effectively, this means that $y'$
	splits the Voronoi diagram $\VV_{T'}(\Sigma)$ into the part within $T'[V_x]$
	and the part within $T'[V_y]$, with the gluing
	property that $y'\in \cell_{T'}(s_1,\Sigma)$.
	Since $U'_1\setminus \{y'\}=U_1\cap V_x$ and
	the distances within $T'[V_x]$ and within $T'[V_y]$ are the same as in $T$,
	the result follows.
\end{proof}

The converse property is more complicated. We need $\eps$ to be small 
enough and we also have to assume that $I$ has a solution. 
It is this tiny technicality that makes the reduction nontrivial.

\begin{lemma}
\label{le:transform2}
	Suppose that $0< \eps < \res(I)$ and the answer to
	\textsc{Generalized Graphic Inverse Voronoi in Trees} with input $I$ is ``yes''.
	If $\Sigma'$ is a solution to \textsc{Generalized Graphic Inverse Voronoi in Trees}
	with input $I'$, then $\Sigma'$ is also a solution to
	\textsc{Generalized Graphic Inverse Voronoi in Trees} with input $I$. 	
\end{lemma}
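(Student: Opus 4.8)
The plan is to show that the solution $\Sigma'=\{s'_1,\dots,s'_k\}$ of $I'$ already realizes, in $T$, exactly the cells $U_1,\dots,U_k$. The hypothesis that $I$ has a solution will be used only through the structural conclusions of Lemma~\ref{le:transform0} applied to $I$, and the hypothesis $\eps<\res(I)$ only through property~\eqref{eq:resolution}.

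First I would fix the geometry of the expanded edge. Let $V_x$ be the vertex set of the component of $T-y$ containing $x$, and $V_y=V(T)\setminus V_x$; in $T$ the only edge between $V_x$ and $V_y$ is $xy$, whereas in $T'$ it is the length-$(\lambda(xy)+\eps)$ path $x$--$y'$--$y$. This gives a distance dictionary: $d_{T'}=d_T$ when both endpoints lie on the same side; $d_{T'}(p,q)=d_T(p,q)+\eps$ when $p$ and $q$ lie on opposite sides; and $d_{T'}(p,y')=d_T(p,y)$ for $p\in V_x$. Since $I$ has a solution, Lemma~\ref{le:transform0}(a) gives that each $U_i$ and each $W_i$ induces a subtree of $T$; as $x\in W_1$ lies in no $U_i$ with $i\neq 1$ and a connected subgraph meeting both $V_x$ and $V_y$ must contain $x$, each $U_i$ with $i\neq 1$ lies wholly in $V_x$ or wholly in $V_y$. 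Put $A=\{i\neq 1:U_i\subseteq V_x\}$ and $B=\{i\neq 1:U_i\subseteq V_y\}$; $B\neq\emptyset$ since $y\in U_1\setminus W_1$ lies in some $U_\ell$ with $\ell\in B$. By Lemma~\ref{le:strict}, $s'_i\in\cell^<_{T'}(s'_i,\Sigma')\subseteq U_i$ for $i\neq 1$, so $s'_i\in V_x$ for $i\in A$ and $s'_i\in V_y$ for $i\in B$, while $s'_1\in S_1\subseteq W_1\subseteq V_x$ (recall the preprocessing $S_i\leftarrow S_i\cap W_i$) and $s'_1\neq y'$. In particular, since $W_1$ is connected, it lies in $V_x$, which also gives $W_1\subseteq U'_1$ and $x,y'\in W'_1=\cell^<_{T'}(s'_1,\Sigma')$.

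Next I would show that in $T'$ site $1$ controls the cut, which decomposes the diagram across it. From $x,y'\in W'_1$ we get $d_{T'}(s'_1,x)<d_{T'}(s'_l,x)$ and $d_{T'}(s'_1,y')<d_{T'}(s'_l,y')$ for all $l\neq 1$, and no $V_x$-site has $y$ in its $T'$-cell. Routing the shortest $T'$-path of any $V_y$-site $s'_l$ to a vertex $v\in V_x$ through $y'$ and $x$, these strict inequalities give $d_{T'}(s'_l,v)>d_{T'}(s'_1,v)+\lambda(xy)\ge d_{T'}(s'_m,v)+\lambda(xy)$, where $s'_m$ is the $V_x$-site nearest to $v$; symmetrically, $s'_1$ is the $V_x$-site nearest to every $v\in V_y$. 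Passing from $T'$ to $T$ changes only cross-distances, decreasing each by exactly $\eps$; since $\eps$ is chosen small enough (in the symbolic treatment of the paper $\eps$ is infinitesimal, so in particular smaller than $\lambda(xy)$, smaller than $\res(I)$, and smaller than every positive value of a difference $d_{T'}(s'_1,v)-d_{T'}(s'_l,v)$), none of the strict separations just obtained is lost. Hence $\cell_T(s'_i,\Sigma')\cap V_x=\cell_{T'}(s'_i,\Sigma')\cap V_x$ for $i\in\{1\}\cup A$ and is empty for $i\in B$, so on $V_x$ the $\Sigma'$-cells in $T$ are exactly $U_1\cap V_x$ and $(U_i)_{i\in A}$; symmetrically, on $V_y$ the $B$-indexed $\Sigma'$-cells in $T$ are exactly $(U_j)_{j\in B}$ and no $V_y$-vertex outside $U_1$ is captured by $s'_1$. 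What remains is to decide the fate of the vertices of $U_1\cap V_y=U_1\setminus U'_1$, which ought to end up (only) in $\cell_T(s'_1,\Sigma')$. Fix such a $u$ and let $j\in B$ be a $V_y$-site nearest to $u$ in $T$; by the decomposition $u\in U_j$, hence $u\in U_1\cap U_j$. From $u\in U_j=\cell_{T'}(s'_j,\Sigma')$ the dictionary gives $d_T(s'_j,u)-d_T(s'_1,u)\le\eps$; and the reverse inequality $d_T(s'_1,u)-d_T(s'_j,u)\le\eps$ is obtained from the pinning $d_{T'}(s'_1,y')<d_{T'}(s'_l,y')$ together with the routing through $y$ (it first yields $d_T(s'_1,y)\le d_T(s'_j,y)$ for all $j\in B$, with equality when $y\in U_j$, and propagates along the subtree $U_1\cap V_y$). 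Thus $|d_T(s'_1,u)-d_T(s'_j,u)|\le\eps<\res(I)$, and~\eqref{eq:resolution}, applied with $u\in U_1\cap U_j$, $s'_1\in S_1$, $s'_j\in S_j$, forces $d_T(s'_1,u)=d_T(s'_j,u)$. So $u$ is equidistant from $s'_1$ and from its nearest $V_y$-site, and strictly farther from every other site; hence $u\in\cell_T(s'_1,\Sigma')$. Combining, $\cell_T(s'_1,\Sigma')=(U_1\cap V_x)\cup(U_1\cap V_y)=U_1$ and $\cell_T(s'_i,\Sigma')=U_i$ for all $i$, i.e.\ $\Sigma'$ solves $I$.

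The main obstacle is the boundary step: showing that each $u\in U_1\cap V_y$ remains equidistant, in $T$, from $s'_1$ and from its nearest $V_y$-site. Because $U_1$ and $U_j$ genuinely overlap there, it is precisely~\eqref{eq:resolution} --- hence the hypothesis $\eps<\res(I)$ --- that forces the required tie, but invoking it needs a \emph{two-sided} estimate on $d_T(s'_1,u)-d_T(s'_j,u)$, and the non-obvious side is the upper bound $d_T(s'_1,u)-d_T(s'_j,u)\le\eps$. That bound is exactly what the newly introduced vertex $y'$ buys: being pinned inside $\cell^<_{T'}(s'_1,\Sigma')$, it forces every other site ``far enough past $y$'' that deleting the $\eps$-detour costs $s'_1$ no vertex of $U_1\cap V_y$ and (by the separation of the previous step) gains it no vertex outside $U_1$. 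This is exactly the ``tiny technicality that makes the reduction nontrivial'' flagged just before the lemma, and it is why both hypotheses, ``$\eps<\res(I)$'' and ``$I$ has a solution'', are needed: the former to close the tie through~\eqref{eq:resolution}, the latter to make the structural decomposition (subtrees, $V_x/V_y$) available via Lemma~\ref{le:transform0}.
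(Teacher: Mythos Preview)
Your outline has the right architecture, but there are two genuine gaps.

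\textbf{The appeal to ``$\eps$ infinitesimal''.} When you argue that the $V_x$-side cells are unchanged in $T$, you write that the strict separations survive ``since $\eps$ is chosen small enough \ldots\ in particular smaller than $\lambda(xy)$''. The lemma is stated for any $\eps<\res(I)$, and $\res(I)$ can be arbitrarily large compared to $\lambda(xy)$ (take a long path with a short middle edge). Your margin $d_{T'}(s'_l,v)>d_{T'}(s'_1,v)+\lambda(xy)$ does not survive subtracting such an $\eps$. The paper avoids this by first proving the resolution step at $y$ (its Claim~A): from $y'\in\cell^<_{T'}(s'_1,\Sigma')$ and $y\notin\cell_{T'}(s'_1,\Sigma')$ one gets $|d_T(s'_1,y)-d_T(s'_\ell,y)|<\eps$, and \eqref{eq:resolution} forces $d_T(s'_1,y)=d_T(s'_\ell,y)\le d_T(s'_l,y)$ for every $l\in B$. \emph{That} equality, not smallness of $\eps$, then yields $x\in\cell^<_T(s'_1,\Sigma')$ and with it the decomposition across the cut. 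You have this ingredient later in your ``boundary step''; it needs to come first.

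\textbf{The ``propagation along $U_1\cap V_y$''.} This is the more serious gap, and it is exactly where the paper uses the hypothesis ``$I$ has a solution'' beyond Lemma~\ref{le:transform0}. You assert that from $d_T(s'_1,y)\le d_T(s'_j,y)$ one can propagate to $d_T(s'_1,u)\le d_T(s'_j,u)+\eps$ for $u\in U_1\cap V_y$ and $j$ the nearest $V_y$-site to $u$. But this would require $y$ to lie on the $s'_j$--$u$ path, and nothing you have said forces that: $s'_j\in W_j$ and $u\in U_j$ are both in the connected set $U_j\subseteq V_y$, which need not contain $y$. The paper's Claim~C handles this by fixing a concrete solution $\Sigma^*=\{s_1^*,\dots,s_k^*\}$ of $I$ and the index $\ell$ with $y\in U_\ell$. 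It uses $\Sigma^*$ twice: first, because $u\in U_1=\cell_T(s_1^*,\Sigma^*)$ and $d_T(s_1^*,y)=d_T(s_\ell^*,y)$, the $(u,s_\ell^*)$-path must pass through $y$ (else $s_\ell^*$ would beat $s_1^*$ at $u$); second, since $s'_\ell,s_\ell^*\in W_\ell$ and $y\notin W_\ell$, the vertices $s'_\ell$ and $s_\ell^*$ lie in the same component of $T-y$, so $y$ also lies on the $(u,s'_\ell)$-path. This yields $d_T(s'_1,u)=d_T(s'_\ell,u)$ and, via $u\in U_\ell=\cell_{T'}(s'_\ell,\Sigma')$, the desired inequality against all other sites. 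The reverse inclusion in Claim~C likewise needs $\Sigma^*$. Your opening sentence, that the ``yes'' hypothesis is used only through Lemma~\ref{le:transform0}, is therefore not supported: you need the actual witness $\Sigma^*$, precisely at the step you labeled ``propagation''.
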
 	
\begin{proof}		
	When the instance $I$ has \emph{some} solution,
	then the properties discussed in Lemmas~\ref{le:transform0} 
	and~\ref{le:transform1} hold.
	We keep using the notation and the properties established earlier.
	In particular, each set $U_i$ ($i\in [k]\setminus \{1\}$)
	is contained either in $V_x$ or in $V_y$, 
	and we have $W_1\subseteq U'_1 \subseteq V_x\cup \{ y'\}$.
	
	Consider a solution $s_1,\dots, s_k$
	to \textsc{Generalized Graphic Inverse Voronoi in Trees} with input $I'$,
	and set $\Sigma=\{ s_1,\dots,s_k\}$.
	This means that $U'_1=\cell_{T'}(s_1,\Sigma)$ and,
	for all $i\in [k]\setminus \{1\}$, we have $U_i=\cell_{T'}(s_i,\Sigma)$.
	We have to show that, for all $i\in [k]$, we have $U_i=\cell_T(s_i,\Sigma)$,
	which implies that $\Sigma$ is a solution to input $I$.
	
	Like before, we split the proof into claims that show that
	$\Sigma$ is a solution to \textsc{Generalized Graphic Inverse Voronoi in Trees}
	with input $I$. We start with an auxiliary property that plays a key role.

	\begin{claim}
	\label{cl:A}
	For each $i\in [k]$, we have $y\in U_i$ if and only if $y\in \cell_T(s_i,\Sigma)$.
	\end{claim}
	\begin{proof}
	Suppose first that $y\in U_i$ and $i\ne1$. Then $U_i\subseteq V_y$.
	Since $y\in U_i=\cell_{T'}(s_i,\Sigma)$ and $y\notin U'_1=\cell_{T'}(s_1,\Sigma)$, we have
	\begin{align}
	\label{eq1}
		d_T(s_i,y) = d_{T'}(s_i,y)< d_{T'}(s_1,y) = d_T(s_1,y)+\eps .
	\end{align}
	Since $y'\notin U_i=\cell_{T'}(s_i,\Sigma)$ and $y'\in U'_1=\cell_{T'}(s_1,\Sigma)$, we have
	\begin{align}
	\label{eq2}
		d_T(s_1,y) = d_{T'}(s_1,y') < d_{T'}(s_i,y') = d_T(s_i,y)+\eps .
	\end{align}
	Combining~\eqref{eq1} and~\eqref{eq2} we get
	\[
		|d_T(s_i,y) - d_T(s_1,y)| < \eps<  \res(I).
	\]
	From property~\eqref{eq:resolution} and since $y\in U_1\cap U_i$, we conclude that $d_T(s_1,y) = d_T(s_i,y)$.
	For each $s_j\in V_y$ we use that $y\in U_i=\cell_{T'}(s_i,\Sigma)$ to obtain
	\[
		d_T(s_j,y) = d_{T'}(s_j,y) \ge d_{T'}(s_i,y)=d_T(s_i,y).
	\]
	For each $s_j\in V_x$ we use that the path from $s_j$ to $y$
	goes through $x\in U'_1=\cell_{T'}(s_1,\Sigma)$ to obtain
	\[
		d_T(s_j,y) \ge d_T(s_1,y) = d_T(s_i,y).
	\]	
	We conclude that for each $j\in [k]$ we have $d_T(s_j,y) \ge d_T(s_i,y)$,
	and therefore $y\in \cell_T(s_i,\Sigma)$.
	
	We know there exists some $\ell \in [k]\setminus \{1\}$ such that $y \in U_\ell$. 
	According to the analysis above, $y \in \cell_T(s_\ell, \Sigma)$ and $d_T(s_1,y)=d_T(s_\ell,y)$. 
	It follows that $y \in \cell_T(s_1, \Sigma)$.
	With this we have shown one direction of the implication.
	
	To show the other implication, consider some index $i\in [k]$
	such that $y\in \cell_T(s_i,\Sigma)$. If $i=1$, then $y\in U_1$ by construction,
	and the implication holds.
	So we consider the case when $i\neq 1$.	
	First we show that it cannot be that $s_i\in V_x$.
	Assume, for the sake of reaching a contradiction, that $s_i \in V_x$.
	Because of the implication left-to-right
	that we showed, we have $y\in \cell_T(s_1,\Sigma)$.
	Since we have $y\in \cell_T(s_i,\Sigma)$ and $y\in \cell_T(s_1,\Sigma)$,
	we obtain $d_T(s_i,y)=d_T(s_1,y)$. 
	Because $s_1,s_i\in V_x$, we obtain $d_T(s_i,x)=d_T(s_1,x)$ and
	therefore $d_{T'}(s_i,x)=d_{T'}(s_1,x)$. 
	Further, since $x\in U'_1=\cell_{T'}(s_1,\Sigma)$, 
	we get $x\in \cell_{T'}(s_i,\Sigma)=U_i$, 
	which implies $x\notin W_1$.
	We conclude that it must be $s_i\notin V_x$, and thus $s_i \in V_y$.

	Take an index $\ell\in [k]\setminus \{ 1\}$ such that $y\in U_\ell$.
	Such an index exists because $y\notin W_1$. We have $U_\ell\subseteq V_y$
	and thus $s_\ell\in V_y$.
	Because of the implication left-to-right
	that we showed, we have $y\in \cell_T(s_\ell,\Sigma)$.
	Since we have $y\in \cell_T(s_i,T)$ and $y\in \cell_T(s_\ell,\Sigma)$,
	we obtain $d_T(s_i,y)=d_T(s_\ell,y)$.
	Because $s_i,s_\ell \in V_y$ we then have
	\[
		d_{T'}(s_i,y)= d_T(s_i,y)=d_T(s_\ell,y)=d_{T'}(s_\ell,y).
	\]
	Since $d_{T'}(s_i,y)=d_{T'}(s_\ell,y)$ and $y\in U_{\ell}=\cell_{T'}(s_\ell,\Sigma)$,
	we conclude that $y\in \cell_{T'}(s_i,\Sigma)=U_i$.
	\end{proof}
	
	\begin{claim}
	\label{cl:B}
	$x\in\cell_T(s_1,\Sigma)$ and $x\notin \cell_T(s_i,\Sigma)$ for any $i\in [k]\setminus\{ 1\}$.
	\end{claim}	
	\begin{proof}
	Since $x\in U'_1=\cell_{T'}(s_1,\Sigma)$ and $x\notin U_i=\cell_{T'}(s_i,\Sigma)$
	for any $i\in [k]\setminus\{ 1\}$, we have
	\[
		\forall i\in [k]\setminus\{ 1\}:~~~ d_{T'}(s_1,x)< d_{T'}(s_i,x).
	\]
	We then have
	\begin{align}
	\label{eq:B1}
		\forall s_i\in V_x,~ s_i\neq s_1:~~~
			d_T(s_1,x)= d_{T'}(s_1,x)< d_{T'}(s_i,x)=d_T(s_i,x).
	\end{align}
	
	For each $s_i\in V_y$, note that the path from $s_i$ to $x$ passes through $y$,
	and $y\in \cell_T(s_1,\Sigma)$ because of Claim~\ref{cl:A}. Using that $s_1\in V_x$, 
	we have
	\begin{align}
	\label{eq:B2}
		\forall s_i\in V_y:~~~
			d_T(s_1,x)<d_T(s_i,x).
	\end{align}
	Combining~\eqref{eq:B1} and~\eqref{eq:B2}, the claim follows.
	\end{proof}
	
	\begin{claim}
	\label{cl:C}
	For each $u\in V_y$, we have $u\in U_1$ if and only if $u\in \cell_T(s_1,\Sigma)$.
	\end{claim}	
	\begin{proof}
	Consider \emph{some} solution $s^*_1,\dots, s^*_k$
	to \textsc{Generalized Graphic Inverse Voronoi in Trees} with input $I$,
	and set $\Sigma^*=\{ s^*_1,\dots,s^*_k\}$.
	This means that $U_i=\cell_T(s^*_i,\Sigma^*)$ for each $i\in [k]$.
	We also fix an index $\ell\in [k]\setminus \{ 1 \}$ such that $y\in U_\ell\cap U_1$.
	Recall that $U_\ell\subseteq V_y$ because $x\notin U_\ell$,
	and $W_1\subseteq V_x$ because $x\in W_1$ and $y\notin W_1$.
	Using Claim~\ref{cl:A} and using that $\Sigma^*$ is a solution to $I$ we have
	\begin{align}
	\label{eq:C1}
		d_T(s_1,y)=d_T(s_\ell,y)~~\text{ and }~~ d_T(s^*_1,y)=d_T(s^*_\ell,y).
	\end{align}
	
	Consider some $u\in U_1\cap V_y$. We will show that $u\in \cell_T(s_1,\Sigma)$.
	Consider the subtree $\tilde T$ defined by the paths connecting the vertices
	$s_1,s^*_1,s_\ell,s^*_\ell,u$. See Figure~\ref{fig:cl:C}.
	The path from $u$ to $s^*_1$ attaches to the path from $s_\ell^*$ to $y$
	at the vertex $y$. Indeed, if it attaches at
	another vertex $a\neq y$, then we would have $d_T(s^*_\ell,a)< d_T(s^*_1,a)$
	because of~\eqref{eq:C1}, which would imply $d_T(s^*_\ell,u)< d_T(s^*_1,u)$,
	contradicting the assumption that $u\in \cell_T(s^*_1,\Sigma^*)=U_1$.
	Because $W_\ell$ does not contain $y$ and $W_\ell$ is a
	connected subgraph of $T$ (applying Lemma~\ref{le:starshaped}), 
	$W_\ell$ is contained in a connected component of $T-y$. 
	Further since $W_\ell$ contains $s_\ell$ and $s^*_\ell$,
	and we replaced $S_\ell$ with $S_\ell\cap W_\ell$ in the preprocessing step	
	\footnote{
		Without the replacement $S_\ell$ with $S_\ell\cap W_\ell$, 
		the lemma is actually not true because it can happen
		that $s_\ell\in U_1\cap U_\ell$. 
		Indeed, we could have $s_\ell\in S_\ell\cap U_\ell\cap U_1$, which is not
		a valid placement in $I$ but would be a valid placement in $I'$.
	\label{note1}}, $s_\ell$ and $s^*_\ell$ are in the same component of $T-y$.
	Therefore, the $(u,s_1)$-path attaches to the $(s_\ell,y)$-path
	at the vertex $y$.
	
	\begin{figure}
	\centering
		\includegraphics[page=3,scale=1.2]{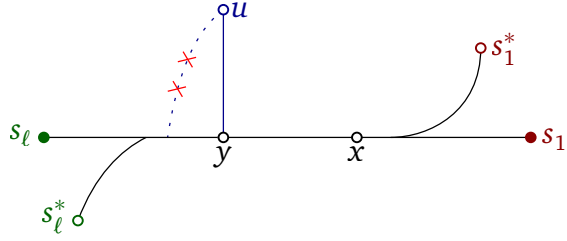}
		\caption{Situation in the proof of Claim~\ref{cl:C}.}
		\label{fig:cl:C}
	\end{figure}

	Since each path from $s_1$, $s^*_1$, $s_\ell$ and $s^*_\ell$ to $u$ passes
	through $y$, from~\eqref{eq:C1} we get
	\begin{align}
	\label{eq:C2}
		d_T(s_1,u)=d_T(s_\ell,u)~~\text{ and }~~ d_T(s^*_1,u)=d_T(s^*_\ell,u).
	\end{align}
	Together with $u\in U_1=\cell_T(s^*_1,\Sigma^*)$ we
	conclude that $u\in \cell_T(s^*_\ell,\Sigma^*)=U_\ell$.
	Since $u\in U_\ell=\cell_{T'}(s_\ell,\Sigma)$ we have
	\[
		\forall s_j\in V_y:~~~ d_T(s_1,u)=d_T(s_\ell,u)\le d_T(s_j,u).
	\]
	Together with the fact that each $s_j\in V_x$ is no closer to $u$ than $s_1$
	because $x\in \cell_T(s_1,\Sigma)$, we conclude that $u\in \cell_T(s_1,\Sigma)$.
	This finishes the left-to-right direction of the implication.	
		
	Consider now a vertex $u\in V_y\cap \cell_T(s_1,\Sigma)$. Since $y$ is on the path
	from $s_1$ to $u$, we obtain from~\eqref{eq:C1} that $d_T(s_\ell,u)\le d_T(s_1,u)$,
	and therefore $u\in \cell_T(s_\ell,\Sigma)$. Because $u\in V_y$, 
	$d_{T'}(s_\ell,u)=d_T(s_\ell,u)$, and distances in $T'$ can only be larger
	than in $T$,
	we have $u\in \cell_{T'}(s_\ell,\Sigma)=U_\ell = \cell_{T}(s^*_\ell,\Sigma^*)$.
	This means that
	\begin{align}
	\label{eq:C3}
		\forall i\in [k]:~~~ d_T(s^*_\ell,u)\le d_T(s^*_i,u).
	\end{align}
	Since $u\in \cell_T(s_1,\Sigma)$, $u\in \cell_T(s_\ell,\Sigma)$
	and $d_T(s_1,y)=d_T(s_\ell,y)$, the vertex $y$ is on the path from 
	$s_\ell$ to $u$.
	Note that the vertices $s_\ell$ and $s^*_\ell$ must be contained in the 
	same component of $T-y$ because the $(s_\ell,s^*_\ell)$-path 
	must be contained $W_\ell$  
	(Lemma~\ref{le:starshaped} and footnote~\ref{note1}), but $y\notin W_\ell$.
	This implies that $y$ is also on the path from $s^*_\ell$ to $u$.
	Since $y$ is also on the path from $s^*_1$ to $u$,
	we get from~\eqref{eq:C1} and~\eqref{eq:C3} that
	\[
		\forall i\in [k]:~~~ d_T(s^*_1,u)=d_T(s^*_\ell,u)\le d_T(s^*_i,u).
	\]	
	It follows that	$u\in \cell_T(s^*_1,\Sigma^*)=U_1$.
	This finishes the proof of the claim.
	\end{proof}

	We are now ready to prove Lemma~\ref{le:transform2}:
	for all $i\in [k]$ we have $U_i=\cell_T(s_i,\Sigma)$.
	Consider first the case $i=1$.
	Because of Claim~\ref{cl:C} we have
	$V_y\cap U_1 = V_y\cap \cell_T(s_1,\Sigma)$.
	We will show that $V_x\cap U_1 = V_x\cap \cell_T(s_1,\Sigma)$,
	which implies that $U_1=\cell_T(s_1,\Sigma)$ because $V_x\cup V_y=V(T)$.
	Recall that $V_x\cap U_1=U'_1\setminus \{y' \}$.
	Consider any vertex $u\in V_x\cap U_1$.
	Because of Claim~\ref{cl:B} we have $x\in \cell^<_T(s_1,\Sigma)$,
	and therefore $u\in V_x$ implies
	\[
		\forall s_j\in V_y:~~~ d_T(s_1,u)< d_T(s_j,u).
	\]	
	On the other hand, since $u\in U'_1=\cell_{T'}(s_1,\Sigma)$ we have
	\[
		\forall s_j\in V_x:~~~ 
		d_T(s_1,u)= d_{T'}(s_1,u) \le d_{T'}(s_j,u) = d_T(s_j,u).
	\]	
	We conclude that $d_T(s_1,u)\le d_T(s_j,u)$ for all $s_j\in \Sigma$,
	and therefore $u\in \cell_T(s_1,\Sigma)$.
	This shows that $V_x\cap U_1 \subseteq V_x\cap \cell_T(s_1,\Sigma)$.
	To show the inclusion in the other direction, 
	consider any $u\in V_x\cap \cell_T(s_1,\Sigma)$.
	We then have
	\[
		\forall s_j\in \Sigma:~~~
			d_{T'}(s_1,u) = d_T(s_1,u) \le d_T(s_j,u) \le d_{T'}(s_j,u),
	\]
	which implies $u\in \cell_{T'}(s_1,\Sigma) = U'_1$.
	It follows that $u\in V_x\cap U'_1= V_x\cap U_1$
	This finishes the proof of $U_1=\cell_T(s_1,\Sigma)$, that is, the case $i=1$.
	
	Consider now the indices $i\in [k]\setminus \{ 1\}$ with $s_i\in V_y$.
	Recall that we have $U_i=\cell_{T'}(s_i,\Sigma)$ and $U_i\subseteq V_y$.
	Fix an index $\ell\in [k]\setminus \{ 1 \}$ such that $y\in U_\ell\cap U_1$.
	Such an index exists because $y\notin W_1$.
	We must have $U_\ell\subseteq V_y$ because $x\notin U_\ell$,
	and thus $s_\ell \in V_y$.
	Because of Claim~\ref{cl:A} we have $d_T(s_1,y)=d_T(s_\ell,y)$,
	and using that $x\in \cell^<_T(s_1,\Sigma)$, implied by Claim~\ref{cl:B}, we get
	\begin{equation}
	\label{eq:another}
		\forall u\in V_y,~ s_j\in V_x:~~~
			d_T(s_\ell,u) \le d_T(s_1,u) \le d_T(s_j,u).
	\end{equation}
	This implies that in $T$ each vertex of $V_y$ has at least one closest site (from $\Sigma$)
	that belongs to $V_y$.
	Moreover, because $x\in \cell^<_T(s_1,\Sigma)$, no vertex of $V_x$ is closest to
	a site in $V_y$.
	Therefore, for each site $s_i\in V_y$, we have 
	\begin{equation}
	\label{eq:T}
		\cell_T(s_i,\Sigma) = \cell_{T[V_y]}(s_i,\Sigma\cap V_y).
	\end{equation}
	A similar argument can be used for $T'$, as follows.
	From \eqref{eq:another} we get
	\[
		\forall u\in V_y,~ s_j\in V_x:~~~
			d_{T'}(s_\ell,u) = d_T(s_\ell,u) \le d_T(s_1,u) \le d_T(s_j,u)< d_{T'}(s_j,u).
	\]
	This implies that in $T'$ each vertex of $V_y$ has all closest sites (from $\Sigma$)
	in $V_y$.
	Moreover, because $y'\in \cell^<_{T'}(s_1,\Sigma)$, no vertex of $V_x\cup\{ y'\}$ 
	is closest to a site in $V_y$.
	Therefore, for each site $s_i\in V_y$, we have
	\begin{equation}
	\label{eq:T'}
		\cell_{T'}(s_i,\Sigma) = \cell_{T'[V_y]}(s_i,\Sigma\cap V_y).
	\end{equation}
	Noting that $T[V_y]=T'[V_y]$, we use \eqref{eq:T} and \eqref{eq:T'} to obtain, 
	for each site $s_i\in V_y$,
	\[
		U_i = \cell_{T'}(s_i,\Sigma) = \cell_{T'[V_y]}(s_i,\Sigma\cap V_y) = 
		\cell_{T[V_y]}(s_i,\Sigma\cap V_y) 
		= \cell_T(s_i,\Sigma).
	\]
	
	It remains to consider the indices $i\in [k]\setminus \{ 1\}$ with $s_i\in V_x$.
	The approach is similar, and actually simpler because $x\in \cell^<_T(s_1,T)$
	implies that there is no influences from the sites $V_y$.
	(No care is needed for $y'$ because it belongs to $\cell^<_{T'}(s_1,\Sigma)$.
	Therefore, for each $s_i\in V_x\setminus \{ s_1 \}$,
	\[
		U_i = \cell_{T'}(s_i,\Sigma) = \cell_{T'[V_x]}(s_i,\Sigma\cap V_x) = 
		\cell_{T[V_x]}(s_i,\Sigma\cap V_x) 
		= \cell_T(s_i,\Sigma).
	\]
	
	We have covered all the cases: $s_i=s_1$, $s_i\in V_y$, 
	and $s_i\in V_x\setminus \{ s_1 \}$.
	This finishes the proof of the Lemma.
\end{proof}

It is important to note that the transformation described above only works for trees.
A similar transformation for arbitrary graphs may have feasible solutions that do not
correspond to solutions in the original problem. See Figure~\ref{fig:disjoint4} for
a simple example.

\begin{figure}
\centering
	\includegraphics[page=4]{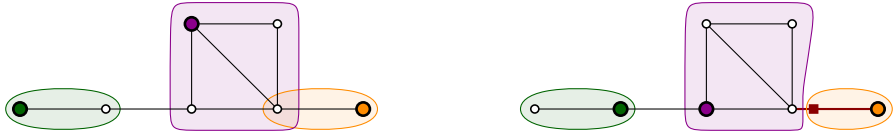}
	\caption{A similar transformation for arbitrary graphs does not work.
		On the right side we have the transformed instance with a feasible solution
		that does not correspond to a solution in the original setting.}
	\label{fig:disjoint4}
\end{figure}

Another important point is that we need the assumption that $I$ had a solution. This means
that, any solution $\Sigma'$ we obtain after making a sequence of expansions, has
to be tested in the original instance. However, if $\Sigma'$ is not a valid solution in $I$,
then $I$ has no solution.

\medskip

We are going to make a sequence of edge expansions.
The replacement of $S_i$ with $S_i\cap W_i$ (for $i\in [k]$) 
needs to be made only at the preprocessing step and it is important 
for correctness (see footnote~\ref{note1}).
It is not needed later on because with each edge expansion 
the sets $W_i$ (for $i\in [k]$) can only increase.

Consider an instance $I=(T,((U_1,S_1),\dots,(U_k,S_k)))$.
Set $I_0=I$ and define, for $t\ge 1$, the instance $I_t$
by transforming $I_{t-1}$ using an expansion of some edge.
For all expansions we use the same parameter $\eps$.
We finish the sequence when we obtain the first instance
$\tilde I=(\tilde T,((\tilde U_1,\tilde S_1),\dots,(\tilde U_k,\tilde S_k)))$
such that the sets $\tilde U_1,\dots,\tilde U_\ell$ are pairwise disjoint.
Note that this procedure stops because the number of pairs $(i,j)$ with $U_i\cap U_j \ne\emptyset$
decreases with each expansion. This implies that the number of steps is at most $\binom{k}{2}$.
In fact, the number of steps is even smaller.

\begin{lemma}\label{le:number of steps}
	$\tilde I$ is reached after at most $k-1$ edge expansions.
\end{lemma}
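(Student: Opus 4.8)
The plan is to track the \emph{overlap graph} of the current instance and show that each edge expansion increases its number of connected components by at least one; since a graph on $k$ vertices has at most $k$ components and the starting graph has at least one, this caps the number of expansions at $k-1$.

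Concretely, for any instance $J$ occurring in the sequence $I_0=I,I_1,\dots$, with tree $T_J$ and cells $U_1^J,\dots,U_k^J$, let $\mathcal G(J)$ be the graph on vertex set $[k]$ having an edge $ij$ exactly when $U_i^J\cap U_j^J\ne\emptyset$, and let $c(J)$ be its number of connected components. The one invariant I would carry along the sequence is that \emph{each cell $U_i^J$ induces a connected subtree of $T_J$}. This holds for $I_0$ after the preprocessing step (where it is explicitly checked, otherwise ``no solution'' is reported), and it is preserved by an expansion: if $xy\in E_1$ is expanded (after renaming) with $x\in W_1$, $y\in U_1\setminus W_1$, then the new cell $U_1'=\{y'\}\cup(U_1\cap V_x)$ is the union of the single new vertex $y'$ with the component of the subtree $U_1-y$ containing $x$, glued along the edge $xy'$, hence connected; all other cells are unchanged. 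Also note that when a cell is expanded it only shrinks on the old vertex set ($U_1'\cap V(T_J)=U_1\cap V_x\subseteq U_1$), and the new vertex $y'$ belongs to no cell other than $U_1'$.

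The key step is that every expansion strictly increases $c(\cdot)$. Keep the notation above for the expansion turning $J$ into $J'$. Since $y\in U_1^J\setminus W_1^J$, there is an index $\ell\ne 1$ with $y\in U_\ell^J$, so $\{1,\ell\}\in E(\mathcal G(J))$. In $T_{J'}$ let $V_x,V_y$ be the two components of $T_{J'}-y'$. Every cell of $J'$ other than $U_1'$ equals a cell of $J$, which is a connected subgraph of $T_{J'}$ avoiding $y'$ (here the tree structure is essential: a connected vertex set avoiding $y'$ lies entirely within one component of $T_{J'}-y'$), hence is contained in $V_x$ or in $V_y$; and $U_1'\subseteq V_x\cup\{y'\}$ by construction. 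Put $A=\{i:U_i^{J'}\subseteq V_x\cup\{y'\}\}$ and $B=[k]\setminus A$. As $V_x\cup\{y'\}$ and $V_y$ are disjoint, $\mathcal G(J')$ has no edge between $A$ and $B$; moreover $1\in A$ while $\ell\in B$, because $y\in V_y$ forces $U_\ell^{J'}=U_\ell^J\subseteq V_y$. Finally $E(\mathcal G(J'))\subseteq E(\mathcal G(J))$: any edge of $\mathcal G(J')$ comes from two cells sharing an old vertex (the only new vertex, $y'$, lies in $U_1'$ alone), and on old vertices cells only shrink. Thus the partition of $[k]$ into components of $\mathcal G(J')$ refines that of $\mathcal G(J)$, yet $1$ and $\ell$, which share a component of $\mathcal G(J)$, are separated by the $A\,|\,B$ cut in $\mathcal G(J')$; hence $c(J')\ge c(J)+1$.

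To finish: $c(I_0)\ge 1$, and the instance $\tilde I$ has pairwise disjoint cells, i.e.\ $E(\mathcal G(\tilde I))=\emptyset$ and $c(\tilde I)=k$. If $\tilde I$ is reached after $m$ expansions, then $k=c(\tilde I)\ge c(I_0)+m\ge 1+m$, so $m\le k-1$. (The same strict inclusion $E(\mathcal G(J'))\subsetneq E(\mathcal G(J))$, witnessed by the vanished pair $\{1,\ell\}$, also shows the process terminates, sharpening the trivial $\binom k2$ bound noted above.) I expect the main obstacle to be the bookkeeping around the new vertex $y'$ and the invariant: one must be careful that ``$y'$ lies in only one cell'' is genuinely what makes both $E(\mathcal G(J'))\subseteq E(\mathcal G(J))$ and the $A\,|\,B$ split work, and that the side-of-$y'$ argument uses connectivity of the cells, which is exactly the point where the transformation, and hence this counting argument, is special to trees.
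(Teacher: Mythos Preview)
Your proof is correct. Both your argument and the paper's hinge on the same structural fact: after expanding $xy\in E_1$, the cells split into those contained in $V_x\cup\{y'\}$ and those contained in $V_y$, with no overlap across the cut (and both sides nonempty). The paper packages this as an induction on $k$: one expansion yields two independent subproblems of sizes $k_x$ and $k_y$ with $k_x+k_y=k$ and $k_x,k_y\ge 1$, so by induction the total is at most $1+(k_x-1)+(k_y-1)=k-1$. You instead track the number of connected components of the overlap graph as a potential that strictly increases with each expansion and is bounded between $1$ and $k$. Your route is arguably cleaner in that it avoids having to specify the two subinstances precisely (which the paper's three-line proof leaves implicit), and it makes the tightness condition transparent: the bound $k-1$ is met exactly when every expansion raises the component count by exactly one. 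The paper's induction is shorter once the split is accepted, but your careful verification that $x\in W_1$ forces every other cell to avoid the subdivided edge, and that $y'$ lies in a single cell, is exactly the bookkeeping the paper glosses over.
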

\begin{proof}
	We prove this by induction on $k$. There is nothing to show if $k=1$. 
	Otherwise, note that the sets $U_i$ in $V_x$ and those in $V_y$ (respectively) 
	give rise to two independent subproblems with $k_x$ and $k_y$
	sites (respectively), where $k_x+k_y=k$. 
	By induction, the number of edge expansions is at most $1+(k_x-1)+(k_y-1)=k-1$.
\end{proof}

The next lemma shows that using the same parameter $\eps$ for all edge expansions
is a correct choice.
This is due to our careful definition of resolution $\res(\cdot)$.

\begin{lemma}
\label{le:transform_disjoint}
	Assume that $0<\eps<\res(I)$ and the answer to
    \textsc{Generalized Graphic Inverse Voronoi in Trees} with input $I$ is ``yes''.
	Then $\Sigma$ is a solution to \textsc{Generalized Graphic Inverse Voronoi in Trees}
	with input $I$ if and only if $\Sigma$ is also a solution to
	\textsc{Generalized Graphic Inverse Voronoi in Trees} with input $\tilde I$.
\end{lemma}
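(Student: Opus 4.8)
The plan is to chain together Lemmas~\ref{le:transform1} and~\ref{le:transform2} along the sequence $I_0=I, I_1,\dots, I_T=\tilde I$ of intermediate instances, where each $I_t$ is obtained from $I_{t-1}$ by expanding one edge by $\eps$. The only subtlety is that Lemma~\ref{le:transform2} (the ``converse'' direction) requires two hypotheses about the \emph{intermediate} instance: that $0<\eps<\res(I_{t-1})$, and that $I_{t-1}$ has a solution. So the proof splits into two parts: propagating these two hypotheses along the sequence, and then concluding by induction.

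\begin{proof}
	We use the sequence of instances $I=I_0,I_1,\dots,I_T=\tilde I$, where
	$I_t$ is obtained from $I_{t-1}$ by expanding one edge by $\eps$, and $T\le k-1$ by Lemma~\ref{le:number of steps}.
	We first argue that $\res(I_t)\ge \res(I)$ for every $t$, by induction on $t$; it suffices to prove that a single edge expansion does not decrease the resolution. Consider the expansion of an edge $xy\in E_1$ of $I_{t-1}$ producing $I_t$, using the notation of the proof of Lemma~\ref{le:transform1}: $y'$ is the new vertex, $V_x,V_y$ are the vertex sets of the two components of $T'-y'$, and $U'_1=\{y'\}\cup (V_x\cap U_1)$. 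Take any $i,j\in[k]$, $u\in \tilde U_i\cap \tilde U_j$ (where tildes now refer to $I_t$), and any $s_i\in S_i$, $s_j\in S_j$; we must check that $|d_{T'}(s_i,u)-d_{T'}(s_j,u)|\ge \res(I_{t-1})$ whenever this difference is nonzero.

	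The key point is that the sets $S_i$ are unchanged, and after the preprocessing step we have $S_i\subseteq W_i\subseteq V_x$ or $S_i\subseteq V_y$ for each $i$ (each $W_i$ lies in one component of $T'-y'$, since it induces a connected subgraph avoiding $x$ or containing $x$ but not $y$); moreover, by the same reasoning $u$ belongs to a unique side unless $u=y'$, and the candidate cell $U'_1$ only relabels which vertices of the old $U_1$ are on the $V_x$ side. So for any pair $s_i,u$, one of three cases holds: both $s_i$ and $u$ lie in $V_x$, or both lie in $V_y$, in which case $d_{T'}(s_i,u)=d_T(s_i,u)$; or $s_i$ and $u$ lie on opposite sides, in which case $d_{T'}(s_i,u)=d_T(s_i,u)+\eps$; or $u=y'$, in which case $d_{T'}(s_i,y')=d_T(s_i,y)$ if $s_i\in V_x$ and $d_T(s_i,y)+\eps$ if $s_i\in V_y$. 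In the expression $d_{T'}(s_i,u)-d_{T'}(s_j,u)$ the additive $\eps$ contributions from $s_i$ and from $s_j$ either both appear or both vanish (they depend only on which side $u$ is, relative to $s_i$ and $s_j$ respectively), \emph{except} they can differ if $s_i,s_j$ are on opposite sides; but then $u$ being a common point of $\tilde U_i\cap \tilde U_j$ forces, via $\eps<\res(I_{t-1})$ and the structure of cells established in Lemma~\ref{le:transform1}, that the corresponding old vertices coincide and the difference $d_T(s_i,u)-d_T(s_j,u)$ is the same as the $I_{t-1}$ quantity already bounded by $\res(I_{t-1})$. Either way, $d_{T'}(s_i,u)-d_{T'}(s_j,u)$ is either $0$ or equals $d_T(s_i,u)-d_T(s_j,u)$ evaluated with $u$ replaced by $y$ when $u=y'$, and by the defining minimum of $\res(I_{t-1})$ its absolute value is at least $\res(I_{t-1})\ge\res(I)$. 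Hence $\res(I_t)\ge\res(I)>\eps$.

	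Next we propagate solvability. We are given that $I=I_0$ has a solution $\Sigma^\star$. By Lemma~\ref{le:transform1} (which needs only $\eps>0$), $\Sigma^\star$ is a solution to $I_1$; inductively $\Sigma^\star$ is a solution to every $I_t$, so each $I_t$ is a ``yes'' instance. Now we prove by induction on $t$ that the solution sets of $I_t$ and of $I$ coincide. For $t=0$ there is nothing to prove. Assume the solution sets of $I_{t-1}$ and $I$ coincide. If $\Sigma$ solves $I_{t-1}$, then $\Sigma$ solves $I_t$ by Lemma~\ref{le:transform1}. Conversely, if $\Sigma$ solves $I_t$, then since $0<\eps<\res(I)\le\res(I_{t-1})$ and $I_{t-1}$ is a ``yes'' instance, Lemma~\ref{le:transform2} (applied with $I_{t-1}$ in the role of $I$ and $I_t$ in the role of $I'$) gives that $\Sigma$ solves $I_{t-1}$. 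Thus the solution sets of $I_t$ and $I_{t-1}$ coincide, and by the inductive hypothesis they coincide with that of $I$. Taking $t=T$ finishes the proof, since $\tilde I=I_T$.
\end{proof}
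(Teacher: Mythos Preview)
Your overall architecture matches the paper exactly: show $\res(I_t)\ge\res(I_{t-1})$ for each step, propagate the ``yes'' answer forward via Lemma~\ref{le:transform1}, and then chain Lemmas~\ref{le:transform1} and~\ref{le:transform2} by induction on $t$. The second half of your proof is clean and correct.

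The resolution paragraph, however, is overcomplicated and contains an unjustified step. You split into cases on whether $s_i,s_j,u$ are on the same side of $y'$, and in the ``$s_i,s_j$ on opposite sides'' case you write that $u\in\tilde U_i\cap\tilde U_j$ ``forces, via $\eps<\res(I_{t-1})$ and the structure of cells established in Lemma~\ref{le:transform1}, that the corresponding old vertices coincide\ldots''. That sentence does not establish anything: Lemma~\ref{le:transform1} is about \emph{solutions}, not about arbitrary $s_i\in S_i$, and appealing to $\eps<\res(I_{t-1})$ to bound $\res(I_t)$ is circular in spirit. Similarly, you treat $u=y'$ as a live case, but $y'$ belongs only to $U'_1$ and to no other $U_j$, so it never lies in an intersection.

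The paper's argument avoids all of this with one structural observation: after the expansion, \emph{no} candidate cell of $I_t$ has vertices in both $V_x$ and $V_y$ (for $j\neq 1$ this was shown in the proof of Lemma~\ref{le:transform1}; for $U'_1$ it is by construction $U'_1\subseteq V_x\cup\{y'\}$). Hence whenever $u\in\tilde U_i\cap\tilde U_j$, the vertex $u$ and the cells $\tilde U_i,\tilde U_j$ all lie on the same side, and since $S_i\subseteq\tilde U_i$, $S_j\subseteq\tilde U_j$, so do $s_i$ and $s_j$. Therefore $d_{T'}(s_i,u)-d_{T'}(s_j,u)=d_T(s_i,u)-d_T(s_j,u)$ exactly, and since $u\in U_i\cap U_j$ in $I_{t-1}$ as well, the bound $\res(I_t)\ge\res(I_{t-1})$ is immediate. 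Replace your case analysis with this two-line argument and the proof is complete.
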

\begin{proof}
	Note that, by construction, $\res(I_{t-1})\le \res(I_t)$ for all $t\ge 1$.
	Indeed, when we expand the edge $xy$ inserting $y'$, then there is no
	set $U_i$ that is on both sides of $T'-y'$. This means that for all
	the parameters $s_i,s_j,u$ considered in the definition of $\res(I_t)$
	we have $d_{T'}(s_i,u) - d_{T'}(s_j,u)= d_T(s_i,u) - d_T(s_j,u)$.	
    Therefore, $\eps<\res(I_t)$ for all $t$.
    The claim now follows easily from
    Lemmas~\ref{le:transform1} and \ref{le:transform2}
    by induction on $t$.
\end{proof}

\subsection{Transforming to maximum degree 3}
\label{sec:degree3}

Consider an instance $I=(T,((U_1,S_1),\dots, (U_k,S_k)))$
for the problem \textsc{Generalized Graphic Inverse Voronoi in Trees}, where $T$ is a tree and the sets
$U_1,\dots, U_k$ are pairwise disjoint.
We assume that each $U_i$ induces a connected subgraph in $T$.
See Figure~\ref{fig:maxdegree3} for an example of such an instance viewed around a vertex of degree $>3$.
We want to transform it into another instance $I'=(T',((U'_1,S'_1),\dots, (U'_k,S'_k)))$
where the maximum degree of $T'$ is $3$,
the sets $U'_1,\dots, U'_k$ are pairwise disjoint,
and a solution to $I'$ corresponds to a solution of $I$.

\begin{figure}
\centering
	\includegraphics[page=5]{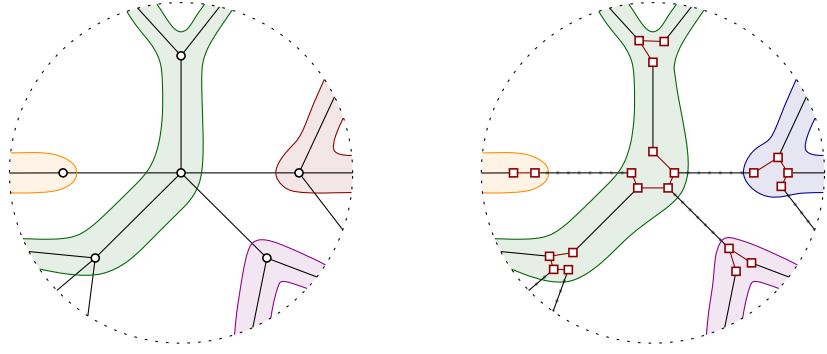}
	\caption{The behavior of the reduction to obtain maximum degree $3$.
			Left: part of an instance with a tree of arbitrary degrees.
			Right: result after the reduction for the left instance.
				The edges between different candidate Voronoi cells are shortened
				by $\delta$.}
	\label{fig:maxdegree3}
\end{figure}

In the transformations we will need ``short'' edges again and
we will shorten some edges.
We need \emph{another} version of the resolution:
\[
	\res'(I) ~=~ \min \left( \RR_{>0}\cap \{ d_T(v,u) - d_T(v',u)\mid
						v,v', u\in V(T) \}\right).
\]
In particular, $\res'(I)\le \lambda(uv)$ for all edges $uv$ of $T$.
From the definition we have the following property:
\begin{align}
\label{ineq:res0}
	\forall v,v',u\in V(T):~~~~
			& d_T(v,u)< d_T(v',u) ~ \Longrightarrow ~
									d_T(v,u) + \frac{\res'(I)}{2} < d_T(v',u).
\end{align}

We explain how to transform the instance into one where all
vertices have maximum degree $3$. We will use $T'$ and $\lambda'$
for the new graph and its edge-lengths. The construction uses
two values $\delta$ and $\delta'$, where
\[
	0 < \delta< \frac{\res'(I)}{6n} 
	~~\text{ and }~~
	\delta' = \frac{\delta}{4n} \, .
\]
The intuition is that edges connecting different candidate Voronoi cells
are shorten by $\delta$, and then we split the vertices of degree larger
than three using short edges of length $\delta'$, 
where $0<\delta'\ll \delta\ll \res'(I)$.

For each edge $uv$ of $T$ we place two vertices $a_{u,v}$ and $a_{v,u}$
in $T'$, and connect them with an edge. If $u$ and $v$ belong to
the same set $U_i$, then the length $\lambda'$
of such an edge $a_{u,v}a_{v,u}$ is set to $\lambda(uv)$.
If $u\in U_i$ and $v\in U_j$ with $i\neq j$, 
then the length $\lambda'$
of such an edge $a_{u,v}a_{v,u}$ is set to $\lambda(uv)-\delta>0$.
For each vertex $u$ of $T$, we connect the vertices
$\{ a_{u,v}\mid uv\in E(T)\}$ with a path.
The length $\lambda'$ of the edges on these $|V(T)|$ paths
is set to $\delta'$.
Finally, for each $i\in [k]$ we define the sets
\begin{align*}
	U'_i ~&=~ \{ a_{u,v}\mid u\in U_i, uv\in E(T)\},\\
	S'_i ~&=~ \{ a_{u,v}\mid u\in S_i, uv\in E(T)\}.	
\end{align*}
Note that the sets $U'_1,\dots, U'_k$ are pairwise disjoint.
For an example of the whole process see Figure~\ref{fig:maxdegree3}.

To recover the solutions, we define the projection map $\pi(a_{u,v})=u$.
Thus, $\pi$ sends each vertex of $T'$ to the corresponding vertex of $T$ that
was used to create it.
Note that for each $i\in[k]$ we have $\pi(S'_i)=S_i$ and $\pi(U'_i)=U_i$.

The distances in $T'$ and $T$ are closely related. Using
that the tree $T'$ has fewer than $2n$ new short edges of length $\delta'$
and the path connecting any two vertices of $U'_i$ is contained in $T'[U'_i]$
we get
\begin{align}
\begin{split}  
\label{ineq:res1}
	\forall i\in [k],~u,v\in U'_i:~~~ d_T(\pi(u),\pi(v)) \le d_{T'}(u,v) &<
							d_T(\pi(u),\pi(v))+ 2n\delta' 
							\\ &< d_T(\pi(u),\pi(v))+ \delta.
\end{split}
\end{align}
Using that the path between two vertices in different sets $U_i$ and $U_j$, 
$i\neq j$, uses at least one edge and at most $n-1$ edges
that have been shortened by $\delta$, we get
\begin{align}
\begin{split}
\label{ineq:res2}
	\forall i \neq j \in [k], ~u\in U'_i, ~v\in U'_j:~~ 
		d_T(\pi(u),\pi(v))-n\delta < d_{T'}(u,v) &<
							d_T(\pi(u),\pi(v)) - \delta + 2n\delta' \\ 
						 &< d_T(\pi(u),\pi(v)).
\end{split}
\end{align}

\begin{lemma}
\label{le:transform_3}
	Suppose that $0< \delta < \res'(I) / 6n$
	and the sets $U_1,\dots, U_k$ are pairwise disjoint subsets of $V(T)$
	that induce connected subtrees of $T$.
	The answer to $(T, ((U_1,S_1),\dots, (U_k,S_k)))$ is ``yes''
	if and only if the answer to $(T', ((U'_1,S'_1),\dots, (U'_k,S'_k) ))$ is ``yes''. 	
\end{lemma}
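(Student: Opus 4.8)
The plan is to prove both implications by transferring, vertex by vertex, the strict closest-site inequalities between $T$ and $T'$, using the two distance estimates~\eqref{ineq:res1} and~\eqref{ineq:res2} together with the resolution gap~\eqref{ineq:res0}. I would first record a few structural observations: every vertex of $T'$ has the form $a_{u,v}$ with $\pi(a_{u,v})=u$, so $U'_1,\dots,U'_k$ partition $V(T')$, and $\pi$ maps $U'_i$ onto $U_i$ and $S'_i$ onto $S_i$; likewise $U_1,\dots,U_k$ partition $V(T)$. Since in both instances the candidate cells are pairwise disjoint, by the notion of compatibility recalled in Section~\ref{sec:basic} a placement of sites (each in its prescribed set) is a solution precisely when, for every index $i$, every vertex $u$ of the $i$-th cell, and every $j\ne i$, the site of cell $i$ is \emph{strictly} closer to $u$ than the site of cell $j$ (there is no equidistant case because the cells are disjoint). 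If $T$ has a single vertex the statement is trivial, so I assume $|V(T)|\ge 2$; then every vertex of $T$ has a neighbour and the representatives chosen below always exist.

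For the direction ``$I'$ yes $\Rightarrow$ $I$ yes'' I would take a solution $\{s'_1,\dots,s'_k\}$ of $I'$ and set $s_i=\pi(s'_i)\in S_i$; these are distinct since they lie in the pairwise disjoint cells $U_i$. To check the closest-site condition in $T$, fix $u\in U_i$, a neighbour $w$ of $u$, and $j\ne i$. In $T'$ the vertex $a_{u,w}$ lies in $U'_i=\cell_{T'}(s'_i,\{s'_1,\dots,s'_k\})$ but not in $U'_j$, so it is strictly closer to $s'_i$ than to $s'_j$ (equality would place $a_{u,w}$ in $U'_j$ as well). Then~\eqref{ineq:res1} gives $d_T(s_i,u)\le d_{T'}(s'_i,a_{u,w})$ and~\eqref{ineq:res2} gives $d_{T'}(s'_j,a_{u,w})<d_T(s_j,u)$, so the chain $d_T(s_i,u)\le d_{T'}(s'_i,a_{u,w})<d_{T'}(s'_j,a_{u,w})<d_T(s_j,u)$ yields $d_T(s_i,u)<d_T(s_j,u)$, as required. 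Note that this direction uses only the two distance estimates and not the resolution gap.

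For the direction ``$I$ yes $\Rightarrow$ $I'$ yes'' I would take a solution $\Sigma=\{s_1,\dots,s_k\}$ of $I$, pick for each $i$ a neighbour $w_i$ of $s_i$, and set $s'_i=a_{s_i,w_i}\in S'_i$ (distinct, as they lie in distinct cells $U'_i$). To check the closest-site condition in $T'$, fix $a_{u,v}\in U'_i$ (so $u\in U_i$) and $j\ne i$. Since $u\in U_i=\cell_T(s_i,\Sigma)$ and $u\notin U_j=\cell_T(s_j,\Sigma)$, one gets $d_T(s_i,u)<d_T(s_j,u)$ (equality is excluded by disjointness of the cells), hence by~\eqref{ineq:res0} a genuine gap $d_T(s_i,u)+\res'(I)/2<d_T(s_j,u)$. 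Now~\eqref{ineq:res1} bounds $d_{T'}(s'_i,a_{u,v})<d_T(s_i,u)+\delta$ from above, while~\eqref{ineq:res2} bounds $d_{T'}(s'_j,a_{u,v})>d_T(s_j,u)-n\delta>d_T(s_i,u)+\res'(I)/2-n\delta$ from below; since $0<\delta<\res'(I)/(6n)$ and $n\ge 1$ we have $(n+1)\delta\le 2n\delta<\res'(I)/3<\res'(I)/2$, so the upper bound for the first quantity lies below the lower bound for the second, giving $d_{T'}(s'_i,a_{u,v})<d_{T'}(s'_j,a_{u,v})$.

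I expect the main obstacle to be bookkeeping rather than a deep idea. The two things to be careful about are: that the chosen representatives $s'_i$ (resp.\ $s_i$) genuinely lie in $S'_i$ (resp.\ $S_i$) and in the right cell of $T'$ (resp.\ $T$), which is immediate from $\pi(S'_i)=S_i$ and $\pi(U'_i)=U_i$; and, above all, that the perturbation parameter $\delta$ (and through it $\delta'$) is small enough compared with $\res'(I)$ that the distance distortions bounded in~\eqref{ineq:res1}--\eqref{ineq:res2} never swallow the gap~\eqref{ineq:res0} separating distinct distances in $T$ — which is exactly what $\delta<\res'(I)/(6n)$ guarantees. The only slightly delicate step in the forward direction is deducing the \emph{strict} inequality $d_T(s_i,u)<d_T(s_j,u)$ from disjointness of the cells; everything else is the short inequality chain above applied uniformly to every vertex.
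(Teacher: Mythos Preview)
Your proposal is correct and follows essentially the same approach as the paper's own proof: in both directions you transfer the strict closest-site inequalities between $T$ and $T'$ using the distance estimates~\eqref{ineq:res1}--\eqref{ineq:res2}, invoking the resolution gap~\eqref{ineq:res0} only for the direction $I\Rightarrow I'$. The differences are purely cosmetic --- you name explicit representatives $a_{u,w}$ and $a_{s_i,w_i}$ where the paper simply takes an arbitrary preimage under $\pi$, and your arithmetic $(n+1)\delta<\res'(I)/2$ is slightly tighter than the paper's $3n\delta<\res'(I)/2$ --- but the logical structure is identical.
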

\begin{proof}
	We first show the ``if'' part. Suppose that the answer to $I'$ is ``yes''. Then, 
	there exist $s'_1, \ldots, s'_k$ with $s'_i \in S'_i$ 
	and $U'_i=\cell_{T'}(s'_i,\{ s'_1,\dots,s'_k\})$ for each $i\in [k]$.
	Set $s_i=\pi(s'_i)$ for all $i\in [k]$, $\Sigma'=\{s'_1,\dots, s'_k\}$
	and $\Sigma=\{s_1,\dots,s_k\}$.
	
	Consider any fixed $i\in [k]$ and any vertex $u\in U_i$. 
	There exists some vertex $u' \in U'_i$ such that $u=\pi(u')$.
	Since $u'\in U'_i=\cell_{T'}(s'_i,\Sigma')$ and 
	$u'\notin U'_j=\cell_{T'}(s'_j,\Sigma')$ for all $j\neq i$, we have
	\[
		\forall j\in [k]\setminus \{i \}:~~~ 
				d_{T'}(s'_i,u') < d_{T'}(s'_j,u').
	\]
	For $j\neq i$, since $u,s_i\in U_i$ and $s_j\notin U_i$ 
	we use the relations~\eqref{ineq:res1} and~\eqref{ineq:res2} to get 
	\[
		\forall j\in [k]\setminus \{i \}:~~~ 
				d_T(s_i,u) \le d_{T'}(s'_i,u') < d_{T'}(s'_j,u') < d_T(s_j,u).
	\]
	We conclude that $u\in \cell_T(s_i,\Sigma)$ and $u\notin \cell_T(s_j,\Sigma)$
	for all $j\in [k]\setminus\{ i\}$.
	It follows that $U_i=\cell_T(s_i,\Sigma)$ for all $i\in [k]$,
	and the answer to the instance $I$ ``yes''.
		
	Now we turn to the ``only if'' part. 
	Then, there exist $s_1, \ldots, s_k$ with $s_i \in S_i$ 
	and $U_i=\cell_T(s_i,\{ s_1,\dots,s_k\})$ for each $i\in [k]$. 
	Take a vertex $s'_i \in \pi^{-1}(s_i)$ for each $i\in [k]$,
	$\Sigma=\{s_1,\dots,s_k\}$ and $\Sigma'=\{s'_1,\dots, s'_k\}$.	

	Consider any fixed index $i\in [k]$ and any vertex $u'\in U'_i$.
	Set $u=\pi(u') \in U_i$.
	Since $u\in U_i=\cell_T(s_i,\Sigma)$ and 
	$u\notin U_j=\cell_T(s_j,\Sigma)$ for all $j\neq i$, we have
	\[
		\forall j\in [k]\setminus \{i \}:~~~ 
				d_T(s_i,u) < d_T(s_j,u).
	\]
	Because of property~\eqref{ineq:res0} we have 
	\[
		\forall j\in [k]\setminus \{i \}:~~~ 
				d_T(s_i,u) + \frac{\res'(I)}{2} < d_T(s_j,u),
	\]
	and thus
	\[
		\forall j\in [k]\setminus \{i \}:~~~ 
				d_T(s_i,u) + 3n\cdot\delta < d_T(s_j,u).
	\]
	Using the relations~\eqref{ineq:res1} and~\eqref{ineq:res2} we get 
	\[
		\forall j\in [k]\setminus \{i \}:~~~ 
				d_{T'}(s'_i,u') < d_T(s_i,u) + 2n\delta 
								< d_T(s_j,u) - n\delta 
								< d_{T'}(s'_j,u').
	\]
	This implies that $u'\in \cell_{T'}(s'_i,\Sigma')$
	and $u'\notin \cell_{T'}(s'_j,\Sigma')$ for all $j \in [k]\setminus \{ i \}$.
	It follows that $U'_i= \cell_{T'}(s'_i,\Sigma')$.
	Since this holds for all $i\in [k]$, it follows that 
	the answer to the instance $I'$ ``yes''. 
\end{proof}

\subsection{Algorithm to transform}
\label{sec:algorithm}
We are now ready to explain algorithmic details of the whole transformation and
explain its efficient implementation. 

Suppose that we have an instance $I=(T,(U_1,\dots,U_k))$
for the problem \textsc{Graphic Inverse Voronoi in Trees}.
Let us use $n$ for the number of vertices in $T$ and
$N=N(I)=|V(T)|+\sum_{i} |U_i|$ for the description size
of $I$. As mentioned earlier, we can convert in $O(N)$ time this
to an equivalent instance $(T, ((U_1,S_1),\dots, (U_k,S_k)))$ for the problem
\textsc{Generalized Graphic Inverse Voronoi in Trees}.
Let $I'$ be this new instance and note that its description size is $O(N)$.

First, we root the tree $T$ at an arbitrary vertex $r$ and store
for each vertex $v$ of $T$ its parent node $\parent(v)$. (The parent of $r$ is set to \textsc{null}.)
We add to each vertex a flag to indicate whether it belongs to
a subset of vertices under consideration. Initially all flags are set to false.
This takes $O(|V(T)|)=O(N)$ time.

With this representation of $T$ we can check whether any given subset $U$ of vertices of $T$
induces a connected subgraph in $O(|U|)$ time. 
The key observation is that the subgraph $T[U]$ induced by $U$ is connected if and only if
there is exactly one vertex in $U$ whose parent does not belong to $U$.
(Here we use the convention that for the root $\parent(r)= \textsc{null} \notin U$.)
To check this condition, we set the flag of the vertices of $U$ to true,
count how many vertices $v\in U$ have the property that $\parent(v)\notin U$,
decide the connectivity of $T[U]$ depending on the counter, and at the end set
the flags of vertices of $U$ back to false.

For each vertex $v\in V(T)$ we make a list $L(v)$ that contains the
indices $i\in [k]$ with $v\in U_i$.
The lists $L(v)$, for all $v\in V(T)$, can be computed in $O(N)$ time
by scanning the sets $U_1,\dots, U_k$: for each $v\in U_i$ we add $i$ to $L(v)$.
Note that a vertex $v\in V(T)$ belongs to $W_i$ if and only if $i$
is the only index in the list $L(v)$. Thus, for any given $v\in U_i$, we can
decide in $O(1)$ time whether $v\in W_i$. With this we can compute
the sets $W_1,\dots,W_k$ in $O(\sum_i |U_i|)=O(N)$ time. 
Scanning the sets $S_1,\dots,S_k$, we can
replace each set $S_i$ with the set $S_i\cap W_i$.
Together we have spent $O(N)$ time and we have made the preprocessing
step described after Lemma~\ref{le:transform0}.

During the algorithm, as we make the edge expansions,
we maintain the lists $L(v)$ for each vertex $v$ and the rooted representation
of the tree.

Now we explain how to make the expansions of the edges in \emph{batches}: 
we iterate over the indices $i\in [k]$ and, for each fixed $i$, 
we identify $E_i$ and make all the edge expansions for $E_i$ in $O(|U_i|)$ time.
Assume for the time being that $\eps$ is already known.
We will discuss its choice below.

Consider any fixed index $i\in [k]$.
We compute $W_i$ in $O(|U_i|)$ time using the lists $L(v)$ for $v\in U_i$. 
(The set $W_i$ may have changed because of expansions for $E_j$, $j\neq i$, and thus
has to be computed again.)
We also check in $O(|U_i|)$ time that $U_i$ and $W_i$ induce connected subgraphs of $T$
using the representation of $T$.
(If any of them fails the test, then we correctly report that there is no solution.)
We construct the induced tree $T[U_i]$ explicitly and store it using adjacency lists:
for each vertex $v\in U_i$ we can find its neighbors in $T[U_i]$ in time proportional
to the number of neighbors. From this point, we will use the representation of $T[U_i]$. 

Next, we compute $E_i$, for the fixed index $i\in [k]$, in the obvious way.
For each edge $xy$ of $T[U_i]$, we check whether $x\in W_i$ and $y\notin W_i$
or whether $y\in W_i$ and $x\notin W_i$
to decide whether $xy\in E_i$.
This procedure to compute $E_i$ takes $O(|U_i|)$ time.

We keep considering the fixed index $i\in [k]$.
Now we make the expansion for each edge of $E_i$. Here it is important that
the expansion of different edges of $E_i$ are independent: each expansion 
affects to $U_i$ in a different connected component of $T-W_i$.
We make the expansion of an edge $xy\in E_i$ with $x\in W_i$ and 
$y\in U_i\setminus W_i$ as follows:
edit $T$ by inserting $y'$, set the new edge-lengths for the edges $yy'$
and $xy'$, remove from $U_i$ the subset $R_{xy}$ of elements of $U_i$ that are closer to $y$
than to $x$, and insert $y'$ in $U_i$.
The set $R_{xy}$ of elements to be removed from $U_i$ 
is obtained using the representation of $T[U_i]$ in $O(|R_{xy}|)$ time.
We correct the lists $L(v)$ by removing $i$ from $L(v)$ 
for each for each $v\in R_{xy}$.
(We do not need to update $T[U_i]$ because the sets $R_{xy}$ are pairwise disjoint
for all $xy\in E_i$.)
We conclude that expanding an edge $xy\in E_i$ takes $O(|R_{xy}|)$.
Since each element of $U_i$ can be deleted at most once from $U_i$,
and the elements $y'$ we insert cannot be deleted because they belong
only to (the new) $U_i$,
the expansions for the edges in $E_i$ takes $O(|U_i|)$ time all together.
This finishes the description of the work carried out for a fixed $i\in [k]$.

We iterate over all $i\in [k]$ making the expansions for (the current) edges in $E_i$.
Since for each $i\in [k]$ we spend $O(|U_i|)$ time,
all the expansions required for Lemma~\ref{le:transform_disjoint}
are carried out in $O(N)$ time. 
All this was assuming that the value $\eps$ is available, which remains to be discussed.
Let $\tilde I$ be the resulting instance with the disjoint sets.

Now we can make the transformation from $\tilde I$ to an instance with maximum
degree $3$. Assume for the time being that we have the parameter $\delta$ available.
Then the transformation described in Section~\ref{sec:degree3} can be easily
carried out in linear time. Thus, in $O(N)$ time we obtain
the final instance with pairwise disjoint sets $U_1,\dots, U_k$
and the tree $T$ of maximum degree $3$.

It remains to discuss how to choose the values of $\eps$ and $\delta$ for the transformations.
It is unclear whether $\eps$ or $\delta$ can be computed in $O(N)$ time when the
edges have arbitrary lengths. (If, for example, all edges have integral lengths,
then we could take $\eps=1/4$, $\delta=1/10n$ and $\delta'=1/40n^2$.)
We will handle this using composite lengths.
The length of each edge $e$ is going to be described by a triple $(a,b,c)$ that represents
the number $a+b\eps+c\delta'$ for infinitesimals $\delta'\ll\eps$. (Recall that
$4n\delta'=\delta$.)
Thus the length encoded by $(a,b,c)$ is smaller than the length encoded by $(a',b',c')$
if and only if $(a,b,c)$ is lexicographically smaller than $(a',b',c')$.
In the original graph we replace the length of each edge $e$ by $(\lambda(e),0,0)$.
In the expansion, the new edges $yy'$ get length $(0,1,0)$, and in converting
the tree to maximum degree $3$ we introduce new edges of length $(0,0,1)\equiv \delta'$ 
and we replace some edges of length $(a,b,0)$ by $(a,b,-4n)$.
The length of a path becomes a triple $(a,b,c)$ that is obtained as the vector
sum of the triples over its edges. Each comparison and addition of edge-lengths
costs $O(1)$ time.
We summarize.

\begin{theorem}
\label{thm:transformation}
	Suppose that we are given an instance $I$ for the problem \textsc{Graphic Inverse Voronoi in Trees}
	or for the problem \textsc{Generalized Graphic Inverse Voronoi in Trees} 
	of description size $N=N(I)$ over a tree $T$ with $n$ vertices.
	In $O(N)$ time we can either detect that $I$ has no solutions,
	or construct another instance $I'$ for the problem
	\textsc{Generalized Graphic Inverse Voronoi in Trees} over a tree $T'$ with the following properties:
	\begin{itemize}
	\item the tree $T'$ in the instance $I'$ has maximum degree $3$,
	\item the sets in the instance $I'$ are pairwise disjoint,
	\item the description size of $I'$ and the number of vertices in $T'$ is $O(n)$,
	\item if the answer to $I$ is ``yes'', then any solution to $I'$ can be transformed into
		a solution to $I$ using the projection map $\pi$.
	\end{itemize}
\end{theorem}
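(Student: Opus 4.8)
The plan is to compose the reductions of Sections~\ref{sec:disjoint} and~\ref{sec:degree3} and to implement the whole pipeline with the bookkeeping of Section~\ref{sec:algorithm}, carrying all the infinitesimal edge-lengths symbolically. First I would reduce $I$ to an equivalent instance of \textsc{Generalized Graphic Inverse Voronoi in Trees} by setting $S_i=U_i$ (in $O(N)$ time), and assume henceforth $I$ is already generalized. I would root $T$, store parent pointers, and build the lists $L(v)=\{i : v\in U_i\}$, in $O(N)$ time. Using the parent pointers one tests in $O(|U|)$ time whether a set $U$ induces a connected subtree (exactly one vertex of $U$ has its parent outside $U$), so the preprocessing after Lemma~\ref{le:transform0} — forming each $W_i=\{v\in U_i : L(v)=\{i\}\}$, replacing $S_i$ by $S_i\cap W_i$, and checking that every $U_i$ and every $W_i$ is connected, reporting ``no'' on failure — costs $O(\sum_i|U_i|)=O(N)$. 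The sets $W_i$ are pairwise disjoint, so afterwards $\sum_i|S_i|\le\sum_i|W_i|\le n$, a property the edge expansions preserve because the $W_i$ only grow.

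To avoid computing numerical constants I would carry each edge-length as a triple $(a,b,c)$ standing for $a+b\eps+c\delta'$, with $0<\delta'\ll\eps\ll 1$ symbolic infinitesimals and an auxiliary length $\delta$ equal to a fixed $\Theta(n)$-multiple of $\delta'$; comparisons and additions are componentwise-lexicographic and cost $O(1)$, and for $\delta'$ and $\eps$ small enough all inequalities the transformations require — $0<\eps<\res(I_t)$ for every intermediate instance $I_t$ of the first reduction, and $0<\delta<\res'(\tilde I)/(6|V(\tilde T)|)$ for the degree reduction — hold automatically. With this I would run the expansions of Section~\ref{sec:disjoint} in $k$ batches, one per index $i$: recompute $W_i$ and $E_i$ from the explicitly built $T[U_i]$ (rechecking connectivity of $U_i$ and $W_i$, whose failure can, by Lemma~\ref{le:transform0}(a), only occur when $I$ is infeasible, so ``no'' is correct), then expand each edge of $E_i$ — insert $y'$, give the new short edge length $(0,1,0)$, delete from $U_i$ the $y$-side subtree $R_{xy}$, and update the lists $L(v)$. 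Since the sets $R_{xy}$ over the edges of $E_i$ are pairwise disjoint, batch $i$ runs in $O(|U_i|)$ time, so all batches cost $O(N)$; the loop terminates because the number of intersecting pairs of cells strictly decreases, and by Lemma~\ref{le:number of steps} at most $k-1$ subdivision vertices appear. Let $\tilde I=(\tilde T,((\tilde U_1,\tilde S_1),\dots,(\tilde U_k,\tilde S_k)))$ be the result: the $\tilde U_i$ are pairwise disjoint, $|V(\tilde T)|\le n+k-1=O(n)$, and when $I$ is feasible Lemma~\ref{le:transform_disjoint} gives that $\Sigma$ solves $\tilde I$ iff $\Sigma$ solves $I$; such a $\Sigma$ avoids the subdivision vertices (they lie in no $\tilde S_i$), and by Lemma~\ref{le:transform0}(a) each $\tilde U_i$ induces a connected subtree of $\tilde T$.

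Finally I would apply the construction of Section~\ref{sec:degree3} to $\tilde I$: replace each edge $uv$ of $\tilde T$ by a pair $a_{u,v},a_{v,u}$ joined by an edge of its current length if $u,v$ lie in the same cell and of that length shortened by $\delta$ (the triple's third component decreased by the corresponding $\Theta(n)$ integer) otherwise; join the vertices $\{a_{u,v} : uv\in E(\tilde T)\}$ around each $u$ by a path of edges of length $(0,0,1)\equiv\delta'$; and set $U'_i=\{a_{u,v} : u\in\tilde U_i\}$ and $S'_i=\{a_{u,v} : u\in\tilde S_i\}$. This is a linear-time pass, $T'$ has maximum degree $3$, and the $U'_i$ are pairwise disjoint. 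Moreover $|V(T')|=2|E(\tilde T)|=O(n)$, and since the $\tilde U_i$ and $\tilde S_i$ are disjoint, $\sum_i|U'_i|+\sum_i|S'_i|\le 2\sum_{v\in V(\tilde T)}\deg_{\tilde T}(v)=O(n)$, so $I'$ has description size $O(n)$. Lemma~\ref{le:transform_3} gives that the answer to $I'$ equals the answer to $\tilde I$, hence to $I$, and the projection $\pi(a_{u,v})=u$ sends any solution of $I'$ to a solution of $\tilde I$, which — when $I$ is feasible — is already a solution of $I$ by the previous paragraph. Assembling the two reductions yields the four claimed properties.

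I expect the difficulty to be bookkeeping rather than conceptual: making each expansion batch run in $O(|U_i|)$ time (so that the lists $L(v)$, the induced subtrees $T[U_i]$, and the deletions of the sets $R_{xy}$ never force any recomputation), and checking that the single symbolic scale $\delta'\ll\eps$ simultaneously meets $\eps<\res(I_t)$ for every intermediate instance of the first reduction — which rests on the resolution being non-decreasing along the expansions, as established in Lemma~\ref{le:transform_disjoint} — and $\delta<\res'(\tilde I)/(6|V(\tilde T)|)$ for the second.
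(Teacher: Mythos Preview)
Your proposal is correct and follows essentially the same route as the paper: the preprocessing of Lemma~\ref{le:transform0}, the batched edge expansions of Section~\ref{sec:disjoint} (with the $O(|U_i|)$-per-batch accounting and the non-decrease of the resolution), the degree-$3$ reduction of Section~\ref{sec:degree3}, and the symbolic triple encoding of edge lengths all match the paper. The one small omission is that you should explicitly check that each $S_i$ is nonempty after replacing $S_i$ by $S_i\cap W_i$ (or, equivalently, that $k\le n$); otherwise the bound $|V(\tilde T)|\le n+k-1=O(n)$ is not justified. This is the only missing sentence compared to the paper's argument.
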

\begin{proof}
	It remains only to bound the size of $T'$ and the description size of $I'$.
	If $k>n$, then the instance $I$ has no solution and we report it.
	Otherwise, Lemma~\ref{le:number of steps} implies that we are making 
	$k-1$ expansions, which means that the resulting tree $T'$ has $n+k-1=O(n)$
	vertices. The size of the instance $I'$ is $O(n)$ because
	the sets in the instance are pairwise disjoint and there are $O(n)$ vertices
	in total.
\end{proof}

\section{Algorithm for subcubic trees with disjoint Voronoi cells}
\label{sec:algorithmDP}
In this section we consider the problem \textsc{Generalized Graphic Inverse Voronoi in Trees}
for an input $(T,\UU)$, with the following properties:
\begin{itemize}
	\item $T$ is a tree of maximum degree $3$
	\item $\UU$ is a sequence of pairs $(U_1,S_1),\dots, (U_k,S_k)$
		where the sets $U_1,\dots, U_k$ are pairwise disjoint.
\end{itemize}
Our task is to find sites $s_1,\dots,s_k$ such that, for each $i\in [k]$,
we have $U_i=\cell_T(s_i,\{ s_1,\dots, s_k\})$ and $s_i\in S_i$.
We may assume that $V(T)=\bigcup_{i\in [k]} U_i$, that $T[U_i]$ is connected for each $i\in [k]$,
and that $S_i\subseteq U_i$ for each $i\in [k]$, as otherwise it is clear that there is no solution.
These conditions can easily be checked in linear time.

First, we describe an approach to decide whether there is a solution without paying much attention to the running time.
Then, we describe its efficient implementation taking time $O(n\log^2 n)$, where $n$
is the number of vertices of $T$.

\subsection{Characterization}
\label{sec:characterization}
For each vertex $v$, let $i(v)$ be the unique index such that $v\in U_{i(v)}$.
We choose a leaf $r$ of $T$ as a root and henceforth consider
the tree $T$ rooted at $r$. We do this so that each vertex of $T$ has at most two children.
For each vertex $v$ of $T$, let $T(v)$ be the subtree of $T$ rooted at $v$,
and define also
\begin{align*}
	J(v) ~=~ \{ j\in [k]\mid U_j\cap T(v)\neq \emptyset \}.
\end{align*}
Note that $i(v)\in J(v)$. Since each $U_j$ defines a connected subset of $T(v)$,
for each $j\in J(v)$, $j\neq i(v)$, we have $U_j\subseteq T(v)$ and 
therefore it must be that $s_j\in T(v)$.

Consider a fixed vertex $v$ of $T$ and the corresponding subtree $T(v)$.
We want to parameterize possible distances from $v$ to the site $s_{i(v)}$,
that is, the site whose cell contains the vertex $v$,
that provide the desired Voronoi diagram restricted to $T(v)$.
A more careful description is below.
We distinguish possible placements of $s_{i(v)}$ within $T(v)$,
which we refer as ``below'' (or on) $v$ and for which we use the notation $B(v)$,
and possible placements outside $T(v)$,
which we refer as ``above'' and for which we use the notation $A(v)$.

First we deal with the placements where $s_{i(v)}$ is ``below'' $v$.
In this case we start defining
$X(v)$ as the set of tuples $(s_j)_{j\in J(v)}$ that satisfy the
following two conditions:
\begin{align*}
	\forall j\in J(v): &~~ s_j\in S_j\cap T(v),\\
	\forall j\in J(v):&~~ \cell_{T(v)}(s_j,\{s_t \mid t\in J(v) \})\cap T(v) =U_j\cap T(v).
\end{align*}
Note that $X(v)\subseteq~ \prod_{j\in J(v)} S_j$.
Finally, we define
\begin{align*}
	B(v) ~&=~ \left\{ d_T(s_{i(v)},v) \mid (s_j)_{j\in J(v)} \in X(v) \right\}.
\end{align*}
The set $B(v)$ represents the valid distances at which we can place $s_{i(v)}$ inside $T(v)$ such
that $s_{i(v)}$ is the closest site to $v$, and still complete the rest of the placements of the sites
to get the correct portion of $\UU$ inside $T(v)$.

Now we deal with the placements ``above'' $v$.
For $\alpha>0$, let $T^+_\alpha(v)$ be the tree obtained from $T(v)$ by adding an edge
$vv_{\rm new}$, where $v_{\rm new}$ is a new vertex, and setting
the length of $vv_{\rm new}$ to $\alpha$. The role of $v_{\rm new}$ is the placement
of the site closest to $v$, when it is outside $T(v)$.
See Figure~\ref{fig:algorithm1} for an illustration.
In the following discussion we also use Voronoi diagrams with respect to $T^+_\alpha(v)$.
Let $Y_\alpha(v)$ be the set of tuples $(s_j)_{j\in J(v)}$ that satisfy all of the
following conditions:
\begin{align*}
										&~~  s_{i(v)}=  v_{\rm new},\\
	\forall j\in J(v)\setminus \{i(v) \}: &~~ s_j\in S_j,\\		
	\forall j\in J(v):&~~ \cell_{T^+_\alpha(v)}(s_j, \{s_t \mid t\in J(v) \})\cap T(v)=U_j \cap T(v).
\end{align*}
Finally we define
\begin{align*}
	A(v) ~&=~ \left\{ \alpha \in \RR_{>0}\mid Y_\alpha(v) \neq \emptyset \right\}.
\end{align*}

\begin{figure}
\centering
	\includegraphics[page=1,scale=.88]{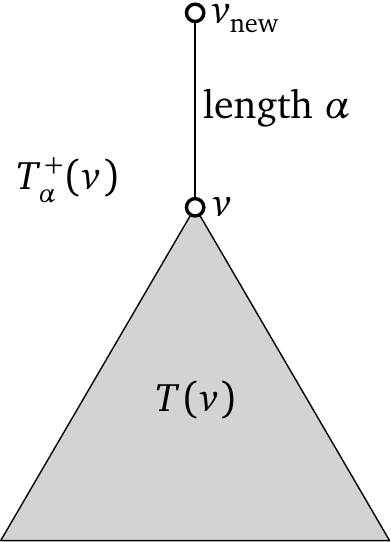}
	\caption{The tree $T^+_\alpha(v)$ used to define $A(v)$.}
	\label{fig:algorithm1}
\end{figure}

We are interested in deciding whether $B(r)$ is nonempty.
Indeed, for the root $r$ we have $J(r)=[k]$ and $T(r)=T$ by construction.
The definition of $X(v)$ implies that $B(r)$ is nonempty if and only if
there is some tuple $(s_1,\dots,s_k)\in S_1\times\dots\times S_k$ such that
\[
	\forall i\in J(r)=[k]:~~
		\cell_T(s_i,\{ s_1,\dots,s_k\}) = \cell_{T(r)}(s_i,\{ s_1,\dots,s_k\})
		=U_i\cap T(r)= U_i.
\]
This is precisely the condition we have to check to solve
\textsc{Generalized Graphic Inverse Voronoi in Trees}.

\medskip

We are going to compute $A(v)$ and $B(v)$ in a bottom-up fashion along the tree $T$.
If $v$ is leaf of $T$, then $J(v)=\{ i(v)\}$ and clearly we have
\[
	A(v)=\RR_{>0}~~~\text{ and }~~~
	B(v)=\begin{cases}
		\{ 0\} &\text{if $v\in S_{i(v)}$,}\\
		\emptyset &\text{if $v\notin S_{i(v)}$.}
		\end{cases}
\]

Consider now a vertex $v$ of $T$ that has two children $v_1$ and $v_2$.
Assume that we already have $A(v_j)$ and $B(v_j)$ for $j=1,2$.
For $j=1,2$ define the sets
\begin{align*}
	A'(v_j)~&=~ \{ x-\lambda(vv_j)\mid x\in A(v_j)\},\\
	B'(v_j)~&=~ \{ x+\lambda(vv_j)\mid x\in B(v_j)\},\\
	C'(v_j)~&=~ \{ \alpha \mid \exists x\in B(v_j) \text{ such that }
					x-\lambda(vv_j)< \alpha<x+\lambda(vv_j)\}\\
			~&=~ \bigcup_{x\in B(v_j)}\Bigl( x-\lambda(vv_j),x+\lambda(vv_j)\Bigr).
\end{align*}
This is the offset we obtain when we take into account the length of the edge $vv_j$.
The set $C'(v_j)$ will be relevant for the case when $i(v)\neq i(v_j)$.
The following lemmas show how to compute $A(v)$ and $B(v)$ from its children.
Figure~\ref{fig:algorithm2} is useful to understand the different cases.

\begin{figure}
\centering
	\includegraphics[page=2,width=\textwidth]{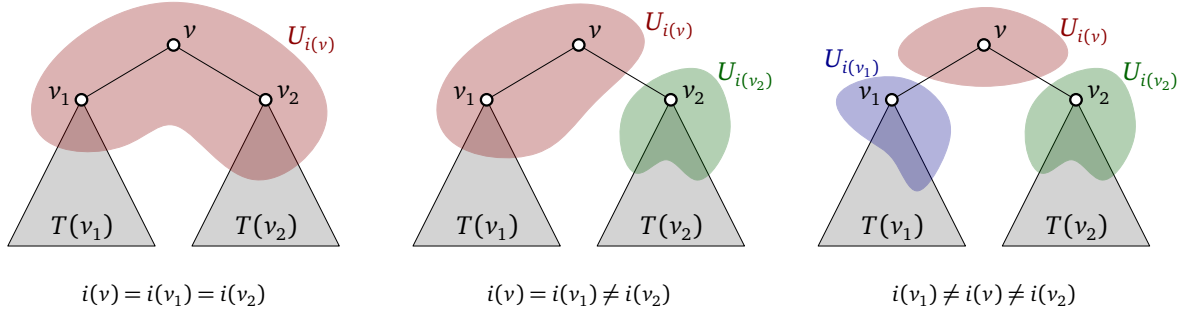}
	\caption{Different cases in the computation of $A(v)$ and $B(v)$ when $v$ has
		children $v_1$ and $v_2$. (The case $i(v)=i(v_2)\neq i(v_1)$ is symmetric
		to the case $i(v)=i(v_1)\neq i(v_2)$.)}
	\label{fig:algorithm2}
\end{figure}

\begin{lemma}
\label{le:A(v)}
	If the vertex $v$ has two children $v_1$ and $v_2$, then
	\[
		A(v)~=~ \RR_{>0} \cap
			\begin{cases}
			A'(v_1)\cap A'(v_2)&\text{if $i(v)=i(v_1)=i(v_2)$,}\\
			A'(v_1)\cap C'(v_2)&\text{if $i(v)=i(v_1)\neq i(v_2)$,}\\
			A'(v_2)\cap C'(v_1)&\text{if $i(v)=i(v_2)\neq i(v_1)$,}\\
			C'(v_1)\cap C'(v_2)&\text{if $i(v)\neq i(v_1)$ and $i(v)\neq i(v_2)$.}
			\end{cases}
	\]
\end{lemma}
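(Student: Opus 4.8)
The plan is to prove the four formulas by a single case analysis driven by whether $i(v_c)$ equals $i(v)$, for each child $v_c$ ($c\in\{1,2\}$). I would first record the combinatorial skeleton. Since each $U_j$ induces a connected subtree of $T$ and (after the reductions of Section~\ref{sec:transform}) the cells are pairwise disjoint and cover $V(T)$, one checks that $J(v)=\{i(v)\}\cup J(v_1)\cup J(v_2)$; that $i(v)\in J(v_c)$ forces $i(v)=i(v_c)$; that $i(v_1)=i(v_2)$ forces $i(v)=i(v_1)=i(v_2)$ (so the fourth case has $i(v),i(v_1),i(v_2)$ pairwise distinct); and, crucially, that $U_j\subseteq T(v_c)$ for every $j\in J(v_c)\setminus\{i(v_c)\}$, and also $U_{i(v_c)}\subseteq T(v_c)$ whenever $i(v_c)\neq i(v)$. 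Thus the site of $U_j$ with $j\in J(v_c)\setminus\{i(v_c)\}$ (and of $U_{i(v_c)}$ when $i(v_c)\ne i(v)$) necessarily lies inside $T(v_c)$ — here the preprocessing assumption $S_j\subseteq U_j$ is used — while the site of $U_{i(v)}$ is $v_{\rm new}$, above $v$.

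Second, I would isolate the geometric fact that makes the recursion local: every path from a vertex $u\in T(v_c)$ to a vertex outside $T(v_c)$ passes through $v_c$, so the restriction to $T(v_c)$ of the Voronoi diagram of $T^+_\alpha(v)$ depends on the rest of the tree only through the closest external site as seen from $v_c$, which (because $s_{i(v)}=v_{\rm new}$ is closest to $v$ among all sites not in $T(v_c)$) is $v_{\rm new}$ at distance $\lambda(vv_c)+\alpha$ from $v_c$. Disjointness of the cells forbids ties, so every vertex has a unique closest site; this yields a \emph{domination} statement: if inside $T(v_c)$ the site of $v_c$'s own cell is at distance $x_c$ from $v_c$, then for every $u\in T(v_c)$ the distance from $u$ to its true site is at most $d(u,v_c)+x_c$, hence any site reached through the edge $vv_c$ and at distance $>x_c$ from $v_c$ cannot capture a vertex of $T(v_c)$.

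Third, I would run the cases. If $i(v)=i(v_c)$, then $v_c$ belongs to the same cell as $v$, whose site $v_{\rm new}$ lies above $v_c$ at distance $\alpha+\lambda(vv_c)$; identifying this external vertex with the $v_{\rm new}$ of the definition of $A(v_c)$, a valid configuration restricted to $T(v_c)$ is exactly a witness that $\alpha+\lambda(vv_c)\in A(v_c)$, contributing $\alpha\in A'(v_c)$. If $i(v)\neq i(v_c)$, the site of $v_c$'s cell lies inside $T(v_c)$ at some distance $x_c$, and a valid restriction is exactly a witness that $x_c\in B(v_c)$; the two interface requirements $v\notin\cell(s_{i(v_c)})$ and $v_c\notin\cell(v_{\rm new})$ (both strict, by the no-ties observation), together with the domination statement for the vertices of $T(v_c)$ below $v_c$, are equivalent to $\alpha-\lambda(vv_c)<x_c<\alpha+\lambda(vv_c)$, i.e. $\alpha\in C'(v_c)$. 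Intersecting the conditions coming from $v_1$ and from $v_2$ with the constraint $\alpha>0$ gives the four displayed expressions. For the converse (gluing), I would use the disjoint decomposition of $J(v)$ to assemble a tuple over $T^+_\alpha(v)$ from witnesses for the two children and the choice $s_{i(v)}=v_{\rm new}$, and then re-use the domination statement — now also applied across the two subtrees, along the path $v_1 - v - v_2$ — together with the inequalities $x_c<\alpha+\lambda(vv_c)$ and $\alpha<x_{3-c}+\lambda(vv_{3-c})$ to check that no cell leaks.

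The main obstacle will be the bookkeeping of the converse direction: verifying that after gluing every cell $U_j$ is reproduced exactly forces one to rule out interference both between a subtree and $v_{\rm new}$ and between the two subtrees, and this is where the above inequalities must be combined carefully. A secondary subtlety is case~1, where the single vertex $v_{\rm new}$ simultaneously serves as the external site for both $T(v_1)$ and $T(v_2)$, so one must check that the two children's witnesses — each placing its "external'' site at $v_{c,{\rm new}}$ — really do glue consistently.
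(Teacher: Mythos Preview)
Your proposal is correct and follows essentially the same approach as the paper: the paper's proof is only a sketch that records, for each child $v_c$, the key insight that $i(v)=i(v_c)$ yields the shifted condition $\alpha+\lambda(vv_c)\in A(v_c)$ (i.e.\ $\alpha\in A'(v_c)$), while $i(v)\neq i(v_c)$ yields the two interface inequalities $\alpha<x_c+\lambda(vv_c)$ and $x_c<\alpha+\lambda(vv_c)$ for some $x_c\in B(v_c)$ (i.e.\ $\alpha\in C'(v_c)$), and then declares the rest ``standard dynamic programming''. Your plan carries out exactly these two observations, and in addition spells out the combinatorial skeleton and the domination/gluing argument that the paper omits; this extra care is appropriate but does not constitute a different method.
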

\begin{proof}
	This is a standard proof in dynamic programming.
	We only point out the main insight showing the role of $A'(v_j)$ and $C'(v_j)$ for $j\in \{ 1, 2 \}$.

	When $i(v)=i(v_j)$, placing $s_{i(v)}$ at $v_{\rm new}$ of the
	tree $T^+_{\alpha}(v)$ is the same as placing it
	at $v_{\rm new}$ of $T^+_{\alpha+\lambda(v v_j)}(v_j)$.
	The valid values $\alpha$ for $T^+_{\alpha+\lambda(v v_j)}(v_j)$ are described by $A'(v_j)$,
	a shifted version of $A(v_j)$.

	When $i(v)\neq i(v_j)$,
	there has to be a compatible placement of $s_{i(v_j)}$ inside $T(v_j)$ such
	that $v$ is closer to $s_{i(v)}=v_{\rm new}$ than to $s_{i(v_j)}$, while $v_j$
	is closer to $s_{i(v_j)}$ than to $s_{i(v)}$.
	That is, we must have
	\[
		d_T(v_{\rm new},v) < d_T(s_{i(v_j)},v) ~~~\text{and}~~~
		d_T(s_{i(v_j)},v_j) < d_T(v_{\rm new},v_j),
	\]
	or equivalently, $\alpha$ must satisfy
	\[
		\alpha < d_T(s_{i(v_j)},v_j)+\lambda(vv_j) ~~~\text{and}~~~
		d_T(s_{i(v_j)},v_j) < \alpha + \lambda(vv_j).
	\]
	Thus, each possible value $x$ of $d_T(s_{i(v_j)},v_j)$,
	that is, each $x\in B(v_j)$, gives the interval
	$\bigl( x-\lambda(vv_j), x+\lambda(vv_j) \bigr)$
	of possible values for $\alpha$. The union of these intervals
	over $x\in B(v_j)$ is precisely $C'(v_j)$.
\end{proof}

To construct $B(v)$ it is useful to have a function that tells whether $v$
is a valid placement for $s_{i(v)}$.
For this matter we define the following function:
\[
	\chi(v)~=~ \begin{cases}
			\{ 0 \} &\text{if $i(v)=i(v_1)=i(v_2)$, $v\in S_{i(v)}$, $0\in A'(v_1)$
							and $0\in A'(v_2)$,}\\
			\{ 0 \} &\text{if $i(v)=i(v_1)\neq i(v_2)$, $v\in S_{i(v)}$, $0\in A'(v_1)$
							and $0\in C'(v_2)$,}\\
			\{ 0 \} &\text{if $i(v)=i(v_2)\neq i(v_1)$, $v\in S_{i(v)}$, $0\in A'(v_2)$
							and $0\in C'(v_1)$,}\\
			\{ 0 \} &\text{if $i(v)\neq i(v_1)$, $i(v)\neq i(v_2)$, $v\in S_{i(v)}$, $0\in C'(v_1)$
							and $0\in C'(v_2)$,}\\
			\emptyset &\text{otherwise.}
			\end{cases}
\]
\begin{lemma}
\label{le:B(v)}
	If the vertex $v$ has two children $v_1$ and $v_2$, then
	\[
		B(v)~=~ \chi(v) \cup\begin{cases}
			(B'(v_1)\cap A'(v_2))\cup (B'(v_2)\cap A'(v_1)) &\text{if $i(v)=i(v_1)=i(v_2)$,}\\
			B'(v_1)\cap C'(v_2)&\text{if $i(v)=i(v_1)\neq i(v_2)$,}\\
			B'(v_2)\cap C'(v_1)&\text{if $i(v)=i(v_2)\neq i(v_1)$,}\\
			\emptyset &\text{if $i(v)\neq i(v_1)$ and $i(v)\neq i(v_2)$.}
			\end{cases}
	\]
\end{lemma}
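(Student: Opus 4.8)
The plan is to prove Lemma~\ref{le:B(v)} by the same dynamic-programming argument used for Lemma~\ref{le:A(v)}, the extra ingredient being a case split on the position of the site $s_{i(v)}$ relative to $v$. Since the sets $U_1,\dots,U_k$ are pairwise disjoint, in any valid placement every vertex lies in a unique cell, so all compatibility inequalities involved are strict; this is exactly why the open intervals in the definitions of $A'$, $B'$, $C'$ are the right objects, and it means we never have to track equidistant vertices. For a tuple $(s_j)_{j\in J(v)}\in X(v)$ the site $s_{i(v)}$ lies in $T(v)$, so either $s_{i(v)}=v$, or $s_{i(v)}$ lies strictly inside one of the two subtrees $T(v_1)$, $T(v_2)$. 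Because $U_{i(v)}$ induces a connected subtree containing both $v$ and $s_{i(v)}$, in the latter case the child $v_j$ on the $v$--$s_{i(v)}$ path also lies in $U_{i(v)}$, hence $i(v_j)=i(v)$; in particular, if $i(v)\neq i(v_1)$ and $i(v)\neq i(v_2)$ then necessarily $s_{i(v)}=v$, which forces $B(v)=\chi(v)$ in that case.

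For the case $s_{i(v)}=v$ one checks that $v$ is a feasible placement exactly when $v\in S_{i(v)}$ and the partial placements on $T(v_1)$ and $T(v_2)$ can be completed compatibly. For a child $v_j$ with $i(v_j)=i(v)$, placing $s_{i(v)}$ at $v$ is the same as placing the ``above'' virtual vertex $v_{\rm new}$ at distance $\lambda(vv_j)$ from $v_j$ in $T^+_{\lambda(vv_j)}(v_j)$, which is feasible iff $\lambda(vv_j)\in A(v_j)$, i.e.\ $0\in A'(v_j)$. For a child $v_j$ with $i(v_j)\neq i(v)$ the site $s_{i(v_j)}$ lies in $T(v_j)$, and compatibility across the edge $vv_j$ requires $d_T(s_{i(v_j)},v_j)<\lambda(vv_j)$ together with $0<d_T(s_{i(v_j)},v_j)+\lambda(vv_j)$; reading $x=d_T(s_{i(v_j)},v_j)\in B(v_j)$, this is precisely $0\in C'(v_j)$. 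These four sub-cases reproduce exactly the four clauses defining $\chi(v)$.

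For the case where $s_{i(v)}$ lies strictly inside $T(v_1)$ (the case of $T(v_2)$ being symmetric) we have $i(v_1)=i(v)$ and $d_T(s_{i(v)},v)=d_T(s_{i(v)},v_1)+\lambda(vv_1)$ with $d_T(s_{i(v)},v_1)\in B(v_1)$, so $d_T(s_{i(v)},v)\in B'(v_1)$. The restriction of the diagram to $T(v_2)$ depends only on the distance from $v_2$ to the site closest to $v$: if $i(v_2)=i(v)$ that site is $s_{i(v)}$ at distance $d_T(s_{i(v)},v)+\lambda(vv_2)$ from $v_2$, and feasibility inside $T(v_2)$ is $d_T(s_{i(v)},v)+\lambda(vv_2)\in A(v_2)$, i.e.\ $d_T(s_{i(v)},v)\in A'(v_2)$; if $i(v_2)\neq i(v)$ then $s_{i(v_2)}$ lies in $T(v_2)$ and the compatibility across $vv_2$ gives, exactly as in Lemma~\ref{le:A(v)}, $d_T(s_{i(v)},v)\in C'(v_2)$. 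Collecting the contributions of $s_{i(v)}=v$, of $s_{i(v)}\in T(v_1)\setminus\{v\}$, and of $s_{i(v)}\in T(v_2)\setminus\{v\}$, and discarding in each configuration the options ruled out by $i(v_j)\neq i(v)$, yields exactly the claimed formula.

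The converse inclusion (that every element of the right-hand side is realized) is handled by the usual gluing: a witness tuple for $B(v_1)$ or $B'(v_1)$ and a witness tuple for $A(v_2)$, for $B(v_2)$, or for $C'(v_2)$, combined by identifying the virtual vertex $v_{\rm new}$ with the genuine site living in the other subtree, produce a tuple in $X(v)$ with the prescribed value $d_T(s_{i(v)},v)$. The step I expect to require the most care is verifying that the Voronoi cell of any vertex $u\in T(v_j)$ computed with respect to $\{s_t:t\in J(v)\}$ equals the one computed with respect to $\{s_t:t\in J(v_j)\}$ together with the single ``external'' site reaching $v_j$ through $v$ --- i.e.\ that sites living in the sibling subtree never affect $T(v_j)$ except through their distance to $v_j$. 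This holds because every path from such a site to a vertex of $T(v_j)$ passes through $v$ and then $v_j$, but it must be stated carefully to make the decoupling of the two subtrees rigorous.
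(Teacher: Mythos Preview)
Your proposal is correct and follows essentially the same approach as the paper's own proof: split on whether $s_{i(v)}$ is at $v$ (yielding $\chi(v)$) or strictly inside $T(v_1)$ or $T(v_2)$, use connectivity of $U_{i(v)}$ to rule out placements in a subtree whose root has a different index, and treat the sibling subtree as seeing the site ``from above'' via $A'$ or $C'$ according to whether the sibling's index agrees with $i(v)$. The paper's proof is in fact a brief sketch that only states the main insight, whereas you spell out the decoupling and the gluing of witness tuples in more detail; there is no substantive difference in strategy.
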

\begin{proof}
	First we note that $\chi(v)=\{ 0\}$ if and only if $v$ is a valid placement for $s_{i(v)}$.
	Indeed, the formula is the same that was used for $A(v)$, but for the value $\alpha=0$,
	and it takes into account whether $v\in S_{i(v)}$.
	
	The proof for the correctness of $B(v)$ is again based on standard dynamic programming.
	The case for $s_{i(v)}$ being placed at $v$ is covered by $\chi(v)$.
	The main insight for the case when $s_{i(v)}$ is placed in $T(v_1)$ is that,
	from the perspective of the other child, $v_2$, the vertex is placed ``above'' $v_2$.
	That is, only the distance from $s_{i(v)}$ to $v_2$ is relevant. Thus, we have
	to combine $B(v_1)$ and $A(v_2)$, with the appropriate shifts. More precisely,
	for $v_2$ we have to use $A'(v_2)$ or $C'(v_2)$ depending on whether $i(v_2)=i(v)$
	or $i(v_2)\neq i(v)$.
\end{proof}

When $v$ has a unique child $v_1$, then the formulas are simpler and the argumentation
is similar.
We state them for the sake of completeness without discussing their proof.
\begin{align*}
	A(v)~&=~ \RR_{>0} \cap
		\begin{cases}
		A'(v_1)		&\text{if $i(v)=i(v_1)$,}\\
		C'(v_1)  	&\text{if $i(v)\neq i(v_1)$.}
		\end{cases}\\ \\
	B(v)~&=~ \begin{cases}
			B'(v_1)\cup \{0 \}	&\text{if $i(v)=i(v_1)$, $v\in S_{i(v)}$, and $\lambda(vv_1)\in A(v_1)$,}\\
			B'(v_1)	&\text{if $i(v)=i(v_1)$ and
							$\bigl(v\notin S_{i(v)} \text{ or }\lambda(vv_1)\notin A(v_1)\bigr)$,}\\
			\{ 0 \} &\text{if $i(v)\neq i(v_1)$, $v\in S_{i(v)}$ and $0\in C'(v_1)$,}\\
			\emptyset &\text{if $i(v)\neq i(v_1)$ and ($v\notin S_{i(v)}$ or $0\notin C'(v_1)$).}
			\end{cases}
\end{align*}

\subsection{Efficient manipulation of monotonic intervals}
The efficient algorithm that we will present is based on an efficient representation
of the sets $A(v)$ and $B(v)$ using binary search trees.
Here we discuss the representation that we will be using.

We first consider how to store a set $X$ of real values under the following
operations.
\begin{itemize}
	\item \TREEcopy makes a copy of the data structure storing $X$;
	\item \TREEreport returns the elements of $X$ sorted;
	\item \TREEinsert$(y)$ adds a new element $y$ in $X$;
	\item \TREEdelete$(y)$ removes the element $y\in X$ from $X$;
	\item \TREEsucc$(y)$ returns the successor of $y$ in $X$, 
		defined as the smallest number in $X$ that is at least 
		as large as $y$;
	\item \TREEpred$(y)$ returns the predecessor of $y$ in $X$,
		defined as the largest number in $X$ that is smaller or equal than $y$;
	\item \TREEsplit$(y)$ returns the representation for $X_\le=\{ x\in X\mid x\le y\}$
		and the representation for $X_>=\{ x\in X\mid x> y\}$;
		the representation of $X$ is destroyed in the process;
	\item \TREEjoin$(X_1,X_2)$ returns the representation of $X=X_1\cup X_2$ if
		$\max(X_1)< \min (X_2)$, and otherwise it returns an error;
		the representations of $X_1$ and $X_2$ are destroyed in the process;
	\item \TREEshift$(\alpha)$ adds the given value $\alpha$ to all the elements of $X$.
\end{itemize}

These operations can be done efficiently using dynamic balanced binary search tree with
so-called augmentation, that is, with some extra information attached to the nodes.
Strictly speaking the following result is not needed, but understanding
it will be useful to understand the more involved data structure we eventually employ.

\begin{theorem}
\label{thm:tree}
	There is an augmented dynamic binary search tree to 
	store sets of $m$ real values with the following time guarantees:
	\begin{itemize}
		\item the operations \TREEcopy and \TREEreport take $O(m)$ time;
		\item the operations \TREEinsert, \TREEdelete, \TREEsucc, \TREEpred, \TREEsplit, 
			\TREEjoin\ and \TREEshift\ take $O(\log m)$ time.
			(For \TREEjoin\ the value $m$ is the size of the resulting set.)
	\end{itemize}
\end{theorem}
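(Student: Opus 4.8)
The plan is to start from any standard balanced binary search tree that already supports \TREEinsert, \TREEdelete, \TREEsucc, \TREEpred, \TREEsplit\ and \TREEjoin\ in $O(\log m)$ worst-case time --- for concreteness a weight-balanced tree or a red--black tree (for red--black trees the $O(\log m)$ bounds for \TREEsplit\ and \TREEjoin\ are classical; \TREEjoin\ of trees of black-heights $h_1,h_2$ costs $O(|h_1-h_2|+1)$, and \TREEsplit\ telescopes along one root-to-leaf path) --- and to augment each node with one extra real field that makes \TREEshift\ essentially free while leaving all other operations asymptotically unchanged. Note that \TREEsplit$(y)$ (cutting into $X_\le$ and $X_>$), \TREEsucc, \TREEpred\ need no order-statistics, so the only augmentation we introduce is the shift field itself (plus whatever balancing data the base tree already keeps).

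Next I would describe the lazy-shift augmentation. Each node $v$ carries a value $\mathrm{sh}(v)\in\RR$ with the invariant that the true key of a node $u$ equals its stored key plus the sum of $\mathrm{sh}$ along the path from the current root down to $u$, inclusive. Define $\mathsf{push}(v)$: add $\mathrm{sh}(v)$ to the stored key of $v$ and to $\mathrm{sh}$ of each child of $v$, then set $\mathrm{sh}(v):=0$. This costs $O(1)$ and, since it changes no pointers, preserves all balancing information (colors, weights/sizes). Then \TREEshift$(\alpha)$ is simply ``add $\alpha$ to $\mathrm{sh}$ of the root'', which is $O(1)\subseteq O(\log m)$. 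Every other operation is run exactly as in the base structure, except that whenever it is about to descend from a node $v$ to one of its children, or to rotate/recolor around $v$, it first calls $\mathsf{push}(v)$ (and $\mathsf{push}$ on the $O(1)$ other nodes a single rotation involves). Since each operation visits $O(\log m)$ nodes and does $O(1)$ extra work per node, the $O(\log m)$ bounds survive. For \TREEsplit\ and \TREEjoin\ the same ``push before you touch'' discipline applies along the single path being followed (and on the roots of the two inputs of \TREEjoin), so that every node that is cut out or newly wired up has $\mathrm{sh}=0$ at that moment and hence an absolute stored key. Finally \TREEcopy\ recursively duplicates the whole tree verbatim, $\mathrm{sh}$ fields included, in $O(m)$ time, and \TREEreport\ performs an in-order walk carrying a running sum of $\mathrm{sh}$ from the root and emits stored-key-plus-running-sum at each node, again $O(m)$.

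The step I expect to be the main obstacle --- more in the writing than in the idea --- is verifying that the lazy offsets interact correctly with rebalancing and with \TREEsplit/\TREEjoin: a rotation must never reorder a node relative to another node that still has a pending shift, and a split/join must not glue together subtrees with inconsistent accumulated offsets. The resolution is precisely the push-before-you-touch discipline: because $\mathsf{push}$ is invoked on every node on the root-to-site path (and on the constantly many extra nodes a rotation touches) before the structural change happens, all nodes participating in that rotation, cut, or splice have $\mathrm{sh}=0$ and therefore already store their true keys, so the ordinary correctness and timing analyses of the base tree apply unchanged; the shift invariant of the resulting tree is then immediate since the only non-zero $\mathrm{sh}$ values that remain are those strictly above the modified region, which were never disturbed.
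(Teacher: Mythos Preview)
Your proposal is correct and takes essentially the same approach as the paper: both make \TREEshift\ an $O(1)$ root update by storing additive offsets that accumulate along root-to-node paths, and then rely on a standard balanced BST for the remaining operations. The only cosmetic difference is that the paper stores the keys themselves in Tarjan-style difference form (each node holds $x(\mu)-x(\parent(\mu))$), whereas you keep an explicit stored key together with a lazy shift tag and a push-down; these are well-known equivalent realizations of the same idea.
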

\begin{proof}
	Let $X$ be the set of values to store.
	We use a dynamic balanced binary search tree $\mathcal{T}$ 
	where each node represents one element of $X$. 
	For each node $\mu$ of $\mathcal{T}$, let $x(\mu)$ be the value represented by $\mu$.
	The tree $\mathcal{T}$ is a binary search tree with respect to the values $x(\mu)$.
	However, we \emph{do not} store $x(\mu)$ explicitly at $\mu$, but we store it
	in so-called difference form.
	At each non-root node $\mu$ with parent $\mu'$, we store
	$\TREEdiff(\mu):=x(\mu)-x(\mu')$.
	At the root $r$ we store $\TREEdiff(r)=x(r)$.
	(This choice is consistent with using $x(\textsc{null})=0$.)
	This is a standard technique already used by Tarjan~\cite{Tarjan97}.
	Whenever we want to obtain $x(\mu)$ for a node $\mu$, we have
	to add $\TREEdiff(\mu')$ for the nodes $\mu'$ along the
	root-to-$\mu$ path.
	Since operations on a tree are performed always locally, that is,
	accessing a node from a neighbour, we spend $O(\log m)$ time to compute
	the first value $x(\mu)$, and from there on each value $x(\cdot)$
	is computed in $O(1)$ additional time from the value of its neighbor.
	Of course, the values $\TREEdiff(\mu)$ have to be updated
	through the changes in the tree, 
	including rotations or other balancing operations.

	With this representation it is trivial to perform the operation \TREEshift$(\alpha)$
	in constant time: at the root $r$ of $\mathcal{T}$, 
	we just add $\alpha$ to $\TREEdiff(r)$.
	
	For the rest of operations, the time needed to execute them is the same
	as for the dynamic balanced search trees we employ.
	Brass~\cite[Chapter 3]{Brass08} explains dynamic trees with the requested
	properties;
	see Section 3.11 of the book for the more complex operations of split and join.
	(The same time bounds with amortized time bounds, which are sufficient
	for our application, can be obtained using the classical 
	splay trees~\cite{SleatorT85}.)
\end{proof}

Consider now a family $\II$ of intervals on the real line.
The family $\II$ is \DEF{monotonic} if no interval
contains another interval. 
One can define a linear order on a family of monotonic
intervals, as follows.
First note that for any two intervals $I$ and $J$
in the monotonic family, the sets $I\setminus J$ and $J\setminus I$ are both nonempty.
We say that $I$ is to the left of $J$ if $I\setminus J$ is to the left of $J\setminus I$.
If the intervals are closed, the order is the same as that obtained by sorting the 
left endpoints or their right endpoints.

In our forthcoming discussion, we will restrict our attention to intervals that are 
closed. However, the results hold when we have any type of intervals (closed, open, or semi-open).
One approach for this is to simulate an open interval $(a,b)$ with the closed interval $[a+\eps,b-\eps]$,
an interval $(a,b]$ with the closed interval $[a+\eps,b]$, and an interval $[a,b)$ with
the closed interval $[a,b-\eps]$, where $\eps>0$ is symbolic infinitessimal. This means that we store
with each interval information about the containment of its endpoints, and comparisons
are made taking into account the symbolic $\eps$. In practice, one would make comparisons
depending on the inclusion of endpoints in the intervals.
		
We want to maintain a set $\II$ of monotonic intervals under the following
operations. 
\begin{itemize}
	\item \INTcopy makes a copy of the data structure storing $\II$.
	\item \INTreport returns the elements of $\II$ sorted by their left endpoint.
	\item \INTinsert$(J)$ adds a new interval $J$ in $\II$; it assumes that
		the resulting family keeps being monotonic.
	\item \INTdelete$(J)$ deletes the interval $J\in \II$.
	\item \INThitby$(J)$, for an interval $J$, returns whether $J$ 
		intersects some interval of $\II$.
	\item \INTcontaining$(y)$, for a real value $y$, returns
		the representation for $\II'=\{ I\in \II\mid y\in I\}$ 
		and the representation for $\II''= \II\setminus \II'$.
		The representation of $\II$ is destroyed in the process.\footnote{In our application, 
		the operation \INTcontaining$(y)$ is used only internally, but it seems useful in general
		and we keep it in this description of external operations.}
	\item \INTclip$(J)$, for an interval $J=[x,y]$, 
		returns the representation for the intervals 
		$\II'=\{ I\cap J\mid I \in \II\}$
		and for the intervals 
		$\II''=\{ I\cap (-\infty,x] \mid I \in \II\}\cup
		\{ I\cap [y,+\infty) \mid I \in \II\}$.
		In both cases we remove empty intervals,
		and remove intervals contained in another one, so that we keep having
		monotonic families.
		The representation of $\II$ is destroyed in the process.
	\item \INTjoin$(\II_1,\II_2)$ 
		returns the representation of $\II=\II_1\cup \II_2$ if
		$\II$ is a monotonic family and 
		all the intervals of $\II_1$ are to the left of all the intervals
		of $\II_2$. Otherwise it returns an error.
		The representations of $\II_1$ and $\II_2$ are destroyed in the process.
	\item \INTshift$(\alpha)$, for a given real value $\alpha$,
		shifts all the intervals by $\alpha$; 
		this is, each interval $[a,b]$ in $\II$ is replaced by
		$[a+\alpha, b+\alpha]$.
	\item \INTextend$(\lambda)$, for a given real value $\lambda>0$,
		extends all the intervals by $\lambda$ in both directions; 
		this is, each interval $[a,b]$ in $\II$ is replaced by
		$[a-\lambda, b+\lambda]$.
\end{itemize}

\begin{theorem}
\label{thm:monotonic}
	There is a data structure to store monotonic families
	of $m$ intervals with the following time guarantees:
	\begin{itemize}
		\item the operations \INTcopy and \INTreport take $O(m)$ time;
		\item the operations \INTinsert, \INTdelete, \INThitby, \INTcontaining,
			\INTclip, \INTjoin, \INTshift, \INTextend\ take $O(\log m)$ time.
			(For \INTjoin\ the value $m$ is the size of the resulting set $\II$.)
	\end{itemize}
\end{theorem}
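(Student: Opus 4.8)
The plan is to store the monotonic family $\II$ in an augmented balanced binary search tree $\mathcal T$ whose nodes are the intervals of $\II$, ordered by the left-to-right linear order on $\II$ introduced above. Instead of storing the endpoints $a,b$ of an interval at its node, we store the pair $(u,w)=(a+b,\,b-a)$, keeping \emph{both} coordinates in difference form along root-to-node paths, exactly as a single coordinate is kept in difference form in the proof of Theorem~\ref{thm:tree}. Since the family is monotonic, $[a_1,b_1]$ precedes $[a_2,b_2]$ iff $a_1<a_2$ iff $b_1<b_2$ iff $u_1<u_2$, so $u$ is a legitimate search key consistent with the tree order. The point of this change of coordinates is that \INTshift$(\alpha)$ sends $(u,w)\mapsto(u+2\alpha,w)$ and \INTextend$(\lambda)$ sends $(u,w)\mapsto(u,w+2\lambda)$: in each case the same value is added to every node, the tree order is unaffected, and so, just as with \TREEshift\ in Theorem~\ref{thm:tree}, each of these two operations is a single constant-time update at the root. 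Whenever an actual endpoint is needed we recover $a=(u-w)/2$ and $b=(u+w)/2$ by accumulating the stored differences along the path, and balancing rotations fix up the differences in the standard way. With this representation \INTcopy\ and \INTreport\ are plain traversals of $\mathcal T$, taking $O(m)$ time.

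For the remaining operations the plan is to use, besides the tree order, the fact that many natural subfamilies of a monotonic family are \emph{contiguous} in this order, so they can be isolated with a constant number of splits and reassembled with joins; the join preconditions always hold because the isolated blocks keep their relative left-to-right position. Thus \INTinsert$(J)$ and \INTdelete$(J)$ are a binary search by left endpoint (recomputing $a$ from $(u,w)$ at each visited node) followed by the usual insertion/deletion and rebalancing, in $O(\log m)$ time. For \INThitby$([x,y])$ we locate the last interval $[a,b]$ of $\II$ with $a\le y$; by monotonicity this interval also maximizes $b$ among all intervals with $a\le y$, hence $[x,y]$ meets $\II$ iff such an interval exists and $b\ge x$. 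For \INTcontaining$(y)$, the intervals containing $y$ are those with $a\le y$ that moreover have $b\ge y$; within the prefix $P$ of intervals with $a\le y$ the right endpoints increase, so these form a suffix $P_2$ of $P$, obtained by splitting $\mathcal T$ at ``$a\le y$'' and then splitting $P$ at ``$b<y$''; we output $P_2$ and re-join the two leftover pieces. Each of these costs $O(\log m)$ time, and \INTjoin\ is provided directly by the underlying tree.

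The heart of the proof will be \INTclip$([x,y])$. First, two binary searches and splits isolate the block $M$ of intervals meeting $[x,y]$ from the prefix $L$ of intervals lying entirely left of $x$ and the suffix $R$ of those lying entirely right of $y$; the intervals of $L$ and $R$ pass unchanged to $\II''$ and contribute nothing to $\II'$. Every interval of $M$ has right endpoint $\ge x$ and left endpoint $\le y$; among them, those overhanging $x$ on the left (with $a<x$) form a prefix $M_1$ of $M$ and those overhanging $y$ on the right (with $b>y$) form a suffix $M_3$. Here monotonicity is decisive: clipping every interval of $M_1$ to the common left endpoint $x$ yields a nested family whose only survivor is the clip of the \emph{last} interval of $M_1$ (the one with the largest right endpoint), while the left parts $[a,x]$ that the intervals of $M_1$ contribute to $\II''$ collapse to the single interval whose left endpoint is that of the \emph{first} interval of $M_1$; the situation at $y$ with $M_3$ is symmetric. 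Hence, after a short case analysis --- including the degenerate case where some interval contains $[x,y]$, in which $M_1$ and $M_3$ overlap and $\II'=\{[x,y]\}$ --- the family $\II'$ is the interior portion of $M$ with at most two intervals prepended and appended, while $\II''$ is $L$, $R$, together with at most two cut-off pieces; each such piece is read off from an extreme node of an already-isolated subtree, and the at most four resulting pieces are reassembled with a constant number of joins whose preconditions hold by construction. The main obstacle, and in fact the only genuinely delicate point of the whole statement, will be exactly this bookkeeping for \INTclip: verifying that the at most two new intervals on each side are indeed the maximal ones, inserting them at the correct positions so that \emph{both} output families stay monotonic, and disposing of the boundary cases ($M$ empty, $M_1$ or $M_3$ empty, $M_1$ meeting $M_3$, $x=y$, and non-closed endpoints, which are handled via the symbolic infinitesimal discussed before the statement). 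Since every operation performs only $O(1)$ splits, joins and searches on $\mathcal T$, the total running time is $O(\log m)$ as claimed.
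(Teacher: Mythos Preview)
Your proposal is correct and follows essentially the same approach as the paper: store the intervals in a balanced BST with both the position and the length kept in difference form so that \INTshift\ and \INTextend\ become root updates, and exploit monotonicity so that the tree is simultaneously a BST for left and right endpoints, which lets \INTcontaining\ and \INTclip\ be implemented with a constant number of splits and joins. Your change of coordinates to $(u,w)=(a+b,\,b-a)$ is a pleasant variant of the paper's $(a,\,b-a)$ pair---it makes \INTextend\ touch only one coordinate instead of two---and your \INThitby\ uses one predecessor query where the paper uses a predecessor/successor pair, but otherwise the argument and the case analysis for \INTclip\ match the paper's.
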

\begin{proof}
	Let $\II$ be the family of monotonic intervals to store.
	We use a dynamic balanced binary search tree $\mathcal{T}$ 
	where each node represents one element of $\II$. 
	For the node $\mu$ of $\mathcal{T}$ that represents the interval $I$, 
	let $a(\mu)$, $b(\mu)$ and $\ell(\mu)=b(\mu)-a(\mu)$ be the left endpoint,
	the right endpoint, and the length of $I$, respectively.
	Thus, if $\mu$ represents $[a_i,b_i]$, 
	we have $a_i=a(\mu)$ and $b_i= a(\mu)+\ell(\mu)$.

	The tree $\mathcal{T}$ is a binary search tree with respect to the values $a(\mu)$.
	Because the family of intervals is monotonic, $\mathcal{T}$ is \emph{also} 
	a binary search tree with respect to the values $b(\mu)$.
	However, the values $a(\mu)$, $b(\mu)$ or $\ell(\mu)$ are \emph{not stored explicitly}.
	Instead, the values are stored in difference form and implicitly.
	More precisely, at each node $\mu$ of $\mathcal{T}$ we store two values, 
	$\TREEdiff(\mu)$ and $\INTlength(\mu)$, defined as follows.
	If $\mu$ is the root of the tree and represents the interval $[a,b]$, 
	then $\TREEdiff(\mu)=a$ and $\INTlength(\mu)=b-a$.
	If $\mu$ is a non-root node of the tree and $\mu'$ is its parent, then
	$\TREEdiff(\mu)=a(\mu)-a(\mu')$ and 
	$\INTlength(\mu)=\ell(\mu)-\ell(\mu')=(b(\mu)-a(\mu))-(b(\mu')-a(\mu'))$.
	
	This is an extension of the technique employed
	in the proof of Theorem~\ref{thm:tree}. In fact, $\mathcal{T}$ is just
	the tree in the proof of Theorem~\ref{thm:tree} for the left endpoints
	of the intervals, where additionally each node stores information about
	the length of the interval, albeit this additional information is stored 
	also in difference form.
 	
	Whenever we want to obtain $a(\mu)$ or $\ell(\mu)$ for a node $\mu$, 
	we have to add $\TREEdiff(\mu')$ or $\INTlength(\mu')$
	for the nodes $\mu'$ along the root-to-$\mu$ path, respectively.
	The right endpoint $b(\mu)$ is obtained from $b(\mu)=a(\mu)+\ell(\mu)$.
	Since operations in a tree always go from a node
	to a neighbor, we can assume that the values $a(\mu)$, $b(\mu)$ and $\ell(\mu)$
	are available at a cost of $O(1)$ time per node, after an initial
	cost of $O(\log m)$ time to compute the values at the first node.
	Of course, the values $\TREEdiff(\mu)$ and $\INTlength(\mu)$
	have to be updated through the changes in the tree, 
	including rotations or other balancing operations.

	Since $\mathcal{T}$ is a binary search tree with respect to the values $a(\cdot)$
	and also with respect to the values $b(\cdot)$, 
	we can make the usual operations that can be performed in a binary
	search tree, such as
	predecessor or successor, with respect to any of those two keys. 
	For example, we can get in $O(\log m)$ time 
	the rightmost interval that contains a given value $y$, which amounts
	to a predecessor query with $y$ for the values $a(\cdot)$, or 
	we can get the leftmost interval that contains a given value $y$,
	which amounts to a successor query with $y$ for the values $b(\cdot)$.
	
	With this representation, it is trivial to perform the operations 
	\INTshift$(\alpha)$ or \INTextend$(\lambda)$ in $O(1)$ time.
	We just update \TREEdiff\ or \INTlength\ at the root.

	The operations
	\INTcopy, \INTreport, \INTinsert\ and \INTdelete\ can be carried
	out as normal operations in a dynamic binary search tree.
	The operation \INTjoin\ is also just the join operation for trees.
		
	For the operation \INThitby$(J)$\ with $J=[x,y]$
	we make a predecessor and a successor query with $x$ for the values $a(\cdot)$. 
	This gives the two intervals $I_1, I_2 \in \II$ such that
	$x$ is between the left endpoint of $I_1$ and $I_2$.
	We then check whether $I_1\cup I_2$ intersect $J$, which requires
	constant time.

	For the operation \INTcontaining$(y)$ we proceed as follows.
	We find the rightmost interval $[a_\ell,b_\ell]\in \II$ 
	with $b_\ell < y$. If there is no interval in $\II$ completely to the left of $y$, 
	we set $a_\ell=-\infty$.
	Because $\II$ is a monotonic family of intervals,
	the intervals that contain $y$ are precisely those with the 
	left endpoint in the interval $(a_\ell,y]$. We use
	the operations \TREEsplit$(a_\ell)$ and \TREEsplit$(y)$ 
	with respect to the values $a(\cdot)$ to obtain
	the representations of  
	\begin{align*}
		\II_1~&=~\{ [a,b]\in \II\mid a\le a_\ell \},\\
		\II_2~&=~\{ [a,b]\in \II\mid a_\ell < a \le y \} ~=~
			\{ I\in \II\mid y\in I\},\\
		\II_3~&=~\{ [a,b]\in \II\mid y <a \}.
	\end{align*}
	We then use the \TREEjoin\ operation to merge the representations of 
	$\II_1$ and $\II_3$.
	
	For the operation \INTclip$(J)$\ with the interval $J=[x,y]$ we proceed as follows.
	We use $\INTcontaining(x)$, $\INTcontaining(y)$, and $\INTjoin$
	to separate $\II$ into the group $\II'$ of intervals pierced by $x$ or $y$,
	and the rest, $\II''$.
	Then we use $\TREEsplit(x)$ (with respect to $a(\cdot)$) 
	and $\TREEsplit(y)$ (with respect to $a(\cdot)$)
	to split $\II''$ into three groups:
	$\II_1$ containing intervals of $\II$ contained in $(-\infty, x]$, 
	$\II_2$ containing intervals of $\II$ contained in $[x,y]$, 
	and $\II_3$ containing intervals of $\II$ contained in $[y,+\infty)$.
	In $\II'$ we find the leftmost interval that contains $x$,
	clip it with $(-\infty, x]$, and add it to $\II_1$.
	Again in the same group, $\II'$, we find the rightmost interval that
	contains $x$, clip it with $[x,y]$, and add it to $\II_2$.
	We do a similar procedure for $y$: add to $\II_2$ 
	the leftmost interval of $\II'$ that contains $y$, clipped with $J$,
	and add to $\II_3$
	the rightmost interval of $\II'$ that contains $y$, clipped with $[y,+\infty)$.
	If the two intervals we added to $\II_2$ are the same, which means that
	they both are $[x,y]$, we only add one of them.
	The procedure takes $O(\log m)$ time.
\end{proof}

Consider a set $A\subseteq \RR$.
A \DEF{representation} of $A$ is a family $\II$ of monotonic intervals such that
$A=\bigcup_{I\in \II} I$. 
The intervals in $\II$ \emph{may intersect} and the representation
is not uniquely defined.
See Figure~\ref{fig:algorithm3} for an example. 
The \emph{size} of the representation $\II$ is the number of (possibly non-disjoint) intervals in $\II$.
This is potentially larger than the minimum number of intervals that is needed because
the intervals in $\II$ can intersect. 

\begin{figure}
\centering
	\includegraphics[page=3]{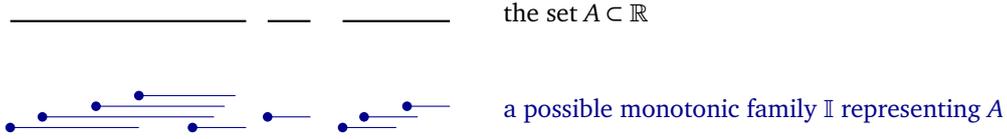}
	\caption{The set $A$ at the top and one possible representation $\II$ of $A$.
		The size of this representation is $9$.}
	\label{fig:algorithm3}
\end{figure}

Consider some set $A$ and its representation $\II$.
If we use the data structure of Theorem~\ref{thm:monotonic} to store $\II$,
the operations reflect operations we do with $A$.
For example, \INThitby$(J)$ tells whether $J$ intersects $A$,
while $\INTclip([x,y])$
returns a representation of $A\cap [x,y]$ and a representation of 
$A\cap (-\infty,x]\cup A\cap [y,+\infty)$.
When $\II_1$ and $\II_2$ are representations of $A_1$ and $A_2$,
then \INTjoin$(\II_1,\II_2)$ returns the representation of $A_1\cup A_2$,
assuming that $\max(A_1) < \min(A_2)$.

\subsection{Algorithm}
In this section we present an efficient algorithm based on the characterization of the previous section.
We keep using the same notation. In particular, $T$ keeps being a rooted tree
and each vertex has at most two children.
We use $n$ for the number of vertices of $T$.

There are two main ideas used in our approach.
The first one is that, for each vertex of the tree with two children,
we want to spend time (roughly) proportional to the size of the smaller subtree of its children.
The second idea is to use representations of $A(v)$ and $B(v)$ and manipulate them 
using the data structure of Theorem~\ref{thm:monotonic}.

The following lemma, which is folklore, shows the advantage of the first idea.
For each node $v$ with two children, let $v_1$ and $v_2$ be its two children.
If $v$ has only one child, we denote it by $v_1$.
For each node $v$, let $n(v)$ be the number of vertices in the subtree $T(v)$.
(Thus $n(r)=n$.)

\begin{lemma}
\label{le:sumnodes}
	If $V_2$ denotes the vertices of $T$ with two children, then
	\[
		\sum_{v\in V_2} \min \{ n(v_1), n(v_2)\} = O(n\log n).
	\]
\end{lemma}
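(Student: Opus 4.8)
The plan is to use the standard ``smaller-half'' charging argument. First I would rewrite the sum as a sum over vertices rather than over $V_2$: for a vertex $v\in V_2$ with children $v_1,v_2$, the quantity $\min\{n(v_1),n(v_2)\}$ is exactly the number of vertices contained in the smaller of the two subtrees $T(v_1),T(v_2)$ (break ties arbitrarily). So I charge one unit to each vertex of that smaller subtree. This gives
\[
	\sum_{v\in V_2}\min\{n(v_1),n(v_2)\} ~=~ \sum_{u\in V(T)} c(u),
\]
where $c(u)$ denotes the number of vertices $v\in V_2$ such that $u$ lies in the smaller of the two child-subtrees of $v$. It therefore suffices to show that $c(u)=O(\log n)$ for every vertex $u$, since there are only $n$ vertices in total.

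Next I would fix a vertex $u$ and analyze the vertices that charge it. These are certain ancestors of $u$; list them as $v^{(1)},v^{(2)},\dots,v^{(t)}$ in order of increasing distance from $u$ (so $v^{(1)}$ is the closest charging ancestor and $v^{(t)}$ the closest to the root, and $v^{(i+1)}$ is a proper ancestor of $v^{(i)}$). For each $i$, let $c^{(i)}$ be the child of $v^{(i)}$ lying on the path from $v^{(i)}$ to $u$ (so $c^{(i)}$ is an ancestor of $u$, possibly $c^{(i)}=u$). By the definition of the charge, $T(c^{(i)})$ is the smaller of the two subtrees hanging from $v^{(i)}$, hence the sibling subtree has at least $n(c^{(i)})$ vertices and $n(v^{(i)})\ge 2\,n(c^{(i)})$.

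The key (and essentially only nontrivial) step is the geometric growth along the path. Since $v^{(i)}$ lies on the path from $v^{(i+1)}$ down to $u$ and $c^{(i+1)}$ is the first vertex on that path below $v^{(i+1)}$, the vertex $c^{(i+1)}$ is an ancestor of $v^{(i)}$ or equal to it; thus $T(v^{(i)})\subseteq T(c^{(i+1)})$ and $n(c^{(i+1)})\ge n(v^{(i)})\ge 2\,n(c^{(i)})$. Chaining these inequalities yields $n(c^{(t)})\ge 2^{t-1}\,n(c^{(1)})\ge 2^{t-1}$, and since $n(c^{(t)})\le n$ we obtain $t\le 1+\log_2 n$, i.e. $c(u)=O(\log n)$. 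Summing over all $u\in V(T)$ gives $\sum_{v\in V_2}\min\{n(v_1),n(v_2)\}=O(n\log n)$, as claimed. The only point requiring a little care is keeping the ancestor relationships straight—verifying which child-subtree contains $u$ and that the sizes $n(c^{(i)})$ double along the way to the root—but this is routine bookkeeping.
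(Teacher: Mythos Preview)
Your proof is correct. It is, however, a genuinely different argument from the one in the paper.

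The paper proceeds by a direct structural induction: defining $\sigma(u)=\sum_{v\in V_2\cap V(T(u))}\min\{n(v_1),n(v_2)\}$, it shows $\sigma(u)\le n(u)\log_2 n(u)$ by splitting the sum at the root's two children and using $n(u_1)\le n(u)/2$ for the smaller side. You instead use the standard ``small-to-large'' charging argument: each unit of the sum is charged to a vertex of the smaller subtree, and then you bound the number of charges to any fixed vertex $u$ by $1+\log_2 n$ via the doubling of subtree sizes along the chain of charging ancestors. Both proofs are classical and yield the same $O(n\log n)$ bound; the inductive version is slightly more compact, while your charging version gives the additional per-vertex information $c(u)=O(\log n)$, which is sometimes useful in its own right (e.g., for bounding depths of recursion in small-to-large merging).
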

\begin{proof}
	For each vertex $u$ of $T$ define
	\[
		\sigma(u)~=~\sum_{v\in V_2 \cap V(T(u))} \min \{ n(v_1), n(v_2)\}.
	\]
	Thus, we want to bound $\sigma(r)$.
	We show by induction on $n(u)$ that
	\[
		\sigma(u)~\le~ n(u) \log_2 n(u).
	\]
	For the base case note that, when $n(u)= 1$, the vertex $u$ is a leaf and $\sigma(u)=0$,
	so the statement holds.
	
	If $u$ has one child $u_1$, then we have $V_2\cap T(u)= V_2\cap T(u_1)$,
	\[
		\sigma(u)~=~\sigma(u_1)~\le ~ n(u_1) \log_2 n(u_1) ~\le~ n(u) \log_2 n(u),
	\]
	and the bound holds.
	If $u$ has two children $u_1$ and $u_2$, then we can assume without loss of generality
	that $n(u_1)\le n(u_2)$, which implies that $n(u_1)< n(u)/2$.
	Using the induction hypothesis for $n(u_1)$ and $n(u_2)$,
	we obtain
	\begin{align*}
		\sigma(u) ~&=~ \sum_{v\in V_2 \cap V(T(u))} \min \{ n(v_1), n(v_2)\} \\
					&=~	\sigma(u_1)+ \sigma(u_2) + n(u_1) \\
					&\le~ n(u_1)\log_2 n(u_1)+ n(u_2)\log_2 n(u_2) + n(u_1) \\
					&<~ n(u_1)\log_2 \left(n(u)/2 \right)+ n(u_2)\log_2 n(u) + n(u_1) \\
					&=~ n(u_1)\left(\log_2 n(u) -1\right)+ n(u_2)\log_2 n(u) + n(u_1)\\
					&=~ \bigl( n(u_1)+ n(u_2)\bigr) \log_2 n(u) \\
					&<~ n(u) \log_2 n(u).   \qedhere
	\end{align*}
\end{proof}

We manipulate the sets $A(v)$ and $B(v)$ using representations
$\II(A(v))$ and $\II(B(v))$, respectively.
Note that $B(v)$ is just a finite set of values,
but we store it as a family of zero-length monotonic intervals. 
The reason for this artificial approach to treat $B(v)$, as opposed to using
a set of values, is that in our algorithm sometimes 
we set the lengths of intervals defined by $B(v)$.
Thus, there is no real difference between how we treat the representations of $A(\cdot)$
and $B(\cdot)$. 

The families of intervals $\II(A(v))$ and $\II(B(v))$
are stored and manipulated using the data structure of Theorem~\ref{thm:monotonic}.
Thus, we are using the data structure described in Theorem~\ref{thm:monotonic}
to represent $A(v)$ and $B(v)$ implicitly, as the union of monotonic intervals.
The reason for this choice is technical and reflected in the proof of the 
next lemma. 

For each vertex $v$ of $T$,
we use $m_A(v)$ and $m_B(v)$ to denote the sizes of $\II(A(v))$ and $\II(B(v))$,
respectively. Although the value $m_A(v)$ actually depends on the family $\II(A(v))$ of 
intervals that is used, this relaxation of the notation will not lead to confusion.

It is clear that $B(v)$ has at most $n(v)$ values because each value corresponds to a vertex of $T(v)$. Thus, $m_B(v)\le n(v)$.
A similar bound will hold for $m_A(v)$ by induction.

\begin{lemma}
\label{le:degree2}
	Consider a vertex $v$ of $T$ with two children $v_1$ and $v_2$,
	and assume that we have representations 
	$\II(A(v_1))$, $\II(B(v_1))$, $\II(A(v_2))$
	and $\II(B(v_2))$ of 
	$A(v_1)$, $B(v_1)$, $A(v_2)$ and $B(v_2)$, respectively,
	each of them stored in the data structure of Theorem~\ref{thm:monotonic}.
	Set $m_1=m_A(v_1)+ m_B(v_1)$ and $m_2=m_A(v_2)+ m_B(v_2)$,
	and assume that $m_1\le m_2$.
	We can compute in $O(m_1 \log m_2)$ time families $\II(A(v))$ and
	$\II(B(v))$ that represent $A(v)$ and $B(v)$, respectively,
	each of them stored in the data structure of Theorem~\ref{thm:monotonic}.\footnote{In the
	process we destroy the data structures for $\II(A(v_2))$ and $\II(B(v_2))$.}
	Moreover, the representation $\II(A(v))$ has size at most
	\[
		\max\{ m_A(v_1)+m_A(v_2), \, m_A(v_1)+m_B(v_2), \, m_B(v_1)+m_A(v_2),\, m_B(v_1)+m_B(v_2)\}.
	\]
\end{lemma}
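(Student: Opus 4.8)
The plan is to evaluate the recursive formulas of Lemmas~\ref{le:A(v)} and~\ref{le:B(v)} using a constant number of the operations from Theorem~\ref{thm:monotonic}, always letting the cheaper child drive the computation. Assume as in the statement that $v_1$ is the cheaper child, so $m_1=m_A(v_1)+m_B(v_1)\le m_A(v_2)+m_B(v_2)=m_2$, and that the structures for $v_2$ may be destroyed (we work on private copies of $v_1$'s structures wherever a quantity is used more than once; that costs only $O(m_1)$ extra). The bounded-size ingredients are cheap: the shifted sets $A'(v_j)=A(v_j)-\lambda(vv_j)$ and $B'(v_j)=B(v_j)+\lambda(vv_j)$ are obtained by a single \INTshift on (a copy of) the corresponding structure, and $C'(v_j)$ is obtained from $\II(B(v_j))$ by one \INTshift followed by one \INTextend$(\lambda(vv_j))$; intersecting with $\RR_{>0}$ is one clip-type operation discarding the part in $(-\infty,0]$; and to evaluate $\chi(v)$ one needs only the truth values of ``$0\in A'(v_j)$'', which is ``$\lambda(vv_j)\in A(v_j)$'', an \INThitby$([\lambda(vv_j),\lambda(vv_j)])$ on $\II(A(v_j))$, and ``$0\in C'(v_j)$'', which is ``$B(v_j)\cap(-\lambda(vv_j),\lambda(vv_j))\ne\emptyset$'', an \INThitby on $\II(B(v_j))$. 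All of this is $O(1)$ operations, hence $O(\log m_2)$ time.

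The core is a generic step: given a ``small'' set $S$ stored as a monotonic family of $s\le m_1$ intervals and a ``large'' set $L$ stored, in a structure of Theorem~\ref{thm:monotonic}, as a family of $\ell\le m_2$ intervals, produce a representation of $S\cap L$ in $O(m_1\log m_2)$ time. First turn $S$ into its sorted list of maximal components $J_1<\dots<J_{s'}$ (by \INTreport and a linear merge, $O(s)$ time, $s'\le s$); then for $t=1,\dots,s'$ run \INTclip$(J_t)$ on the current structure for $L$, set the returned ``inside'' family aside, and keep only the ``outside'' family for the next iteration; finally \INTjoin the $s'$ set-aside families in left-to-right order. Since $J_1<\dots<J_{s'}$ are pairwise disjoint, the accumulated family represents $\bigcup_t(L\cap J_t)=L\cap S$, and the cost is $O(s)+O(s'\log m_2)=O(m_1\log m_2)$. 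Each of the four cases of Lemma~\ref{le:A(v)} is, after intersecting with $\RR_{>0}$, exactly one such step, with the $v_1$-side object (one of $A'(v_1),C'(v_1)$, of representation size $\le m_1$) as $S$ and the $v_2$-side object (one of $A'(v_2),C'(v_2)$, of size $\le m_2$) as $L$; likewise each intersection appearing in Lemma~\ref{le:B(v)}. For the $B$-side one additionally takes a union of at most two point sets together with $\chi(v)$; because $B'(v_1)$ consists of isolated points, those intersections can alternatively be realized by extracting, for each of the $\le m_1$ components of the $v_1$-side set, the relevant range of points of the $v_2$-side point set with two \TREEsplit's and re-joining with \INTjoin, again $O(m_1\log m_2)$ in total, and the assembly of $B(v)$ from $\chi(v)$ and these two point sets is $O(m_1)$ further \TREEsplit/\INTjoin operations. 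No size bound on $\II(B(v))$ is claimed, so nothing more is needed there.

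The main obstacle is the size bound on $\II(A(v))$. The clean way to get it is to maintain the invariant that $\II(A(v))$ and $\II(B(v))$ are stored as \emph{disjoint} families — the maximal components of $A(v)$, resp.\ the singletons of $B(v)$. If $L$ in the generic step is disjoint, then each \INTclip returns disjoint families, \INTjoin preserves this, and a standard sweep argument shows the number of overlapping ($L$-component, $J_t$) pairs — hence the number of output intervals — is at most (number of components of $L$)$+$(number of components of $S$)$\le\ell+s$: a new overlapping pair can only start at a left endpoint of some interval, and at each left endpoint at most one new pair starts, by disjointness of the other family. Taking $\ell,s$ to be the relevant pair among $m_A(v_i),m_B(v_i)$ in each of the four cases yields exactly the claimed maximum, and intersecting with $\RR_{>0}$ only deletes a component. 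The delicate point — the step I expect to be the crux — is that $C'(v_j)$ is produced as a \emph{non-disjoint} family of $m_B(v_j)$ overlapping $\lambda$-balls, so to stay within the bound one must feed its \emph{disjoint} form into the generic step; merging $\II(C'(v_j))$ into disjoint form costs $O(m_B(v_j))$ time, which is within budget when $C'(v_j)$ sits on the $v_1$-side but not when (in the cases $i(v)=i(v_1)\ne i(v_2)$ or $i(v)\ne i(v_1),\,i(v)\ne i(v_2)$) the large set is $C'(v_2)$. There one has to interleave the merging with the \INTclip's driven by the components $J_1,\dots,J_{s'}$ of the $v_1$-side set: for each $J_t=[\alpha_t,\beta_t]$, \TREEsplit off from $\II(B(v_2))$ the points in $[\alpha_t-\lambda,\beta_t+\lambda]$, form the disjoint sub-intervals of $C'(v_2)\cap J_t$ on the fly, and return to $\II(B(v_2))$ those points whose $\lambda$-balls still reach later $J_{t'}$'s. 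The work and the output size are then charged against the components of the never-fully-materialized disjoint $C'(v_2)$ (using that two $\lambda$-balls which both swallow a common $J_t$ lie in the same component of $C'(v_2)$), keeping the running time $O(m_1\log m_2)$ and the output within $m_A(v_1)+m_B(v_2)$ intervals.
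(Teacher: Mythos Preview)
Your overall strategy---use the recursive formulas of Lemmas~\ref{le:A(v)} and~\ref{le:B(v)}, make the small side explicit in $O(m_1)$ time, and drive $O(m_1)$ operations on the large side---is exactly what the paper does, and your treatment of $\chi(v)$, the shifts, and the $B(v)$ assembly is fine.

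The part you flag as the crux, however, is a phantom problem. You argue the size bound via overlapping $(L\text{-component},J_t)$ pairs and therefore think $L$ must be in disjoint form; this forces you into the intricate interleaving for $C'(v_2)$. But the bound follows directly from a \emph{left-endpoint} count on the monotonic family, with no disjointness needed on the large side. When you iteratively apply $\INTclip(J_1),\dots,\INTclip(J_s)$ (with $J_1,\dots,J_s$ the disjoint maximal intervals of the small side) to a monotonic family $\II(L)$ of $\ell$ intervals, every interval in the resulting inside family has a left endpoint that is either some $\operatorname{left}(J_t)$ or the left endpoint of some original interval of $\II(L)$; since the $J_t$ are disjoint, each original left endpoint lies in at most one $J_t$, and within each $J_t$ the output is monotonic, hence has distinct left endpoints. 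That gives at most $s+\ell$ output intervals. Applying this with $\II(L)\in\{\II(A'(v_2)),\II(C'(v_2))\}$ (the latter obtained by one $\INTextend$ on $\II(B(v_2))$, size $m_B(v_2)$) yields exactly the four terms in the claimed maximum, case by case.

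So you can simply drop the whole interleaving paragraph. As written, that workaround is also not fully justified: ``form the disjoint sub-intervals of $C'(v_2)\cap J_t$ on the fly'' is not a supported $O(\log m_2)$ operation of Theorem~\ref{thm:monotonic}, and your charging ``against the components of the never-fully-materialized disjoint $C'(v_2)$'' does not control how many points of $B(v_2)$ get scanned per $J_t$. For instance, if $\lambda(vv_2)$ is large enough that every $\lambda$-ball swallows every $J_t$, your procedure as described touches all of $B(v_2)$ for each of the $m_1$ windows, giving $\Theta(m_1 m_2)$ work; an extra tree augmentation could fix this, but none is needed once you use the left-endpoint argument above.
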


\begin{proof}
	First we compute $\chi(v)$.
	To check whether $0\in A'(v_j)$, where $j\in \{1,2 \}$, we perform the operation 
	$\INThitby([\lambda(vv_j),\lambda(vv_j)])$ in the representation $\II(A(v_j))$.
	To check whether $0\in C'(v_j)$, where $j\in \{1,2 \}$,
	we observe that $0\in C'(v_j)$ if and only if $[-\lambda(vv_j),+\lambda(vv_j)]$
	contains some element of $B(v_j)$. This latter question
	is answered making the query 
	$\INThitby([-\lambda(vv_j),+\lambda(vv_j)])$ in the representation $\II(B(v_j))$. 
	We conclude, that $\chi(v)$ can be computed in $O(\log m_2)$ time
	without changing any of the representations.
	
	Next, for each $j\in \{1,2 \}$,
	we compute the representation 
	$\II(A'(v_j))$ of $A'(v_j)$ applying the operation $\INTshift(-\lambda(vv_j))$ to $\II(A(v_j))$.
	Similarly, we can compute the representation $\II(B'(v_j))$ of $B'(v_j)$.
	This takes $O(\log m_1)+ O(\log m_2)= O(\log m_2)$ time.
	More importantly, with an additional cost of $O(\log m_j)$ time 
	we can use indistinctly the representation of $B(v_j)$ or $B'(v_j)$, whatever is more convenient.

	Note that we cannot afford to make copies of the representations 
	$\II(A'(v_2))$ or $\II(B'(v_2))$ because this would take $\Theta(m_2)$ time, which may be too much.
	On the other hand, we can manipulate and make explicit copies of
	$\II(A'(v_1))$ and $\II(B'(v_1))$ because it takes $O(m_1)$ time.
	Define the \DEF{minimal representation} of a set $A\subset \RR$ to be the maximal intervals 
	(with respect to inclusion) in $A$. From $\II(A'(v_1))$ we
	can compute the minimal representation of $A'(v_1)$ in linear time, that is, $O(m_1)$ time.
	For this we use the operation \INTreport in $\II(A'(v_1))$, which returns
	the intervals in $\II(A'(v_1))$ sorted by their left endpoints,
	and sequentially merge adjacent intervals that intersect.
	Similarly, we can find a minimal representation of $B'(v_1)$, which is a list 
	of the values in $B'(v_1)$.
	Thus, after $O(m_1)$ time we have the minimal representation of $A'(v_1)$ as
	a list of (sorted) at intervals $J_1,\dots, J_s$
	and $B'(v_1)$ as a sorted list of values $y_1,\dots,y_t$, where $s+t\le m_1$.
	
	Now we distinguish cases depending on the relations between $i(v)$, $i(v_1)$ and $i(v_2)$.
	
	Consider the case when $i(v)=i(v_1)=i(v_2)$.
	We have two parts.
	\begin{enumerate}
	\item First we compute the representation $\II(B(v))$ of $B(v)$. 
		Because of Lemma~\ref{le:B(v)}, we have 
		\[
			B(v)=\chi(v)\cup (B'(v_1)\cap A'(v_2))\cup (B'(v_2)\cap A'(v_1)).
		\]
		Recall that we have an explicit representation of $B'(v_1)$.
		For each element $y$ in $B'(v_1)$, we query $\II(A'(v_2))$ 
		using $\INThitby([y,y])$ to decide whether $y\in A'(v_2)$.
		Thus, we can compute an explicit representation of 
		$B'(v_1)\cap A'(v_2)$ in $O(m_1\log m_2)$ time.
	
		Recall that we also have an explicit minimal representation $J_1,\dots, J_s$ of $A'(v_1)$.
		For each interval $J$ in that representation,
		we query $\II(B(v_2))$ with $\INTclip(J)$
		to obtain the representation of $J\cap B'(v_2)$.
		Since the sets $J_1,\dots, J_s$ are pairwise disjoint, we indeed obtain representations
		of the sets	$J_1\cap B'(v_2),\dots, J_s\cap B'(v_2)$. 
		We then merge them using $\INTjoin$. Since the intervals $J_1,\dots, J_s$ are pairwise disjoint,
		the operation $\INTjoin$ can be indeed performed.
		In total we have used $s\le m_1$ times the operations $\INTclip$ and $\INTjoin$,
		and thus we spent $O(m_1\log m_2)$ time in total.
		Inserting in this representation the values (as zero-length intervals) of
		$B'(v_1)\cap A'(v_2)$, we finally obtain a representation of 
		$(B'(v_1)\cap A'(v_2))\cup (B'(v_2)\cap A'(v_1))$.
		If $\chi(v)$ is nonempty, we also insert the interval $[0,0]$ in the representation. 
		The final result is a representation $\II(B(v))$ of $B(v)$.
		Note that in this computation we have destroyed the representation of $\II(B'(v_2))$
		because of the operations $\INTclip$.
	\item Next we compute the representation $\II(A(v))$ of $A(v)$.
		Because of Lemma~\ref{le:A(v)} we have that $A(v)=\RR_{>0}\cap A'(v_1)\cap A'(v_2)$.
		Recall that we have an explicit minimal representation $J_1,\dots, J_s$ of $A'(v_1)$.
		For each interval $J_i$ in the minimal representation of $A'(v_1)$,
		we extract from $\II(A'(v_2))$ a representation of $J_i\cap A'(v_2)$
		using $\INTclip(J_i)$. 
		Then we compute a representation of
		$\bigcup_{i\in [s]} J_i\cap A'(v_2) = A'(v_1)\cap A'(v_2)$
		using $s-1$ times the operation $\INTjoin$. In both cases it is important
		that the intervals $J_1,\dots,J_s$ are pairwise disjoint.		
		This takes $O(s \log m_2)=O(m_1 \log m_2)$ time.
		To obtain $\II(A(v))$ we apply $\INTclip(\RR_{>0})$. 
		Note that in this computation we have destroyed the representation $\II(A'(v_2))$ of $A'(v_2)$,
		because of the $\INTclip$ operations,
		and therefore this step has to be made after the computation of $B(v)$,
		which is also using the representation $\II(A'(v_2))$, but not changing it.
	\end{enumerate}

	\noindent Consider now the case when $i(v)=i(v_1)\neq i(v_2)$.
	We proceed as follows.
	\begin{enumerate}
	\item First we compute the representation $\II(B(v))$ of $B(v)$.
		Because of Lemma~\ref{le:B(v)} we have $B(v)=\chi(v)\cup (B'(v_1)\cap C'(v_2))$.
		Note that, for each $y\in \RR$, we have $y\in C'(v_2)$ if and only if the interval
		$[y-\lambda(vv_2), y+\lambda(vv_2)]$ contains some element of $B(v_2)$.
		Recall that we have an explicit description $y_1,\dots,y_t$ of $B'(v_1)$.
		Therefore, for each element $y\in B'(v_1)$, we use the operation
		$\INThitby([y-\lambda(vv_2), y+\lambda(vv_2)])$ in $\II(B(v_2))$ to detect whether
		$y\in C'(v_2)$. With this we computed $B'(v_1)\cap C'(v_2)$ explicitly
		in $O(m_1\log m_2)$ time and we did not change the representation $\II(B(v_2))$.
		Finally, we build the data structure for the representation $\II(B(v))$ of $B(v)$
		by inserting the intervals $[y,y]$ with $y\in B'(v_1)\cap C'(v_2)$ and,
		if $\chi(v)$ is nonempty, we also insert $[0,0]$ in the data structure.
	\item Next we compute the representation $\II(A(v))$ of $A(v)$.
		Because of Lemma~\ref{le:A(v)} we have $A(v)=\RR_{>0}\cap A'(v_1)\cap C'(v_2)$.
		We obtain a representation $\II(C'(v_2))$ of $C'(v_2)$ 
		applying the operation $\INTextend(\lambda(vv_2))$ to the representation $\II(B(v_2))$ of $B(v_2)$.
		Recall that we have an explicit minimal representation $J_1,\dots, J_s$ of $A'(v_1)$.
		For $i=1,\dots,s$, we iteratively use 
		the operation $\INTclip(J_i)$ in the representation $\II(C'(v_2))$.
		Using \INTjoin\ over the representations reported we obtain 
		the representation of 
		\[
			\bigcup_{i\in [s]} \left(C'(v_2)\cap J_i \right)
			~=~ C'(v_2)\cap \left(\bigcup_{i\in [s]}J_i\right) 
			~=~ C'(v_2)\cap A'(v_1).
		\]
		To obtain $\II(A(v))$ we apply $\INTclip(\RR_{>0})$.
		Since we are making $O(m_1)$ operations, we spend $O(m_1 \log m_2)$ time.
		Note that in this computation we have destroyed the representation of $\II(B'(v_2))$
		(or the equivalent representation $\II(B(v_2))$),
		and thus this step has to be made after the computation of $\II(B(v))$, 
		which is also using $\II(B'(v_2))$.
	\end{enumerate}

	\noindent Consider now the case when $i(v)=i(v_2)\neq i(v_1)$.
	We proceed as follows.
	\begin{enumerate}
	\item First we compute the representation $\II(B(v))$ of $B(v)$.
		Because of Lemma~\ref{le:B(v)} we have $B(v)=\chi(v)\cup (B'(v_2)\cap C'(v_1))$.
		We compute explicitly the minimal representation of $C'(v_1)$.
		Then, for each interval $I$ in that representation we query for
		the elements $I\cap B'(v_2)$ using $\INTclip(I)$ in $\II(B'(v_2))$ and 
		join the answers using $\INTjoin$ over all intervals $I$. 
		This takes $O(m_1\log m_2)$ time and changes the data structure of the
		representation $\II(B'(v_2))$.
		Finally, if $\chi(v)$ is nonempty, we also insert $[0,0]$ in the result.
		The total time is $O(m_1\log m_2)$.
	\item Next we compute the representation $\II(A(v))$ of $A(v)$.
		Because of Lemma~\ref{le:A(v)} we have $A(v)=\RR_{>0}\cap A'(v_2)\cap C'(v_1)$.
		Again, we compute explicitly the minimal representation of $C'(v_1)$.
		For each interval $I$ in the minimal representation of $C'(v_1)$
		we use the operation $\INTclip(I)$ in $\II(A'(v_2))$ to obtain
		a representation of $I\cap A'(v_2)$,
		and then use $\INTjoin$ to join all the answers.
		With this we obtain a representation of $A'(v_2)\cap C'(v_1)$,
		to which we apply $\INTclip(\RR_{>0})$.
		This procedure takes $O(m_1\log m_2)$ time and changes the representation of $\II(A'(v_2))$.
	\end{enumerate}
	
	\noindent Consider now the remaining case, when $i(v)\neq i(v_1)$ and $i(v)\neq i(v_2)$.
	We proceed as follows.
	\begin{enumerate}
	\item The computation of $B(v)$ is trivial, since $B(v)=\chi(v)$ by Lemma~\ref{le:B(v)}.
	\item The computation of the representation of 
		$A(v)=C'(v_1)\cap C'(v_2)$ is similar to the case when $i(v)=i(v_1)\neq i(v_2)$.
		We compute explicitly the minimal representation of $C'(v_1)$,
		and use it as it was done there (for $\II(A'(v_1))$).
		This takes $O(m_1\log m_2)$ time.
	\end{enumerate}	

	\noindent In each case we spent $O(m_1\log m_2)$ time, and the time bound follows.
	For the upper bound on the representation $\II(A(v))$ of $A(v)$, we note that each left endpoint of
	each interval in $\II(A(v))$ 
	gives rise to at most one interval in the representation of $A(v)$. The four terms
	correspond to the four possible cases we considered for the indices $i(v)$, $i(v_1)$ and $i(v_2)$.
\end{proof}

\begin{lemma}
\label{le:disjoint}
	The problem \textsc{Generalized Graphic Inverse Voronoi in Trees} for an input $(T,\UU)$
	where $T$ is an $n$-vertex tree of maximum degree $3$ and the candidate Voronoi cells
	are pairwise disjoint, can be solved in $O(n\log^2 n)$ time.
\end{lemma}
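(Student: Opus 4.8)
The plan is to run the dynamic program of Section~\ref{sec:characterization} bottom-up: root $T$ at a leaf $r$ (so every vertex has at most two children) and, processing vertices in post-order, maintain for each already-processed vertex $v$ a representation $\II(A(v))$ of $A(v)$ and a representation $\II(B(v))$ of $B(v)$, each stored in the data structure of Theorem~\ref{thm:monotonic}. At a leaf $v$ we have $A(v)=\RR_{>0}$ and $B(v)\in\{\emptyset,\{0\}\}$, giving representations of size $O(1)$. At a vertex $v$ with a single child $v_1$, the closed-form expressions for $A(v)$ and $B(v)$ stated at the end of Section~\ref{sec:characterization} are realized by a constant number of \INTshift, \INTextend, \INTclip, \INThitby\ and \INTinsert\ operations on the child's structures, so this case costs $O(\log n)$ and does not increase the representation size beyond $\max\{m_A(v_1),m_B(v_1)\}+1$. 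At a vertex $v$ with two children, relabel the children so that $m_A(v_1)+m_B(v_1)\le m_A(v_2)+m_B(v_2)$ and apply Lemma~\ref{le:degree2} with $m_1=m_A(v_1)+m_B(v_1)$ and $m_2=m_A(v_2)+m_B(v_2)$; this computes $\II(A(v))$ and $\II(B(v))$ in time $O(m_1\log m_2)$.

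For the running time I would first prove, by induction on $n(v)$, that $m_A(v)\le n(v)$ and $m_B(v)\le n(v)$ for every vertex $v$. The leaf case is immediate, and the one-child case follows from the paragraph above together with $n(v)=n(v_1)+1$. For a two-child vertex, the size bound in Lemma~\ref{le:degree2} bounds $m_A(v)$ by the largest of the four sums obtained by adding a term of $\{m_A(v_1),m_B(v_1)\}$ to a term of $\{m_A(v_2),m_B(v_2)\}$, hence by $n(v_1)+n(v_2)=n(v)-1$; and by Lemmas~\ref{le:A(v)} and~\ref{le:B(v)} the set $B(v)$ is assembled from at most $m_B(v_1)+m_B(v_2)$ values together with the singleton $\chi(v)$, so $m_B(v)\le m_B(v_1)+m_B(v_2)+1\le n(v)$. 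Consequently, at a two-child vertex $v$ we have $m_2\le 2n(v)\le 2n$ and $m_1\le 2\min\{n(v_1),n(v_2)\}$, so processing $v$ costs $O(\min\{n(v_1),n(v_2)\}\log n)$. Summing over the set $V_2$ of vertices with two children and invoking Lemma~\ref{le:sumnodes},
\[
	\sum_{v\in V_2} O\!\left(\min\{n(v_1),n(v_2)\}\log n\right)
	~=~ O(\log n)\cdot O(n\log n) ~=~ O(n\log^2 n),
\]
while the leaves and one-child vertices add only $O(n\log n)$; thus the whole bottom-up computation takes $O(n\log^2 n)$ time.

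It remains to read off the answer and, when it is positive, a solution. Since $J(r)=[k]$ and $T(r)=T$, the discussion in Section~\ref{sec:characterization} shows that $B(r)\ne\emptyset$ if and only if there is a tuple $(s_1,\dots,s_k)\in S_1\times\dots\times S_k$ with $\cell_T(s_i,\{s_1,\dots,s_k\})=U_i$ for all $i\in[k]$, so we merely test whether $\II(B(r))$ is empty. To also output a witness I would make the structure of Theorem~\ref{thm:monotonic} persistent --- this keeps every $\II(A(v))$ and $\II(B(v))$ accessible without changing the asymptotic running time --- and then run one top-down pass: starting from a chosen value of $B(r)$, at each vertex $v$ we decide, from the case analysis of Lemmas~\ref{le:A(v)} and~\ref{le:B(v)} and $O(1)$ queries to the children's stored representations, whether $s_{i(v)}$ lies at $v$, in $T(v_1)$ or in $T(v_2)$, and with which distance, and recurse; this pass costs $O(n\log n)$, and the recovered tuple can be validated by Lemma~\ref{le:check}. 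The real difficulty in all of this is already absorbed by Lemma~\ref{le:degree2}: getting the per-merge cost down to (up to a $\log$ factor) the size of the smaller child's data while guaranteeing that the merged representations stay of linear total size, which is exactly what makes the small-to-large charging through Lemma~\ref{le:sumnodes} go through.
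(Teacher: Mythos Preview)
Your proposal is correct and follows essentially the same approach as the paper: the same bottom-up dynamic program over the rooted tree, invoking Lemma~\ref{le:degree2} at two-child vertices, using induction to bound $m_A(v),m_B(v)\le n(v)$, and then summing via Lemma~\ref{le:sumnodes}; the recovery of an actual solution through persistence is also what the paper does. The only (harmless) difference is that you bound $m_B(v)$ by an inductive count, whereas the paper simply notes that each element of $B(v)$ corresponds to a vertex of $T(v)$.
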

\begin{proof}
	We root $T$ at a leaf so that each node has at most two descendants.
	For each vertex $v$ of $T$, we compute a representation $\II(A(v))$ and $\II(B(v))$
	of the sets $A(v)$ and $B(v)$, respectively.
	The computation is bottom-up: we compute $\II(A(v))$ and $\II(B(v))$ when this has
	been computed for all the children of $v$.
	If $v$ has two children, we use Lemma~\ref{le:degree2}.
	If $v$ has one child, then the computation can be done in $O(\log m_A(v)+\log m_B(v))$ time in 
	a straightforward manner.
	When we arrive to the root $r$, we just have to check whether $B(r)$ is nonempty.
	
	We can see by induction that, for each vertex $v$ of $T$, $m_A(v)\le n(v)$.
	(We already mentioned earlier that $B(v)$ has at most $n(v)$ values, one per vertex of $T(v)$.)
	This is clear for the leaves because $A(\cdot)$ has only one interval.
	For the internal nodes $v$ that have one child $u$ it follows because the 
	representation $\II(A(v))$ of $A(v)$
	is obtained from the representation of $\II(A(u))$ by a shift.
	For the internal nodes $v$ with two children $v_1$ and $v_2$,
	the bound on $m_A(v)$ follows by induction from the bound in Lemma~\ref{le:degree2}.
	In particular, we have $O(\log m_A(v)+\log m_B(v))=O(\log n)$ at each node $v$ of $T$.
	
	For each vertex with one child we spend $O(\log n)$ time.
	For each vertex $v$ with two children $v_1$ and $v_2$
	we spend $O(\min\{n(v_1),n(v_2)\}\log n)$ time.
	Thus, if $V_1$ and $V_2$ denote the vertices with one and two children, respectively,
	we spend
	\begin{align*}
		O(n) \,+\, \sum_{v\in V_1} O(\log n) \,+\,& \sum_{v\in V_2} O(\min\{n(v_1),n(v_2)\}\log n) \\ =~
		&O(n\log n) \,+\, O(\log n) \sum_{v\in V_2} O(\min\{n(v_1),n(v_2)\})
	\end{align*}
	time. Using Lemma~\ref{le:sumnodes}, this time is $O(n\log^2 n)$.
	
	Standard (but non-trivial) adaptations can be used to recover an actual solution.
	One option is to use persistent data structures for the search trees that store 
	families $\II$ of monotonic intervals.
	A persistent data structure allows to make queries to any version of the tree in the past. 
	Thus, it stores implicitly copies of the trees that existed at any time.
	Sarnak and Tarjan~\cite{SarnakT86} explain how to make red-black tree persistent
	(and how the \TREEjoin\ and \TREEsplit\ operations can also be done).
	Since we have access to the past versions of the tree, we can recover how the solution
	was obtained. Each operation in the past takes $O(\log m)$ time, where $m$ is the sum
	of operations that were performed. In our case this is $O(\log (n\log^2 n))=O(\log n)$ time
	per operation/query in the tree, and the running time is not modified.
	Another, conceptually simpler option is to store through the algorithm information on how to undo
	each operation. Then, at the end of the algorithm, we can run the whole algorithm backwards
	and recover the solutions.	
\end{proof}

\begin{theorem}
	The problem \textsc{Generalized Graphic Inverse Voronoi in Trees}
	for instances $I=(T,((U_1,S_1),\dots, \allowbreak (U_k,S_k)))$
	can be solved in time $O(N+ n \log^2 n)$, where $T$ is a tree with $n$ vertices
	and $N=|V(T)|+\sum_i (|U_i|+|S_i|)$.
\end{theorem}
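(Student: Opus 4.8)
The plan is to assemble the theorem from the two main building blocks already established: the linear-time reduction of Theorem~\ref{thm:transformation} and the $O(n\log^2 n)$ algorithm of Lemma~\ref{le:disjoint}, glued together with the candidate-checking routine of Lemma~\ref{le:check}. First I would apply Theorem~\ref{thm:transformation} to the input $I=(T,((U_1,S_1),\dots,(U_k,S_k)))$. In $O(N)$ time this either already certifies that $I$ has no solution (and we are done), or returns a \textsc{Generalized Graphic Inverse Voronoi in Trees} instance $I'$ whose tree is subcubic, whose candidate cells are pairwise disjoint, and whose description size and number of vertices are $O(n)$. The composite edge-lengths $(a,b,c)$ introduced in that construction slow down arithmetic only by a constant factor, so they are irrelevant for the asymptotics. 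Then I would run the algorithm of Lemma~\ref{le:disjoint} on $I'$, spending $O(n\log^2 n)$ time, obtaining either the answer ``no'' or an explicit solution $\Sigma'$ of $I'$.

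The point that needs care is that the transformation is guaranteed to preserve solutions only when $I$ is a ``yes'' instance (Lemmas~\ref{le:transform2}, \ref{le:transform_disjoint}, \ref{le:transform_3}): it can happen that $I'$ is satisfiable while $I$ is not. I would therefore treat the two outcomes of Lemma~\ref{le:disjoint} separately. If $I'$ is a ``no'' instance, then $I$ is ``no'' as well: by Lemma~\ref{le:transform1} a solution of $I$ survives every edge expansion, and by Lemma~\ref{le:transform_3} it survives the degree-reduction step, so any solution of $I$ would yield a solution of $I'$, a contradiction. If instead Lemma~\ref{le:disjoint} returns a solution $\Sigma'$ of $I'$, I would form $\Sigma=\pi(\Sigma')$ using the projection map $\pi$ of Section~\ref{sec:degree3} in $O(n)$ time, and then invoke Lemma~\ref{le:check} to decide in $O(N)$ time whether $\Sigma$ is a solution of the original instance $I$. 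By the last item of Theorem~\ref{thm:transformation}, if $I$ has any solution at all then $\pi$ maps $\Sigma'$ to one; hence if the check succeeds we output $\Sigma$, and if it fails we may safely report that $I$ is unsatisfiable.

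Adding up the costs — $O(N)$ for the reduction, $O(n\log^2 n)$ for the subcubic algorithm on the $O(n)$-vertex instance $I'$, $O(n)$ for the projection, and $O(N)$ for the verification — gives the claimed $O(N+n\log^2 n)$ bound. I do not expect any genuine obstacle beyond what is already handled in the cited statements; the only delicate ingredient is the logic that copes with a possible spurious ``yes'' for $I'$, which is precisely why the final verification through Lemma~\ref{le:check} is indispensable, and once that step is in place correctness is immediate.
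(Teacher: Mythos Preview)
Your proposal is correct and follows essentially the same approach as the paper: reduce via Theorem~\ref{thm:transformation}, solve the resulting subcubic disjoint instance with Lemma~\ref{le:disjoint}, project back with $\pi$, and verify using Lemma~\ref{le:check}. Your write-up is in fact more explicit than the paper's about the one-sidedness of the reduction and why the final verification step is necessary, but the underlying argument is the same.
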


\begin{proof}
	Because of Theorem~\ref{thm:transformation}, we can transform in $O(N)$ time
	the instance $I$ to another instance $I'=(T',((U'_1,S'_1),\dots, (U'_k,S'_k)))$,
	where $T'$ has maximum degree $3$, the sets $U'_1,\dots,U'_k$ are pairwise disjoint,
	and $T'$ has $O(n)$ vertices.
	We can compute a solution to instance $I'$ in $O(n\log^2 n)$ time using Lemma~\ref{le:disjoint}.
	Then, we have to check whether this solution is actually a solution for $I$.
	For this we use Lemma~\ref{le:check}.
\end{proof}

\begin{corollary}
	The problem \textsc{Graphic Inverse Voronoi in Trees}
	for instances $I=(T, (U_1,\dots, U_k))$,
	can be solved in time $O(N+ n\log^2 n)$, where $T$ is a tree with $n$ vertices
	and $N=|V(T)|+\sum_i |U_i|$.
\end{corollary}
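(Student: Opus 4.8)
The plan is to derive this corollary immediately from the preceding theorem on \textsc{Generalized Graphic Inverse Voronoi in Trees}. First I would take an instance $I=(T,(U_1,\dots,U_k))$ of \textsc{Graphic Inverse Voronoi in Trees} and build the instance $I'=(T,((U_1,U_1),\dots,(U_k,U_k)))$ of the generalized problem, i.e., set $S_i:=U_i$ for every $i\in[k]$. This construction takes $O(N)$ time, since we merely copy each $U_i$ a second time. Moreover, the description size of $I'$ is $N(I')=|V(T)|+\sum_i|U_i|+\sum_i|S_i|=|V(T)|+2\sum_i|U_i|=O(N)$, and the underlying tree is unchanged, hence still has $n$ vertices.

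Next I would argue that $I$ and $I'$ have exactly the same solutions, so that the extra membership constraint $s_i\in S_i$ is vacuous here. On one hand, any solution $s_1,\dots,s_k$ of $I'$ satisfies $s_i\in S_i=U_i$ and $U_i=\cell_T(s_i,\{s_1,\dots,s_k\})$ for each $i$, so it is in particular a solution of $I$. On the other hand, if $s_1,\dots,s_k$ solves $I$, then for each $i$ we have $s_i\in\cell_T(s_i,\{s_1,\dots,s_k\})=U_i=S_i$, using that every site lies in its own closed Voronoi cell (this is where positive edge-lengths matter); hence $s_1,\dots,s_k$ also solves $I'$. Thus a certificate for either instance is a certificate for the other.

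Finally I would invoke the preceding theorem: running its algorithm on $I'$ decides, in time $O(N(I')+n\log^2 n)=O(N+n\log^2 n)$, whether $I'$ has a solution and, if so, returns sites $s_1,\dots,s_k$ certifying it. By the equivalence above, this also decides $I$ and yields a valid certificate for $I$. If one prefers, a produced candidate solution can additionally be re-verified against $I$ in $O(N)$ time via Lemma~\ref{le:check}, though the equivalence already guarantees correctness. There is essentially no real obstacle beyond the bookkeeping that the trivial reduction preserves both the number of tree vertices $n$ and the description size $N$ up to constant factors, which is checked above; all the substance lives in the preceding theorem.
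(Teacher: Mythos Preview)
Your proposal is correct and follows exactly the approach the paper intends: reduce \textsc{Graphic Inverse Voronoi in Trees} to \textsc{Generalized Graphic Inverse Voronoi in Trees} by setting $S_i:=U_i$ for each $i$, which the paper already notes in the introduction is a linear-time equivalence, and then invoke the preceding theorem.
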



\section{Lower bound for trees}
\label{sec:lowerbound}

We can show the following lower bound on any algorithm based on
algebraic operations on the lengths of the edges.

\begin{theorem}
\label{thm:lowerbound}
	In the algebraic computation tree model,
	solving \textsc{Graphic Inverse Voronoi in Trees} 
	with $n$ vertices takes $\Omega(n\log n)$ operations,
	even when the lengths are integers.
\end{theorem}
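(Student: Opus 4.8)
The plan is to obtain the bound from Ben-Or's connected-components technique, applied to a family of tree instances whose set of positive parameter values has $n!$ connected components. Recall Ben-Or's theorem: if $W\subseteq\RR^m$ and an algebraic computation tree correctly decides membership in $W$ for every point of $\RR^m$, then its depth is $\Omega(\log(\#W)-m)$, where $\#W$ is the number of connected components of $W$ (and the same holds with $\RR^m\setminus W$). I would use it with
\[
	W_n ~=~ \{x\in\RR^n : x_1,\dots,x_n \text{ are pairwise distinct}\} ~=~ \RR^n\setminus\bigcup_{i<j}\{x : x_i=x_j\},
\]
which has exactly $n!$ connected components, one for each linear order of the coordinates; since $m=n$, this forces depth $\Omega(\log n! - n)=\Omega(n\log n)$ for any algebraic computation tree deciding $W_n$.

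The heart of the argument is a construction: for each $n$, an instance $(T_n,\UU_n)$ of \textsc{Graphic Inverse Voronoi in Trees} on a tree $T_n$ with $O(n)$ vertices, all of whose edge lengths are fixed positive integers except for $n$ distinguished edges whose lengths are free parameters $x_1,\dots,x_n$, such that $(T_n,\UU_n)$ is a positive instance if and only if $x_1,\dots,x_n$ are pairwise distinct. I would take $T_n$ to be caterpillar-like: a spine $c_0c_1\cdots c_n$ with integer edge lengths carrying $n$ pendant edges of lengths $x_1,\dots,x_n$, together with candidate cells $U_1,\dots,U_n$ forming an overlapping ``sliding window'' along the spine, each $U_i$ admitting $\Theta(n)$ candidate positions for its site. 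The design must ensure that in any valid placement the sites are ordered consistently along the spine, and that the only obstruction to completing such an ordered placement is a coincidence $x_i=x_j$; then the union over the $n!$ orderings of the corresponding (convex) regions of valid edge lengths is exactly $W_n$, and two different orderings lie in different connected components because passing from one to the other forces crossing a hyperplane $x_i=x_j$, at which two sites would have to coincide.

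With this in hand I would conclude as follows. Given any algorithm $A$ for \textsc{Graphic Inverse Voronoi in Trees} in the algebraic computation tree model running in $t(N)$ operations on $N$-vertex instances, fix $n$ and specialise $A$ to the fixed combinatorial structure of $(T_n,\UU_n)$: feeding $A$ the $n$ reals $x_1,\dots,x_n$ as the lengths of the distinguished edges costs no arithmetic, so we obtain an algebraic computation tree deciding $W_n$ in $t(O(n))$ operations. Ben-Or's theorem then gives $t(O(n))=\Omega(n\log n)$, hence $t(N)=\Omega(N\log N)$. Since all non-parameter lengths of $T_n$ are integers, and the $n!$ components of $W_n$ are represented by integer points (the permutations of $(1,\dots,n)$), the lower bound persists for instances with integer edge lengths, as claimed. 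This also covers the search version, since deciding existence is no harder than producing the sites.

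I expect the construction to be the main obstacle. Because for any fixed assignment of sites the set of edge lengths realizing the prescribed cells is a \emph{convex} polyhedron (tree distances are linear in the edge lengths, and the compatibility conditions of Section~\ref{sec:basic} are conjunctions of linear strict/non-strict inequalities), the positive-instance set is a union of at most $\prod_i|U_i|$ convex pieces; to split $\RR^n$ into $n!$ components one therefore cannot use $n$ independent constant-size gadgets but genuinely needs $\Theta(n)$ cells each with $\Theta(n)$ admissible site positions, which is why the cells must overlap in a sliding-window fashion on an $O(n)$-vertex tree. Verifying that this overlapping family forbids precisely the coincidences $x_i=x_j$ and nothing more — i.e.\ that every pairwise-distinct assignment is realizable and every collision kills all realizations — is the delicate computation.
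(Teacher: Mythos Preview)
Your proposal has a genuine gap: the construction of $(T_n,\UU_n)$ is not just the main obstacle, it \emph{is} the proof, and you have not given it. A caterpillar with sliding-window cells and the property ``positive iff the $x_i$ are pairwise distinct'' is plausible-sounding, but you yourself flag the verification as ``delicate'' and leave it undone. Without the concrete gadget and its analysis there is nothing to check. There is also a smaller issue: your last paragraph asserts that the lower bound persists for integer lengths because integer points witness each component of $W_n$, but Ben-Or's component count is a statement about real inputs; it does not by itself transfer to integer inputs. For that one needs a separate result (e.g.\ Yao's lower bound for algebraic computation trees with integer inputs), which the paper invokes.

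The paper's proof sidesteps your difficulty entirely by reducing from \textsc{Set Intersection} rather than \textsc{Element Distinctness}, and with only $k=2$ candidate cells. Given positive integer sets $X=\{x_1,\dots,x_n\}$ and $Y=\{y_1,\dots,y_n\}$, build a star $S_X$ with edge lengths $x_1,\dots,x_n,2$, a star $S_Y$ with edge lengths $y_1+1,\dots,y_n+1,1$, and glue the length-$2$ leaf of $S_X$ to the length-$1$ leaf of $S_Y$; take $U_1=V(S_X)$, $U_2=V(S_Y)$. Then the instance is positive iff $X\cap Y\neq\emptyset$, and the $\Omega(n\log n)$ bound for \textsc{Set Intersection} with integer inputs (Yao) transfers directly. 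Note that this refutes your heuristic that one ``genuinely needs $\Theta(n)$ cells each with $\Theta(n)$ admissible site positions'': with two cells the YES set is a union of only $O(n^2)$ convex pieces, but it is the \emph{complement} whose topology carries the lower bound, so your counting argument on $\prod_i|U_i|$ does not constrain the construction in the way you suggest.
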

\begin{proof}
	Consider an instance $X,Y$ for the decision problem \textsc{Set Intersection}, where
	$X=\{x_1,\dots,x_n\}$ and $Y=\{y_1,\dots,y_n\}$ are sets of integers.
	The task is to decide whether $X\cap Y$ is nonempty.
	This problem has a lower bound of $\Omega(n \log n)$ in the algebraic computation
	tree model~\cite{Yao91}. (In particular, this implies the same lower bound for the
	bounded-degree algebraic decision tree model.)
	Adding a common value to all the numbers, we may assume that $X$ and $Y$ contain
	only positive integers.
	
	We construct an instance to the \textsc{Graphic Inverse Voronoi in Trees} problem, as follows. 
	See Figure~\ref{fig:lowerbound}.
	We construct a star $S_X$ with $n+1$
	leaves. The edges of $S_X$ have lengths $x_1,\dots,x_n,2$.
	We construct also a star $S_Y$ with $n+1$ leaves whose
	edges have lengths $y_1+1,\dots,y_n+1,1$. Finally, we identify
	the leaf of $S_X$ incident to the edge of length $2$ and
	the leaf of $S_Y$ incident to the edge of length $1$.
	Let $T$ be the resulting tree. We take the sets $U_1$ and $U_2$ to
	be the vertex sets of $S_X$ and $S_Y$, respectively.
	Note that $T$ has $2n+3$ vertices.
	The reduction makes $O(n)$ operations.
	
	\begin{figure}
	\centering
		\includegraphics[]{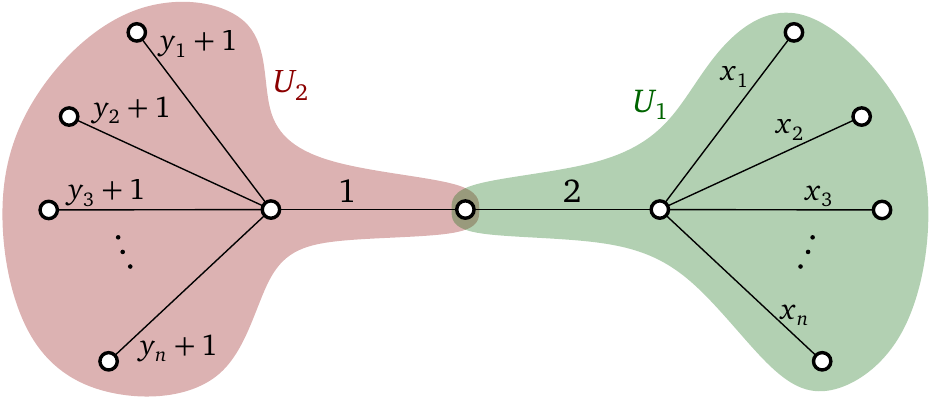}
		\caption{Construction to show the lower bound in Theorem~\ref{thm:lowerbound}.}
		\label{fig:lowerbound}
	\end{figure}	

	Since placing the sites on the centers of the stars does not produce a solution,
	it is straightforward to see that the answers to \textsc{Set Intersection}$(X,Y)$
	and to \textsc{Graphic Inverse Voronoi in Trees}$(T,(U_1,U_2))$ are the same.
	Thus, solving \textsc{Graphic Inverse Voronoi in Trees}$(T,(U_1,U_2))$ in $o(n\log n)$ time
	would provide a solution to \textsc{Set Intersection}$(X,Y)$ in $o(n\log n)$ time,
	and contradict the lower bound.	
\end{proof}

The lower bound also extends to the problem \textsc{Generalized Graphic Inverse Voronoi in Trees}
with disjoint regions because we can apply the transformation to make the cells
disjoint.

\section{Conclusions}
We have introduced the inverse Voronoi problem for graphs and
we have shown several different hardness results, 
also within the framework of parameterized complexity.
We have provided an algorithm for the inverse Voronoi problem in trees
and a lower bound in a standard computation model. 

Here we list some possible directions for further research.
\begin{itemize}
	\item Is there an algorithm to solve the problem in $n^{O(w)}$ time for graphs with
		$n$ vertices and treewidth $w$ when the candidate Voronoi cells intersect?
		Perhaps one can also use some treewidth associated to the candidate Voronoi regions.
		In particular, for planar graphs a running time of $n^{O(\sqrt{k})}$ 
		seems plausible but challenging when the Voronoi cells overlap. 
	\item Considering cells defined by additively weighted sites.
	\item Following the analogy to problems considered in the Euclidean case~\cite{AloupisPPTT13,BanerjeeBDKMR13},
		find the smallest set $\Sigma$ such that each $U_i$ is the
		union of some Voronoi cells in $\VV(\Sigma)$. Taking $\Sigma=V(G)$ gives a feasible solution,
		and our hardness implies that the problem is NP-hard.
		Can one get approximation algorithms?
	\item Since for trees the upper bound of our algorithm and the lower bound differ,
		a main open question is closing this gap.
		Considering trees with unit edge lengths may also be interesting.
		Our lower bound does not apply for such instances.
\end{itemize}

\section*{Acknowledgments}
We are very grateful to anonymous reviewers that pointed out an error in the previous version of Section~\ref{sec:degree3} and several other useful corrections.

Part of this work was done at the 21st Korean Workshop on Computational Geometry, held
in Rogla, Slovenia, in June 2018. We thank all workshop participants for their helpful
comments.

\end{document}